\title{Human-in-the-loop Learning Through Decentralized Communication Mechanisms}
\author{Yiting Hu}
\affiliation{%
  \institution{Singapore University of Technology and Design}
  \country{Singapore}}
\email{yiting_hu@mymail.sutd.edu.sg}
\author{Lingjie Duan}
\affiliation{%
  \institution{SUTD (current); moving to Hong
Kong University of Science and
Technology (Guangzhou) in late 2025}
\country{ }}
\email{lingjieduan@hkust-gz.edu.cn}
\thanks{This work is to appear in Mobihoc 2025. This work is supported in part by SUTD Kickstarter Initiative (SKI) Grant with no. SKI 2021\_04\_07; and in part by the Joint SMU-SUTD Grant with no. 22-LKCSB-SMU-053.}
\renewcommand\footnotetextcopyrightpermission[1]{}
\begin{document}
\begin{abstract}
Information sharing platforms like TripAdvisor and Waze involve human agents as both information producers and consumers. All these platforms operate in a centralized way to collect agents' latest observations of new options (e.g., restaurants, hotels, travel routes) and share such information with all in real time. However, after hearing the central platforms' live updates, many human agents are found selfish and unwilling to further explore unknown options for the benefit of others in the long run. To regulate the human-in-the-loop learning (HILL) game against selfish agents' free-riding, this paper proposes a paradigm shift from centralized to decentralized way of operation that forces agents' local explorations through restricting information sharing. When game theory meets distributed learning, we formulate our decentralized communication mechanism's design as a new multi-agent Markov decision process (MA-MDP), and derive its analytical condition to outperform today's centralized operation. As the optimal decentralized communication mechanism in MA-MDP is NP-hard to solve, we present an asymptotically optimal algorithm with linear complexity to determine the mechanism's timing of intermittent information sharing. Then we turn to non-myopic agents who may revert to even over-explore, and adapt our mechanism design to work. Simulation experiments using real-world dataset demonstrate the effectiveness of our decentralized mechanisms for various scenarios. 
\end{abstract}

\begin{CCSXML}
<ccs2012>
   <concept>
       <concept_id>10003033.10003068.10003078</concept_id>
       <concept_desc>Networks~Network economics</concept_desc>
       <concept_significance>500</concept_significance>
       </concept>
   <concept>
       <concept_id>10010147.10010178.10010219.10010220</concept_id>
       <concept_desc>Computing methodologies~Multi-agent systems</concept_desc>
       <concept_significance>500</concept_significance>
       </concept>
 </ccs2012>
\end{CCSXML}

\ccsdesc[500]{Networks~Network economics}
\ccsdesc[500]{Computing methodologies~Multi-agent systems}

\keywords{Distributed Multi-armed bandits, Incentivizing Exploration, Dynamic Bayesian game, Communication restriction}
\maketitle

\section{Introduction}
Over the past decade, human-in-the-loop learning (HILL), which involves human agents as both information producers and consumers, has been widely implemented in real-world applications (e.g., TripAdvisor and Waze). \cite{li2024human, kremer2014implementing, dubey2020cooperative, immorlica2019bayesian}). It empowers a human agent to crowdsource other agents' local resources and knowledge for learning large-scale systems (e.g., restaurant ratings, route recommendation, sales promotion). In HILL platforms, agents dynamically choose between exploring unknown options and exploiting high-reward ones based on the platforms' shared information over time \cite{avner2016multi}. The uncertain reward from each service option, along with interaction among agents, poses principal challenges in this dynamic learning problem, necessitating strategic option choices and regulation to optimize social rewards over time. 

All these HILL platforms operate in a centralized way to collect agents' latest observations of new options (e.g., restaurants, routes) and share such information with all in real time, aiming to connect all agents for maximum information exchanges. However, after hearing the central platforms' live updates, human agents are found selfish and unwilling to further explore unknown options for the benefit of others in the long run. For example, selfish agents prefer to exploit established, highly-rated options, quickly converging to popular choices and neglecting many potentially better alternatives \cite{park2008effects,kremer2014implementing,burtch2018stimulating}. Facilitated by the central platforms, they appreciate reviewing the feedback shared by their peers but remain hesitant to explore additional options that could potentially benefit the community \cite{li2023congestion}. This phenomenon suggests that centralized information sharing to connect all in the current practice leads to severe under-exploration and poor social reward in the long run.

The multi-armed bandit (MAB) problem is a representative framework for studying the dynamic exploration-exploitation trade-off. Existing works on the extension to multi-agent MAB problems largely assume that agents are cooperative in communicating with a shared goal of maximizing long-term social reward. They focus on algorithms that enable and coordinate collaborative learning under communication limits or in the absence of centralized communication (e.g., \cite{martinez2019decentralized,dubey2020cooperative,bistritz2020cooperative,madhushani2021one, yang2021cooperative, agarwal2022multi, chang2022online, chen2023demand, xu2024decentralized}). For example, Dubey and Abhimanyu \cite{dubey2020cooperative} and Madhushani et al. \cite{madhushani2021one} investigate the challenge of delayed information transmission among agents to converge globally. Agarwal et al. \cite{agarwal2022multi} aim to reduce communication bits while achieving the target regret bound.

All these works assume human agents will adhere to prescribed algorithms without considering possible deviations for selfish interests. As such, they perceive communication opportunities in centralized information sharing as beneficial, aiming to maximally overcome the constraints of information communication for optimal multi-agent learning results. However, if an agent is selfish and focuses on maximizing her own rewards, receiving increased communication or shared information from others' exploration just discourages her from exploring new options herself. If this free-riding behavior becomes widespread among all agents, the system lacks the necessary exploration to recommend the most beneficial options for all agents in the long run. 

To regulate information sharing and incentivize exploration, one may think of other pricing or auction mechanisms to regulate (e.g., \cite{wang2018multi}\cite{braverman2019multi}), yet these methods require monetary exchange or side-payment, which is hard to implement by information sharing platforms such as Tripadvisor and Waze in practice. Alternatively, several recent studies explore the problem under the concept of Bayesian persuasion and Bayesian incentive-compatible mechanisms (e.g., \cite{kremer2014implementing,tavafoghi2017informational,immorlica2018incentivizing, che2018recommender,immorlica2019bayesian, mansour2020bayesian, sellke2021price,mansour2022bayesian, hu2022incentivizing, simchowitz2023exploration}). For example, Kremer et al. \cite{kremer2014implementing} initiate the research, introducing a recommendation mechanism for incentivizing exploration of options with deterministic rewards by a single, non-repeating agent per round. Then Mansour et al. \cite{mansour2020bayesian} develop a method that makes any bandit algorithm Bayesian incentive-compatible, minimizing regret with stochastic rewards. Immorlica et al. \cite{immorlica2018incentivizing} use a selective data disclosure method to incentivize exploration in the bandit problem, and investigate incentivizing exploration among heterogeneous agents in \cite{immorlica2019bayesian}, where the reward function depends on the agent type and system state. Additionally, Tavafoghi and Teneketzis \cite{tavafoghi2017informational} apply informational incentives to transportation networks to mitigate congestion, where the route conditions are unknown to the agents. However, these methods largely assume myopic agents who only participate in a single round without having long-term reward objectives. Non-myopic human agents with enough sophistication, however, can strategically defeat these methods by misreporting information to fool others and deviating from the prescribed algorithms. We will show that the regulatory challenge evolves from under- to over-exploration.


Our key novelty and main contributions are summarized as follows.

\begin{itemize}
\item \emph{Novel HILL game through decentralized communication mechanism}: To our best knowledge, we are the first to propose the paradigm shift from centralized to decentralized information-sharing to enable the distributed exploration of self-interested agents. Previous HILL studies including distributed MAB works all prefer centralized system for cooperative agents' maximum communication (e.g., \cite{martinez2019decentralized}\cite{madhushani2021one}\cite{chang2022online}), while our work pioneers the use of communication restriction in decentralized system to force non-cooperative agents' local explorations, essentially involving the idea of concealing information. When game theory meets distributed learning, we formulate the dynamic HILL problem as a new multi-agent Markov decision process (MA-MDP) in Section \ref{S2}.

\item\emph{Asymptotically optimal decentralized communication mechanism for myopic agents}: The MA-MDP problem of decentralized communication mechanism design for social welfare optimization is NP-hard, by including integer timing decisions to disable communication among agents and a non-convex objective. Despite this, in Section \ref{S3}, we first analytically derive a sufficient and necessary condition in closed form for the platform to deviate from the prevailing practice of centralized communication, which often converges fast to inadequate exploration. Then we prove that it is best for the platform to intermittently open multiple communication windows among agents over time. Finally, to solve the NP-hard problem, we propose an asymptotically optimal algorithm with linear complexity.

 \item \emph{Optimal decentralized communication mechanism to regulate non-myopic agents}: In Section \ref{S4}, we further address the challenge posed by some sophisticated human agents who non-myopically withhold information, surprisingly causing over-exploration for personal information gain over time. We show that they prefer free-riding and are not willing to share their learned information truthfully until the last time stage. In the dynamic Bayesian game, we successfully design the optimal decentralized communication mechanism to enable one-time communication at a strategic interval which mitigates over-exploration, while maintaining polynomial complexity.

\item \emph{Simulation experiments with real dataset for stochastic rewards and heterogeneous agents}: We conduct extensive simulation experiments together with a real-world dataset on hotel service rating to demonstrate the effectiveness of our decentralized communication design in Section~\ref{S5}. We show our method obviously outperforms the existing centralized communications scheme. Furthermore, we apply our method to two more general cases with stochastic reward per option and heterogeneous agent preferences over the same option. Simulation results demonstrate that our method remains effective in these extended scenarios.
\end{itemize}

\section{System Model and Problem Formulation }\label{S2}
In this section, we first present the general system model for selfish agents' HILL game under any centralized/decentralized communication environment. We then delineate the problem formulation for the human-in-the-loop learning game in the multi-agent Markov decision process (MA-MDP) framework, by maximizing long-term social welfare via our decentralized communication mechanism.
\subsection{System Model on HILL Game}
Consider an information sharing platform with $N$ agents in the set $\mathcal{N}:=\{1,\ldots,N\}$ engaged in the HILL game to choose unknown service options in the set $\mathcal{K}$ over a discrete time horizon $[T]:=\{0,1,\ldots,T\}$. We focus on the challenging case of a large-scale unknown system with far more options than agents. The reward from choosing option $k\in\mathcal{K}$ is $r_k$, an unknown and deterministic value independently drawn from a general cumulative distribution function $F(\cdot)$ in a normalized continuous support [0, 1] with mean reward $\mu$.\footnote{We can also extend our analysis to allow that options follow a finite number of \( L \) different distributions \( F_1(\cdot), \ldots, F_L(\cdot) \) with means \( \mu_1 \geq \ldots \geq \mu_L \). There, agents will first explore options from \( F_1(\cdot) \) class until exhausted, then proceed to \( F_2(\cdot) \), and so on. We can similarly derive the exploration thresholds for myopic agents in Lemma \ref{myopic_S} (by replacing $\mu$ by the mean reward of the best remaining option class) and for non-myopic agents in Lemma \ref{NonMyFull}, and our decentralized communication mechanisms still work.} At each time slot $t \in [T]$, Fig. \ref{f1} shows that each agent $n\in\mathcal{N}$ selects an option $a_{n}(t)\in\mathcal{K}$ and observes the reward $r_{a_{n}(t)}$, which may be centrally recorded and shared to other agents for their future consideration by the platform. Thereafter, this reward value $r_{a_n(t)}$ remains constant over the finite time horizon $[T]$. Such deterministic reward models are common in the HILL literature for many applications of static service qualities (e.g., hotels, restaurants, and travel routes) in finite time (e.g., \cite{kremer2014implementing,tavafoghi2017informational, immorlica2019bayesian,david2014infinitely,david2015refined}). For the case of stochastic reward, analyzing the optimal exploration of stochastic rewards within a finite time horizon or for many options is statistically challenging, and there is currently no optimal solution even for centralized systems \cite{lattimore2020bandit}. Thus, we will conduct experiments using a real dataset in Section \ref{S5} to demonstrate that our mechanisms remain effective as long as the reward variation over time is limited.

The information sharing platform, acting as the social planner, regulates whether to allow agents' mutual communications in each time slot. This is equivalent to determining whether to publicize the agents' shared information in real time on the platform. If no communication is allowed, agents need to locally decide the exploration-exploitation tradeoff in a decentralized way. Define any decentralized communication mechanism to include $M\geq1$ no-communication time periods over time as set $ \mathcal{M}=\cup_{m=1}^M\{T_m,\ldots,\allowbreak T_m+\Delta_m-1\}$ where communication is not allowed for $t\in \mathcal{M}$. Each no-communication period \( m \in \{1, \ldots, M\} \) starts at \( T_m \) and lasts \(\Delta_m \) consecutive time slots, ending at \( T_m + \Delta_m - 1 \). Two nearest periods are separated by at least one communication slot, i.e., \( T_{m+1} > T_m + \Delta_m \). $\mathcal{M}=\emptyset$ tells that the platform does not disable communication among agents and makes their shared information open-access all the time. $\mathcal{M}$ is established by the social planner prior to the initiation of the learning process and is announced to all. Fig. \ref{f1} illustrates the agent communication under the mechanism $\mathcal{M}$ in the information sharing platform. With open access to centrally shared information, agents decide to explore or exploit by considering both shared and personal information. When communication is restricted, agents rely on their own information to act.
\begin{figure}
    \centering
    \includegraphics[width=0.45\textwidth]{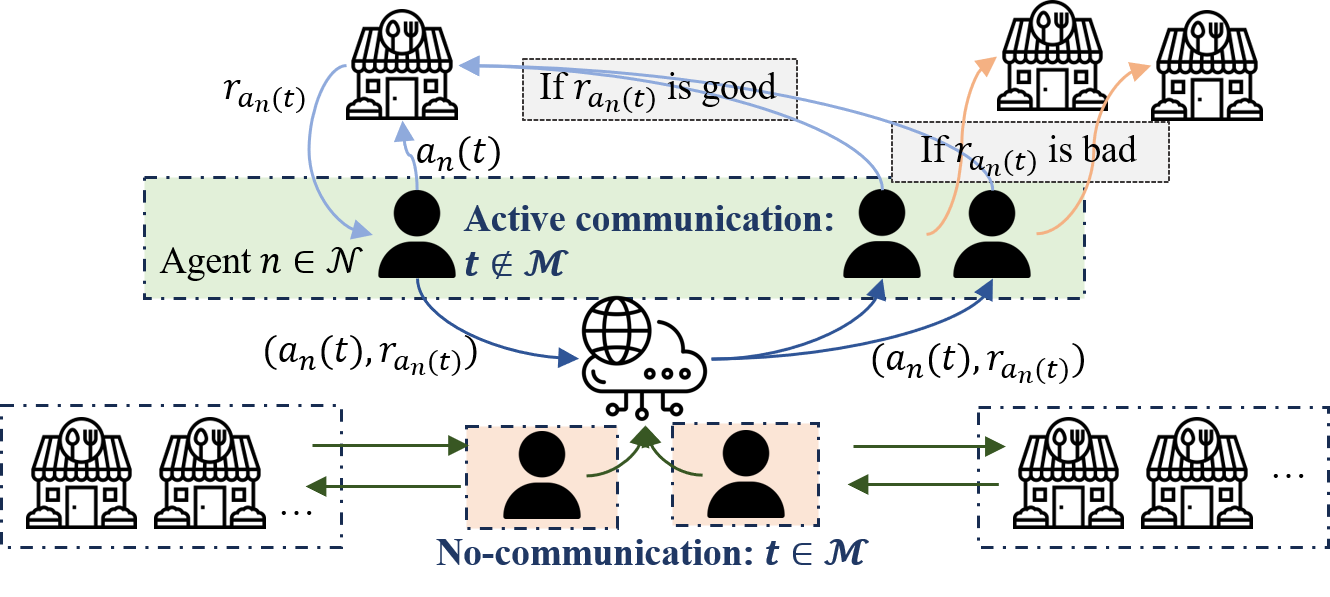}
    \caption{Decentralized communication mechanism $\mathcal{M}$ in the information sharing platform, where each agent $n\in \mathcal{N}$ can exchange his option observation ($a_n(t), r_{a_n(t)}$) with others for time $t\notin \mathcal{M}$, and cannot share for $t\in \mathcal{M}$. Given sharing (if any) at time $t$, other agents at the next time may exploit the same option $a_n(t)$ if its reward $r_{a_n(t)}$ is good or proceed with some other options.}
    \label{f1}
\end{figure}

We next explain the decision-making of each agent at time $t$. Denote the set of options whose reward information is known by agent $n$ by time slot $t$ as $I_n(t)$, and the set of corresponding rewards as $R_n(t) = \{(k, r_{k}) \,|\, k \in I_n(t)\}$. At the beginning of time slot $t$, each agent $n$ independently decides $a_n(t)$. If she chooses to explores a new option $a_n(t)\notin I_n(t)$, she updates her option-reward information by $I_n(t+1) = I_n(t) \cup \{a_n(t)\}$ and $R_n(t+1) = R_n(t) \cup \{(a_n(t), r_{a_n(t)})\}$. If she opts to exploit the option $a_n(t)\in I_n(t)$, her information remains unchanged. If communication is permitted at time slot with $t\in [T]\setminus\mathcal{M}$ in the upper-part of Fig. \ref{f1}, each agent decide whether to share truthful information $I_n(t+1)$ and $R_n(t+1)$ with all other agents, and updates her information again as $I_n(t+1) = \cup_{i\in\mathcal{N}_t\cup\{n\}}I_i(t+1)$ and $R_n(t+1) = \cup_{i\in\mathcal{N}_t\cup\{n\}}R_i(t+1)$, where $\mathcal{N}_t$ is the subset of agents who share information at time $t$.

\subsection{Problem formulation as MA-MDP}
Define the strategy of agent $n\in\mathcal{N}$ at time $t$ as $s_n(t|R_n(t), \mathcal{M})$ to determine the option $a_n(t)$. If myopic, agent $n$'s strategy $s_n$ aims to maximize her immediate reward in each specific time step $t$ (as in \cite{kremer2014implementing}\cite{mansour2022bayesian}), that is,
\begin{align}
    \max_{s_n}{E}[r_{s_n(t)}|R_n(t),\mathcal{M}].\label{Myoobj}
\end{align}
Her exploration results affect the others' through possible communications in $[T]\setminus\mathcal{M}$, and the expected reward should be jointly determined by other agents' exploration results. 

If non-myopic, agent $n$'s strategy $s_n$ at current time $t$ aims to maximize the sum of expected rewards from the current time step to the end, formulated as:
\begin{align}
\max_{s_n}\sum^{T}_{\tau=t}{E}[r_{s_n(\tau)}|R_n(\tau),\mathcal{M}],\label{nonmyoobj}
\end{align}
which implies that non-myopic agents may strategically sacrifice the immediate benefit of the current slot for their own long-term gain. This requires jointly considering her possible exploration outcomes and the future option information shared by others under the communication mechanism. We naturally model this as a dynamic Bayesian game with hidden information evolving over time, where the strategies of agents affect each other and should be interactively determined for the stable outcome of Bayesian equilibrium.

In our HILL problem, the platform or social planner cannot force each agent to adopt a cooperative strategy, but can apply a decentralized communication mechanism $\mathcal{M}$ in the platform. The social planner's goal is to maximize the total time reward among all $N$ agents as the social welfare by properly designing $\mathcal{M}$ and predicting agents' best response strategies to each other under $\mathcal{M}$ over time \footnote{One may further apply a discount factor to future rewards for non-myopic agents in (\ref{nonmyoobj}) or the social welfare in (\ref{socialobj}), and our following analysis still works. Intuitively, a shorter time horizon $T$ of our current model implies a larger time discount.}:
\begin{align}
\max_{\mathcal{M}}&\;{V_{\mathcal{M}}}=\max_{\mathcal{M}}\sum^{T}_{\tau=0}\sum^{N}_{n=1}{E}[r_{s_n(\tau)}|R_n(\tau),\mathcal{M}],\nonumber\\
s.t. &\;s^*_n=\begin{cases}
\arg\max_{s_n}(\ref{Myoobj}), \text{if } n\in\mathcal{N} \text{ myopic},\\
\arg\max_{s_n}(\ref{nonmyoobj}), \text{if } n\in\mathcal{N} \text{ non-myopic}.
\end{cases}\label{socialobj}
\end{align}

We next formulate our HILL problem for non-myopic agents as a MA-MDP to solve the agents' optimal strategy under any mechanism $\mathcal{M}$, where the problem for myopic agents can be viewed as a special case of such formulation. First, we observe that because the reward is deterministic, agent \( n \) will always choose the best-known option when she decides to exploit. Therefore, in her information set \( R_n(t) \), only the maximum reward and the corresponding option, denoted as $(\bar{a}_{n}(t), m_n(t)) = \bigg(\underset{k \in I_n(t)}{\arg\max}\{r_k\}, \max_{k \in I_n(t)}\{r_k\}\bigg)$ will influence her actions. We next introduce our formulation of MA-MDP as follows:
\begin{itemize}
    \item \textbf{States and environment}: Agent $n$'s state is the maximum explored reward \(m_n(t)\) at time $t$, and \(m_n(0) = 0\) initially. The communication state acts as the environment in this MA-MDP, where \(t \notin \mathcal{M}\) indicates active communication and \(t \in \mathcal{M}\) indicates blocked communication.
    \item \textbf{Action and reward}: Regarding option choice strategy, given $(\bar{a}_n(t), m_n(t))$ at time $t\in [T]$, agent $n$ will either exploit the currently best option $\bar{a}_n(t)$ to receive reward $m_n(t)$ or randomly explore a new option outside $I_n(t)$ to receive expected reward $\mu$. Additionally, if \( t \notin \mathcal{M} \), agent \( n \) needs to decide an information sharing policy to share what option-reward information with all other agents. 
    \item \textbf{State transition probability}:  For agent \( n \) with state \( m_n(t) \), the space of the next state $m_n(t+1)$ is any continuous value in $[m_n(t),1]$. If she exploits $\bar{a}_n(t)$,  $m_n(t+1)$ remains unchanged. Conversely, if she explores a new option, her state of maximum reward transitions to \(m_n(t+1) =\max(r, m_n(t)) \), where \( r \sim F(\cdot) \) represents the random variable of the option reward. If \( t \notin\mathcal{M} \), her state transition further becomes \( \max_{j \in \mathcal{N}_t\cup\{n\}}(m_j(t)) \) under exploitation, and $\max(r,x)$ whe-re $x=\max_{j \in \mathcal{N}_t\cup\{n\}} {(m_j(t))}$ under exploration. Based on this process, we summarize the state transition probability as follows. When $a_n(t)\in I_n(t)$ and $t\in \mathcal{M}$, $m_n(t+1)$ equals $m_n(t)$ with probability 1. Denote $G_t(\cdot)$ as the CDF of any other agent $j$'s shared state $m_j(t)$ to agent $n$ at the end of $t$ if $t\notin\mathcal{M}$. The PDF of the rest state transitions is:
\end{itemize}
    \begin{align}
        &p(m_n(t+1)=m|a_n(t), t, m_n(t))=\nonumber\\
        &\begin{cases}
            &\!\!\!\!\frac{dF(r)G_t(r)^{|\mathcal{N}_t\cup\{n\}|}}{dr}|_{r=m},\; \text{if }a_n(t)\notin I_n(t), t\notin \mathcal{M},\\
            &\!\!\!\!\frac{dF(r)}{dr}|_{r=m}, \qquad\;\;\;\;\;\;\;\;\;\ \text{ if }a_n(t)\notin I_n(t), t\in \mathcal{M},\\
            &\!\!\!\!\frac{dG_t(r)^{|\mathcal{N}_t\cup\{n\}|}}{dr}|_{r=m}, \ \;\;\;\;\text{ if }a_n(t)\in I_n(t), t\notin \mathcal{M}.
        \end{cases}\label{transition}
    \end{align}
For the MA-MDP described above, the Bellman equation for agent $n$'s optimal long-term reward \( U_n(t \mid m_n(t), \mathcal{M}) \) at time \( t \), under the state \( m_n(t) \) and the communication environment \( \mathcal{M} \), is expressed as:
\begin{align}
    &U_n(t|m_n(t), \mathcal{M}) = \nonumber\\
    & \!\!\!\! \begin{cases}
        &\!\!\!\!\max(\mu+E[U_n(t+1|\max(r,\max_{j\in\mathcal{N}_t\cup\{n\}}m_j(t)),\mathcal{M})],\\
    & \ m_n(t)+E[U_n(t+1|\max(\max_{j\in\mathcal{N}_t\cup\{n\}}m_j(t)),\mathcal{M})]),\\
    &\qquad\qquad\qquad\qquad\qquad\qquad\quad\;\;\;\;\;\;\;\;\text{ if } t\notin \mathcal{M},\\
    &\!\!\!\!\max(\mu+E[U_n(t+1|\max(r, m_n(t)),\mathcal{M})],\\
    & \ m_n(t)+E[U_n(t+1|m_n(t),\mathcal{M})]),\ \text{ if } t\in \mathcal{M},\label{bell}
    \end{cases}
\end{align}
which also depends on other agents' states and strategies, influenced by her information-sharing strategy.



By backward induction in this multi-agent MDP, we can identify that the optimal option choice strategy $s_n(t|R_n(t),\mathcal{M})$ must be a threshold-based strategy to decide to explore or exploit, as demonstrated in the following lemma: 
\begin{lemma}\label{thresholdstrategy}
Under any communication mechanism $\mathcal{M}$, each agent $n$'s optimal option choice strategy $s_n(t| m_n(t),\mathcal{M})$ at time $t\!\in \![T]$ follows an exploration threshold sequence $\{u_t\}_{t=1}^{T}$ over time. That is, she will explore a new option if $m_n(t)<u_t$ and otherwise stick with the currently best option $\bar{a}_n(t)$.
\end{lemma}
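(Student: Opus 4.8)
The plan is to establish the threshold structure by backward induction on $t$, proving two claims simultaneously at each step: (i) the value function $U_n(t|m,\mathcal{M})$ in (\ref{bell}) is non-decreasing in the scalar state $m$, and (ii) the ``exploit-minus-explore'' gap, as a function of $m$, is strictly increasing and therefore crosses zero exactly once, which defines $u_t$. Throughout I adopt a best-response viewpoint: I fix the option-choice and information-sharing strategies of every agent $j\neq n$, so that the CDF $G_t$ and hence the law of the aggregate $\tilde{x}:=\max_{j\in\mathcal{N}_t}m_j(t)$ of the other communicating agents' shared states enter agent $n$'s Bellman equation only as exogenous inputs. This is exactly the object the lemma needs, since it asserts that each agent's \emph{optimal} option-choice rule (its best response) is threshold-based.

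For the base case $t=T$ there is no continuation term, so $U_n(T|m,\mathcal{M})=\max(\mu,m)$ is non-decreasing. For the inductive step I assume $U_n(t+1|\cdot,\mathcal{M})$ is non-decreasing; then $U_n(t|\cdot,\mathcal{M})$ is a pointwise maximum of the exploit and explore branches, each of which is a composition of the non-decreasing continuation states $m$, $\max(r,m)$, $\max(m,\tilde{x})$, $\max(r,\max(m,\tilde{x}))$ with the non-decreasing $U_n(t+1|\cdot,\mathcal{M})$, hence non-decreasing, closing claim (i). For claim (ii) I define the gap $\Delta_t(m)$ as exploit value minus explore value. Setting $x(m)=m$ when $t\in\mathcal{M}$ and $x(m)=\max(m,\tilde{x})$ when $t\notin\mathcal{M}$, the continuation terms collapse to $\Delta_t(m)=(m-\mu)+E\big[g_{t+1}(x(m))\big]$, where $g_{t+1}(y):=\int_y^1\big[U_n(t+1|y,\mathcal{M})-U_n(t+1|r,\mathcal{M})\big]\,dF(r)\le0$ is the (non-positive) change in continuation value induced by one extra exploration from state $y$, and the outer expectation is over $\tilde{x}\sim G_t$. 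The crux is that $g_{t+1}$ is non-decreasing: for $y_1<y_2$, a direct two-point comparison gives $g_{t+1}(y_2)-g_{t+1}(y_1)=\big[U_n(t+1|y_2,\mathcal{M})-U_n(t+1|y_1,\mathcal{M})\big](1-F(y_2))+\int_{y_1}^{y_2}\big[U_n(t+1|r,\mathcal{M})-U_n(t+1|y_1,\mathcal{M})\big]\,dF(r)\ge0$, both terms being non-negative by claim (i). Since $x(\cdot)$ is non-decreasing and expectation preserves monotonicity, $E[g_{t+1}(x(m))]$ is non-decreasing in $m$, and adding the strictly increasing $(m-\mu)$ makes $\Delta_t$ strictly increasing. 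As $\Delta_t(0)\le-\mu<0$ and $\Delta_t(1)=1-\mu>0$, there is a unique $u_t\in(0,1)$ with $\Delta_t(u_t)=0$; agent $n$ explores precisely when $\Delta_t(m_n(t))<0$, i.e. when $m_n(t)<u_t$, which is the claimed rule.

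The main obstacle I anticipate is the coupling through communication slots: when $t\notin\mathcal{M}$ the continuation state mixes in $\max_{j}m_j(t)$, so $U_n$ formally depends on the equilibrium behavior of the other agents. I would handle this by freezing the opponents' strategy profile and absorbing it into the exogenous law of $\tilde{x}$, after which the monotonicity argument is identical to the isolated case; the resulting threshold $u_t$ will in general depend on both $\mathcal{M}$ and $G_t$, but its existence and single-crossing form do not. A secondary, purely technical point is that $U_n(t+1|\cdot,\mathcal{M})$ is only piecewise smooth (being a maximum of two functions), so I deliberately avoid differentiating $g_{t+1}$ and instead rely on the two-point inequality above, which requires only monotonicity of $U_n(t+1|\cdot,\mathcal{M})$ rather than differentiability.
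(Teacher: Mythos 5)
Your proof is correct. Note that the paper itself supplies no written proof of this lemma---it is asserted with the one-line remark that it follows ``by backward induction in this multi-agent MDP''---so your argument is essentially a rigorous filling-in of that remark rather than an alternative to an existing proof. The two ingredients you isolate are exactly the right ones: monotonicity of $U_n(t\mid\cdot,\mathcal{M})$ in the state, propagated backward from $U_n(T\mid m,\mathcal{M})=\max(\mu,m)$, and the single-crossing property of the exploit-minus-explore gap $\Delta_t(m)=(m-\mu)+E[g_{t+1}(x(m))]$, where your two-point identity
\[
g_{t+1}(y_2)-g_{t+1}(y_1)=\bigl[U_n(t{+}1\mid y_2)-U_n(t{+}1\mid y_1)\bigr]\bigl(1-F(y_2)\bigr)+\int_{y_1}^{y_2}\bigl[U_n(t{+}1\mid r)-U_n(t{+}1\mid y_1)\bigr]\,dF(r)\ge 0
\]
cleanly sidesteps the non-differentiability of the value function. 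Your best-response framing---freezing the opponents' profile so that $G_t$ and the law of $\tilde{x}$ enter as exogenous data---is also the right reading of the lemma, which claims a structural property of each agent's optimal response rather than a property of the equilibrium itself; the resulting $u_t$ then depends on $G_t$ and $\mathcal{M}$, consistent with how the paper later computes different threshold sequences in Lemmas \ref{myopic_S} and \ref{NonMyFull} and Proposition \ref{L4}. The only place the paper argues anything structurally similar is the swap argument in Appendix \ref{appendix4}, but that establishes monotonicity of the threshold sequence $\{u_t\}$ in $t$ under centralized communication, which is a different (and logically downstream) claim; your single-crossing argument in $m$ for fixed $t$ is the missing upstream step and is sound.
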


In this MA-MDP, each agent has two actions: exploration/exploi-tation and information sharing to decide. Solving \(\{u_t\}_{t=1}^{T}\) for non-myopic agents is challenging because their exploration thresholds are influenced by the potential information exchanged from others in any future $t\notin\mathcal{M}$. Specifically, they tend to reduce exploration efforts when anticipating useful reward messages from others in the near future. This requires consideration of these two-dimensional actions to jointly determine all agents' exploration thresholds \(\{u_t\}_{t=1}^{T}\) and information sharing strategy in the equilibrium of the dynamic Bayesian game under the communication mechanism $\mathcal{M}$.

Compared to non-myopic agents, myopic agents have a straightforward strategy, with a constant threshold \( u_t = \mu \), as they only care about the current time slot's reward. However, the dynamic exploration process of myopic agents under various possible communication mechanisms remains complicated to analyze. We will show later that designing the optimal decentralized communication mechanism for myopic agents is already an NP-hard problem to solve.

\section{Decentralized Communication Mechanism for Myopic Agents}\label{S3}
In this section, based on the MA-MDP framework, we will address the questions of whether and when decentralized communication can be more advantageous than the current practice of centralized communications for myopic agents in the platform. Then we will design a decentralized communication mechanism to maximize social welfare (\ref{socialobj}) by solving an NP-hard problem. 
\subsection{Analysis of Existing Centralized Communication Policy}
We will first examine the centralized communications policy to give all agents open access to their shared information, where $\mathcal{M}=\emptyset$. We begin by presenting the optimal learning strategy of each myopic agent $n$ for maximizing her immediate reward in (\ref{Myoobj}) in the dynamic Bayesian game. We solve the exploration threshold $u_t$ in Lemma \ref{thresholdstrategy}, the weakly dominant information sharing strategy, and social welfare below.

\begin{lemma}\label{myopic_S}
    At each time $t\in[T]$, each myopic agent $n$'s option exploration threshold sequence is $u_1=\ldots=u_{T}=\mu$. Once permitted to communicate with $t\notin\mathcal{M}$, myopic agents will truthfully share their latest information. Then the long-term social welfare in (\ref{socialobj}) is: \begin{align}
        V_{\emptyset}=&N(T+1)\int^1_{\mu}rd\frac{F(r)^N-F(\mu)^{N}}{1-F(\mu)^{N}}\nonumber\\
        &-N\frac{1-F(\mu)^{N(T+1)}}{1-F(\mu)^N}\int^1_{\mu}\frac{1-F(r)^N}{1-F(\mu)^{N}}dr,
        \label{fullM_socialwelfare}
    \end{align}
    where agents quickly stop exploring any new option with expected exploration time:  
    \begin{align}
        (1-F(\mu)^{N(T+1)})/(1-F(\mu)^{N})\label{M_exNum}.
    \end{align}
\end{lemma}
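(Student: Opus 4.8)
The plan is to establish the three assertions—the constant exploration threshold, the truthful-sharing behavior, and the closed-form welfare with its exploration count—in sequence, since each relies on the previous. \emph{Threshold and sharing.} First I would pin down $u_t=\mu$ directly from the myopic objective (\ref{Myoobj}): given state $m_n(t)$, exploiting the best known option returns the deterministic value $m_n(t)$, whereas exploring a fresh option returns an independent draw with expected value $\mu$; comparing the two, a myopic agent explores exactly when $m_n(t)<\mu$. Since Lemma \ref{thresholdstrategy} already guarantees a threshold rule, the threshold must be the constant $u_t=\mu$ for every $t$. For the sharing claim I would observe that a myopic agent's current-round payoff depends only on her own state and her own exploit/explore choice, and is entirely unaffected by what she reports to others. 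Hence every information-sharing policy yields her the same immediate reward, so truthful sharing is weakly dominant, and I adopt it as the resulting behavior.

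\emph{Dynamics under $\mathcal{M}=\emptyset$.} With communication always open and truthful sharing, all agents hold the identical pooled state $M_t$ equal to the global best reward discovered so far, so by the threshold rule they explore in lockstep while $M_t<\mu$ and all switch to exploiting the frozen global best once $M_t\ge\mu$. Because options vastly outnumber agents, I treat each exploration round as producing $N$ fresh i.i.d.\ draws from $F$, so after $t$ consecutive exploration rounds $M_t$ is the maximum of $Nt$ i.i.d.\ draws. Exploration therefore occurs at round $t$ precisely when all $Nt$ earlier draws fall below $\mu$, an event of probability $F(\mu)^{Nt}$; summing the geometric series $\sum_{t=0}^{T}F(\mu)^{Nt}$ yields the expected number of exploration rounds in (\ref{M_exNum}).

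\emph{Welfare decomposition.} I would split $V_{\emptyset}$ into exploration and exploitation contributions. In any exploration round the current draw is independent of the decision to explore, so each of the $N$ agents earns expected reward $\mu$; multiplying by the expected exploration count gives the exploration contribution $N\mu\,(1-F(\mu)^{N(T+1)})/(1-F(\mu)^N)$. For exploitation I would condition on the round $s$ at which a good option (one with reward $\ge\mu$) first appears, which has probability $F(\mu)^{Ns}(1-F(\mu)^N)$ and leaves $T-s$ exploitation rounds. The key simplification is that, conditioned on stopping at $s$, the frozen good value is distributed as the maximum of $N$ draws given that this maximum exceeds $\mu$—i.e.\ with CDF $(F(r)^N-F(\mu)^N)/(1-F(\mu)^N)$ on $[\mu,1]$—whose mean does not depend on $s$. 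This invariance lets the exploitation reward factor as (mean good value)$\times$(expected number of exploitation rounds), where the latter equals $(T+1)$ minus the exploration count. Finally I would convert the mean-minus-$\mu$ term via the tail identity $\int_\mu^1\frac{1-F(r)^N}{1-F(\mu)^N}\,dr$ and assemble both contributions into the stated expression (\ref{fullM_socialwelfare}).

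The main obstacle is the exploitation term: the value of the discovered good option and the number of rounds spent exploiting it are both driven by the same random draws and are a priori correlated, so the product expectation $E[\,\text{good value}\times\#\text{exploit rounds}\,]$ does not obviously split. The crux is the observation that, conditional on the stopping round $s$, the law of the good value—hence its mean—is invariant in $s$, which decouples that product into a clean product of means and produces the factorized closed form.
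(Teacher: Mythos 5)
Your proposal is correct and follows essentially the same route as the paper's proof: both hinge on decomposing by the first round at which a reward exceeding $\mu$ appears, observing that the conditional law of the exploited value, $\bigl(F(r)^N-F(\mu)^N\bigr)/\bigl(1-F(\mu)^N\bigr)$, is invariant in that stopping round, and using the tail-sum geometric series for the exploration count. The only difference is organizational—you sum exploration and exploitation contributions over the stopping time, while the paper computes the per-slot expected reward $v_{\emptyset}(t)$ and sums over $t$—and the two agree via the same integration-by-parts identity relating the conditional mean to $\int_{\mu}^{1}\frac{1-F(r)^N}{1-F(\mu)^N}\,dr$.
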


The proof of Lemma \ref{myopic_S} is given in Appendix \ref{appendix1}. Myopic agents' centralized communications enable the community to converge quickly once some agents find an option with reward larger than threshold $\mu$. This fast convergence in (\ref{M_exNum}) implies a lack of essential exploration with poor performance in (\ref{fullM_socialwelfare}), particularly for a large number of options available. Additionally, given that agents' communication takes place after their options are chosen, agent $n$'s misreporting of her latest information $R_n(t)$ does not alter her immediate reward in (\ref{Myoobj}). Thus, she is willing to truthfully share the information. Later in Section \ref{S4}, we will show that this is not the case for non-myopic users who may withhold information to gain in the long run.


To identify the sufficient and necessary conditions under which centralized communications should be replaced by decentralized communication, we first meticulously choose a special decentralized communication pattern with a single no-communication time window $\bar{\mathcal{M}}=\{T_1,\ldots,T_1+\Delta_1-1\}$ to compare with centralized communications. We derive the long-term social welfare objective under $\bar{\mathcal{M}}$ as:
\begin{align}
    V_{\bar{\mathcal{M}}}=NF(\mu)^{NT_1}\bigg((T-T_1-\Delta_1)y_{\Delta_1+1}-\sum_{i=1}^{\Delta_1}x_i\bigg)+V_{\emptyset},\label{specialM}
\end{align}
where 
\begin{align}
x_i=&\int^1_{\mu}\bigg(\frac{F(r)-F(\mu)}{1-F(\mu)}+F(\mu)^i\frac{1-F(r)}{1-F(\mu)}\bigg)\nonumber\\
        &-\bigg(\frac{F(r)^N-F(\mu)^N}{1-F(\mu)^N}+F(\mu)^{iN}\frac{1-F(r)^N}{1-F(\mu)^N}\bigg)dr,\label{xi}\\
y_{i}=&\int^1_{\mu}\bigg(\frac{F(r)^N-F(\mu)^N}{1-F(\mu)^N}+F(\mu)^{iN}\frac{1-F(r)^N}{1-F(\mu)^N}\bigg)\nonumber\\
        &-\bigg(\frac{F(r)-F(\mu)}{1-F(\mu)}+F(\mu)^{i}\frac{1-F(r)}{1-F(\mu)}\bigg)^Ndr,\label{yi}
\end{align}
for $i\in\{1,\ldots,\Delta_1+1\}$, and $V_{\emptyset}$ is given in (\ref{fullM_socialwelfare}). Here $x_i$ can be interpreted as the reward loss due to no-communication among agents to share and exploit the currently best options, while $y_i$ tells the benefit of boosting self-exploration using no-communication for the future $T-T_1-\Delta_1$ time slots. 

Equation (\ref{specialM}) suggests that if its bracket $(T-T_1-\Delta_1)y_{\Delta_1+1}-\sum_{i=1}^{\Delta_1}x_i>0$, decentralized communication $\bar{\mathcal{M}}$ outperforms centralized communications, which provide us a sufficient condition for decentralized communication mechanism $\bar{\mathcal{M}}$ to outperform. Intriguingly, we find that it is also necessary for the existence of decentralized communication mechanisms to outperform centralized communication. This can be done by showing that our later derivation of social welfare in (\ref{VM}) under general decentralized communication pattern $\mathcal{M}=\cup_{m=1}^M\{T_m,\ldots,T_m+\Delta_m-1\}$ degenerates to the particular $\bar{\mathcal{M}}$ in the comparison with the centralized communications' $V_{\emptyset}$ in (\ref{fullM_socialwelfare}). 

\begin{proposition}\label{p1}
    Both the sufficient and necessary condition for our decentralized communication mechanism to outperform the centralized communications policy is given below in closed-form:
    \begin{align}
        T>\min_{\Delta\in\{1,\ldots, T-1\}}\left\{\frac{\sum^{\Delta}_{i=1}x_i}{y_{\Delta+1}}+\Delta\right\} .\label{eq_lemmq3.2}
    \end{align}
\end{proposition}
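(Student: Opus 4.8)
The plan is to read the result directly off the closed-form gap already isolated in (\ref{specialM}), namely $V_{\bar{\mathcal{M}}}-V_{\emptyset}=NF(\mu)^{NT_1}\big((T-T_1-\Delta_1)y_{\Delta_1+1}-\sum_{i=1}^{\Delta_1}x_i\big)$. Since $F(\mu)\in(0,1)$ makes the prefactor $NF(\mu)^{NT_1}$ strictly positive for every finite $T_1$, whether a single no-communication window improves on centralization is governed solely by the sign of the bracketed term. The first thing I would pin down is that $x_i>0$ and $y_i>0$: $x_i$ in (\ref{xi}) is a genuine per-slot reward loss from blocking exploitation of the shared best option, while $y_i$ in (\ref{yi}) is a genuine self-exploration gain, positive because blocking communication makes each agent's own best reward stochastically dominate the quickly-converging pooled best. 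Having fixed these signs, I can divide through by $y_{\Delta_1+1}$ without reversing inequalities.

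For sufficiency I would place the window at the origin, $T_1=0$, which is moreover the best placement since both $F(\mu)^{NT_1}$ and the horizon factor $(T-T_1-\Delta_1)$ decrease in $T_1$. With $T_1=0$, the gap collapses to $V_{\bar{\mathcal{M}}}-V_{\emptyset}=N\big((T-\Delta_1)y_{\Delta_1+1}-\sum_{i=1}^{\Delta_1}x_i\big)$, which is strictly positive exactly when $T>\sum_{i=1}^{\Delta_1}x_i/y_{\Delta_1+1}+\Delta_1$. Hence if (\ref{eq_lemmq3.2}) holds, I choose $\Delta_1$ to be the minimizer of its right-hand side over $\{1,\ldots,T-1\}$, producing a decentralized mechanism that strictly beats $V_{\emptyset}$ and settling the sufficient direction.

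For necessity I would argue the contrapositive: if (\ref{eq_lemmq3.2}) fails then $(T-\Delta)y_{\Delta+1}\le\sum_{i=1}^{\Delta}x_i$ for every $\Delta$, and monotonicity in $T_1$ immediately forces the single-window bracket to be non-positive for all $T_1\ge0$. Extending this to a general pattern $\mathcal{M}=\cup_{m=1}^{M}\{T_m,\ldots,T_m+\Delta_m-1\}$ is the main obstacle, and is exactly the ``degeneration to $\bar{\mathcal{M}}$'' claimed in the text: using the general welfare expression (\ref{VM}), I would show that $V_{\mathcal{M}}-V_{\emptyset}$ splits window-by-window into terms of the single-window form, each carrying its own positive survival-probability prefactor $F(\mu)^{N\tilde T_m}$ (with effective start time $\tilde T_m\ge0$) and its own remaining horizon. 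The delicate point is to verify that interleaving communication slots between no-communication windows neither creates a beneficial cross-term nor lets several windows reinforce one another beyond what one optimally-placed window achieves; I expect each later window to face a strictly smaller survival probability and a shorter remaining horizon, so that adding windows only scales down copies of the single-window bracket. Once that bracket is non-positive under the failed condition, the whole gap $V_{\mathcal{M}}-V_{\emptyset}$ is non-positive, which closes necessity and establishes the equivalence.
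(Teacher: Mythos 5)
Your proposal is correct and follows essentially the same route as the paper: sufficiency by placing a single no-communication window at $T_1=0$ in the gap formula (\ref{specialM}), and necessity by observing that the general welfare (\ref{VM}) decomposes window-by-window into single-window brackets with positive prefactors $NF(\mu)^{NT_m}$, so a positive gap forces some bracket $(T-T_m-\Delta_m)y_{\Delta_m+1}-\sum_{i=1}^{\Delta_m}x_i>0$ and hence $(T-\Delta_m)y_{\Delta_m+1}>\sum_{i=1}^{\Delta_m}x_i$. You are in fact slightly more careful than the paper in flagging that $x_i>0$ and $y_i>0$ must be pinned down before dividing by $y_{\Delta+1}$ (the paper leaves this implicit), and in recognizing that the absence of cross-terms is exactly the content of the decomposition (\ref{VM}) established in Lemma \ref{L333}.
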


The proof of Lemma \ref{p1} is given in Appendix \ref{Appendix11}. (\ref{eq_lemmq3.2}) indicates that as long as time horizon $T$ is large or the reward distribution $F(\cdot)$ in (\ref{xi}) and (\ref{yi}) becomes positively skewed with smaller mean $\mu$, it is better for the platform to replace the prevailing centralized communications policy with decentralized communication to encourage each agent to self-explore and find an option with reward larger than $\mu$. Once communication is allowed later at $T_1+\Delta_1$, the community picks up the best option of all agents with a reward much better than $\mu$. 

\subsection{Decentralized Communication Mechanism}

Given (\ref{eq_lemmq3.2}), we investigate how to design the best decentralized communication mechanism $\mathcal{M}^*_1=\cup_{m=1}^M\{T_m^*,\ldots,T_m^*+\Delta_m^*-1\}$ for myopic agents to maximize the social welfare objective in (\ref{socialobj}). Yet the further derivation for equation (\ref{socialobj}) under any \(\mathcal{M}\) is complicated because agents, without frequent communication, do not converge synchronously and exhibit various case combinations by observing different option rewards in each no-communication time window. Through derivation using $x_i$ in (\ref{xi}) and $y_i$ in (\ref{yi}), we have the following.

\begin{lemma}\label{L333}
Given any decentralized communication mechanism $\mathcal{M}=\cup_{m=1}^M\{T_m,\ldots,T_m+\Delta_m-1\}$, the expected social welfare to consider all time slots is given by:
\begin{align}
    V_\mathcal{M}=&N\sum^M_{m=1}F(\mu)^{NT_m}\bigg((T-T_m-\Delta_m)y_{\Delta_m+1}-\sum_{i=1}^{\Delta_m}x_i\bigg)+V_{\emptyset},\label{VM}
\end{align}
where $V_{\emptyset}$ is given in (\ref{fullM_socialwelfare}). The expected number of exploration is increased from (\ref{M_exNum}) under centralized communications to
\begin{align}
\sum^{T_1}_{t=0}F(\mu)^{Nt}+\sum^M_{m=1}\bigg(\sum^{T_{m+1}}_{t=T_m+\Delta_m+1}\!\!\!\!\!\!F(\mu)^{Nt}+\sum^{\Delta_m}_{t=1}F(\mu)^{NT_m+t}\bigg).\label{M_exNum_R}
\end{align}
\end{lemma}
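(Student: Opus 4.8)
The plan is to prove Lemma \ref{L333} by generalizing the single-window computation that already yielded \eqref{specialM}, exploiting the fact that for myopic agents the constant exploration threshold $u_t=\mu$ (Lemma \ref{myopic_S}) makes the no-communication windows behave almost independently. The key structural observation is this: with a myopic threshold of $\mu$, an agent stops exploring permanently as soon as any shared state exceeds $\mu$, and during centralized operation the community state exceeds $\mu$ with probability $1-F(\mu)^{N}$ per slot. Hence at the start of window $m$, i.e.\ at time $T_m$, the probability that \emph{no} agent has yet found an option above $\mu$ (so that the window has nontrivial effect) is exactly $F(\mu)^{NT_m}$, since $NT_m$ independent draws below $\mu$ are required across the $N$ agents over $T_m$ communicating slots. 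This is precisely the prefactor appearing in \eqref{VM}, and it is what decouples the windows.

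First I would set up the conditional decomposition: condition on the event $E_m$ that the whole community is still at state $\le\mu$ entering window $m$. On the complement, the system has already converged and behaves identically to the centralized baseline, contributing the common term $V_{\emptyset}$ from \eqref{fullM_socialwelfare}. On $E_m$, which occurs with probability $F(\mu)^{NT_m}$, the window $m$ reproduces exactly the single-window situation analyzed for $\bar{\mathcal{M}}$: the reward loss from withholding the best options over the $\Delta_m$ blocked slots is $\sum_{i=1}^{\Delta_m}x_i$ with $x_i$ as in \eqref{xi}, while the gain from the boosted self-exploration, felt over the remaining $T-T_m-\Delta_m$ slots, is $(T-T_m-\Delta_m)\,y_{\Delta_m+1}$ with $y_i$ as in \eqref{yi}. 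Multiplying by $N$ agents and by the probability $F(\mu)^{NT_m}$, then summing over $m=1,\dots,M$ and adding $V_{\emptyset}$, yields \eqref{VM}. The separation constraint $T_{m+1}>T_m+\Delta_m$ guarantees at least one intervening communication slot, which is what lets me treat each window's incremental contribution additively rather than worrying about cross-window correlations in the ``still unconverged'' event.

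The exploration-count formula \eqref{M_exNum_R} I would derive directly by summing, over every time slot $t$, the probability that at least one agent explores at $t$, which for myopic agents equals the probability that the community is still below $\mu$ at the beginning of that slot. During communicating slots this probability is $F(\mu)^{Nt}$ (the centralized decay), giving the runs $\sum_{t=0}^{T_1}F(\mu)^{Nt}$ before the first window and $\sum_{t=T_m+\Delta_m+1}^{T_{m+1}}F(\mu)^{Nt}$ between consecutive windows. Inside window $m$, communication is blocked, so an individual agent who entered the window unconverged (probability $F(\mu)^{NT_m}$) continues exploring so long as her \emph{own} draws stay below $\mu$; accumulating the per-agent probability over the $\Delta_m$ blocked slots contributes $\sum_{t=1}^{\Delta_m}F(\mu)^{NT_m+t}$. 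Summing these three types of contributions gives \eqref{M_exNum_R}.

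The main obstacle I anticipate is rigorously justifying the decoupling claim that the per-window contributions simply add, each weighted by the marginal prefactor $F(\mu)^{NT_m}$, rather than requiring joint conditioning across windows. The subtlety is that the event ``still unconverged at $T_m$'' is nested across $m$ and correlated with the outcomes inside earlier windows, so I must argue carefully that the incremental social-welfare effect of window $m$, \emph{relative to the centralized baseline $V_{\emptyset}$}, depends only on reaching $T_m$ unconverged and is statistically independent of what the earlier windows did beyond that. The clean way to handle this is to write $V_{\mathcal{M}}-V_{\emptyset}$ as a telescoping sum of single-window perturbations and invoke the memorylessness of the i.i.d.\ draws conditioned on non-convergence: once we condition on the community being below $\mu$ at $T_m$, the future draws are a fresh i.i.d.\ sample, so each window's marginal effect reuses the $x_i$, $y_i$ quantities unchanged. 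Verifying that the intervening communication slot fully resynchronizes the agents (so that the ``boost'' $y_{\Delta_m+1}$ is correctly shared community-wide starting at $T_m+\Delta_m$) is the step I would write out most carefully.
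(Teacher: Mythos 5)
Your proposal is correct and follows essentially the same route as the paper's Appendix~\ref{Appendix16}: decompose by the first time an above-$\mu$ option is found, use $F(\mu)^{NT_m}$ as the probability of reaching window $m$ unconverged (under which all $NT_m$ prior draws were below $\mu$), identify $\sum_i x_i$ as the within-window exploitation loss and $(T-T_m-\Delta_m)y_{\Delta_m+1}$ as the persistent post-window gain, and obtain \eqref{M_exNum_R} by summing survival probabilities $\mathbb{P}(\tau\ge t)$ slot by slot. The only difference is organizational --- you telescope perturbations off $V_\emptyset$ and lean on the already-stated single-window formula \eqref{specialM}, whereas the paper proves \eqref{specialM} inside this same proof via an explicit per-slot event partition and two auxiliary lemmas computing the conditional expected maxima --- but the probabilistic content, including your resolution of the cross-window correlation via conditioning on non-convergence at $T_m$, matches the paper's.
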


The proof of Lemma \ref{L333} is given in Appendix \ref{Appendix16}. In (\ref{VM}), when we extend the no-communication duration from \(\Delta_m\) to \(\Delta_m+1\) in mechanism $\mathcal{M}$, we incur immediate exploitation loss of \(x_{\Delta_m+1}\), yet the future exploration benefit increases from \(y_{\Delta_m}\) to \(y_{\Delta_m+1}\). We need to balance them well to decide the optimal timing of no-communication.

Given (\ref{VM}), we can formulate the optimization problem to determine the optimal decentralized communication mechanism \(\mathcal{M}^*_1\) with \(T_m\) and \(\Delta_m\). This problem is NP-hard due to the inclusion of integer timing decisions for disabling communication among agents and a non-convex objective. However, equation (\ref{VM}) decreases with \(T_m\) and this feature helps simplify the optimization problem, as detailed in the subsequent lemma:
\begin{lemma}\label{34}
The optimal decentralized communication mechanism \(\mathcal{M}^*_1\)=\(\cup_{m=1}^M\{T_m^*,\ldots,\) \(T_m^*+\Delta_m^*-1\}\) must satisfy $T^*_1=0,T^*_{m+1}=T^*_m+\Delta^*_m+1$, for all $m\in\{1,\ldots,M-1\}$.
\end{lemma}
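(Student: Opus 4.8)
The plan is to exploit the additively separable structure of $V_\mathcal{M}$ in (\ref{VM}). Writing $g_m(T_m,\Delta_m) := N F(\mu)^{NT_m}\big((T-T_m-\Delta_m)y_{\Delta_m+1}-\sum_{i=1}^{\Delta_m}x_i\big)$, we have $V_\mathcal{M} = V_\emptyset + \sum_{m=1}^M g_m(T_m,\Delta_m)$, where each summand $g_m$ depends only on its own pair $(T_m,\Delta_m)$, and the windows interact only through the feasibility constraints $T_1\ge 0$ and $T_{m+1}\ge T_m+\Delta_m+1$. The whole argument is then a local-shift (exchange) argument: holding every $\Delta_m$ fixed, I will show that any optimal mechanism must push each $T_m$ down to its smallest feasible value, which is exactly the asserted $T_1^*=0$ and $T_{m+1}^*=T_m^*+\Delta_m^*+1$.

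First I would establish a monotonicity lemma: for fixed $\Delta_m$, the map $T_m\mapsto g_m(T_m,\Delta_m)$ is strictly decreasing whenever the bracketed term $B_m := (T-T_m-\Delta_m)y_{\Delta_m+1}-\sum_{i=1}^{\Delta_m}x_i$ is positive. Indeed, lowering $T_m$ by one integer slot simultaneously multiplies the prefactor $F(\mu)^{NT_m}$ by $1/F(\mu)^N>1$ (as $0<F(\mu)<1$) and raises $B_m$ by $y_{\Delta_m+1}>0$ (positivity of $y_{\Delta_m+1}$ being verifiable from its defining integral (\ref{yi}) and consistent with its interpretation as an exploration benefit); when $B_m>0$ both factors strictly increase, so $g_m$ strictly increases. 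I would also record that at an optimum we may assume $B_m>0$ for every window: if $B_m\le 0$ then $g_m\le 0$, so deleting window $m$ gives a feasible mechanism of welfare $V_\mathcal{M}-g_m\ge V_\mathcal{M}$, contradicting optimality when $B_m<0$ and reducing to a mechanism with strictly fewer windows when $B_m=0$; hence an optimal mechanism with no redundant windows has all $B_m>0$.

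With positivity in hand, both claims follow by contradiction. For $T_1^*=0$: if $T_1^*>0$, replace $T_1^*$ by $T_1^*-1$ while keeping $\Delta_1^*$ and all other windows fixed; this remains feasible (it only enlarges the communication gap before window $2$) and, by the monotonicity lemma with $B_1>0$, strictly increases $g_1$ and hence $V_\mathcal{M}$, contradicting optimality. For the spacing, if $T_{m+1}^*>T_m^*+\Delta_m^*+1$ for some $m$, then $T_{m+1}^*-1\ge T_m^*+\Delta_m^*+1$, so lowering $T_{m+1}^*$ by one slot preserves the separation from window $m$ and only increases the slack before window $m+2$; monotonicity with $B_{m+1}>0$ again yields a strict increase in $g_{m+1}$ and thus in $V_\mathcal{M}$, a contradiction. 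Because each $g_m$ depends only on $(T_m,\Delta_m)$, none of these unit shifts perturbs the other summands, keeping the argument clean.

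The main obstacle I anticipate is not the shift itself but justifying the sign conditions that make each shift improving: one must confirm $y_{\Delta_m+1}>0$ from (\ref{yi}) and handle the boundary cases where a window's contribution is non-positive, for which the ``delete the redundant window'' reduction is needed so that the monotonicity lemma applies to every remaining window. A secondary point is feasibility bookkeeping---checking that after each unit shift the window still lies within the horizon and the separation constraints $T_{m+1}\ge T_m+\Delta_m+1$ stay satisfied---but since every shift moves a window strictly earlier, feasibility is preserved throughout.
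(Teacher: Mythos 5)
Your proof is correct and is essentially the rigorous version of the only justification the paper gives: the paper supplies no appendix proof for this lemma, just the one-line remark that (\ref{VM}) ``decreases with $T_m$,'' and your per-window exchange argument (shift each $T_m$ left by one slot, observe the other summands and all feasibility constraints are unaffected) is exactly that claim made precise. You also correctly flag the caveat the paper glosses over --- the term $F(\mu)^{NT_m}B_m$ is decreasing in $T_m$ only when the bracket $B_m$ is positive, so the deletion step for windows with $B_m\le 0$ and the positivity of $y_{\Delta_m+1}$ are genuinely needed; both hold (the latter because the pooled $N$-agent stopping process yields a stochastically smaller maximum than $N$ independent single-agent processes, making the integrand in (\ref{yi}) nonnegative), and your handling of them is sound.
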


In the optimal mechanism, we need to disable communication initially with $T_1^*=0$ to encourage agents' own exploration. Additionally, we only include a single communication round between two neighboring no-communication windows, as it suffices for agents to share their option information. Then the problem of jointly deciding $T_m^*$ and $\Delta_m^*$ is simplified to deciding $\Delta_m^*$ only, and the optimization problem can be simplified as:
\begin{align}
&\max_{M,\{\Delta_m\}_{m=1}^M}N\bigg((T-\Delta_1)y_{\Delta_1+1}-\sum_{i=1}^{\Delta_1}x_i\bigg)\nonumber\\
&+\sum^M_{m=2}NF(\mu)^{N\sum_{i=1}^{m-1}(\Delta_i+1)}\bigg((T-\sum_{i=1}^m\Delta_i-m+1)y_{\Delta_m+1}-\sum_{i=1}^{\Delta_m}x_i\bigg)\nonumber\\
    &\quad\quad\quad\text{s.t. }\;\Delta_i\in {Z}^+\;,\forall i\in\{1,\ldots,M\},\;\sum_{i=1}^M\Delta_i\leq T-M,\nonumber\\
    &\;\;\quad\quad\quad\quad\; M\in\{1,\ldots,\lfloor T/2 \rfloor\}.
    \label{OPT}
\end{align}

However, this problem is still a difficult non-convex problem to solve with combinatorial decisions in $\{\Delta_1,\ldots,\Delta_M\}$ and $M$. While a recursive algorithm can be developed for it, allowing for some pruning of the search space based on condition (\ref{eq_lemmq3.2}), this algorithm still runs in exponential time and it is hard to optimally solve for non-small horizon $T$. 

To tell what the optimal mechanism looks like, we use an illustrative example with $T=30$ for affordable computation. We opt for the Beta distribution as the prior $F(\cdot)$ and run two experiments with mean option rewards of \(0.024\) and \(0.33\). Their corresponding optimal mechanisms are depicted in Fig. \ref{fCRM}. The no-communication window duration decreases over time for \(\mu=0.024\), as less time remains, reducing the need for long communication windows to encourage exploration. Additionally, as \(\mu\) increases from 0.024 to 0.33, both the number \(M\) of no-communication windows and their durations \(\Delta_m\) decrease. As the mean reward \(\mu\) increases, myopic agents have a higher exploration threshold, and the system with centralized communications incurs less under-exploration. Thus, the optimal mechanism requires fewer no-communication windows.

\begin{figure}
    \centering
    \includegraphics[width=0.5\textwidth]{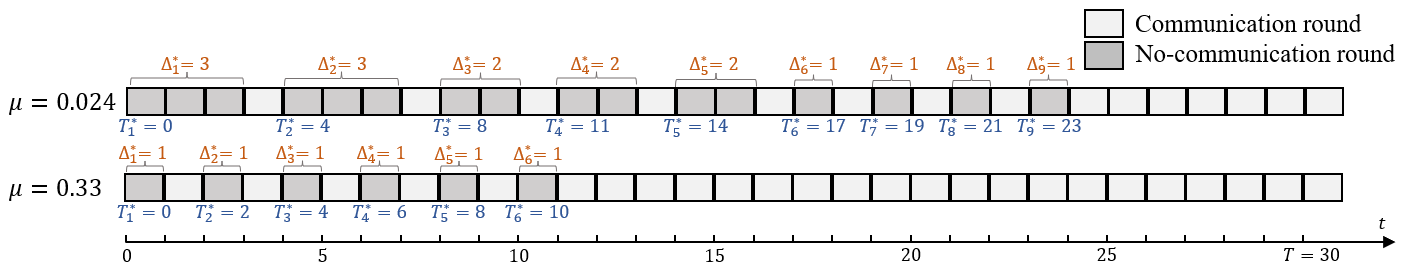}
    \caption{Optimal decentralized communication mechanism for 2 different Beta distributions with means 0.024 and 0.33, respectively.}
    \label{fCRM}
\end{figure}

For the tractability of solving problem (\ref{OPT}), we reduce it to an approximate version below that concentrates on a single no-communication window $\tilde{\mathcal{M}}_1=\{0,\ldots,\Delta_1-1\}$, inspired by observing (\ref{OPT}) that the benefit from the $m$-th no-communication window is discounted by $F(\mu)^{N\sum^{m-1}_{i=1}(\Delta_i+1)}$, exponentially diminishing over $m$. By setting $T_1^*=0$ for (\ref{specialM}), the approximate version of problem (\ref{OPT}) is:
\begin{align}
    \max_{\Delta_1\in\{1,\ldots,T-1\}}N\bigg((T-\Delta_1)y_{\Delta_1+1}-\sum_{i=1}^{\Delta_1}x_i\bigg).\label{pro2}
\end{align}
Consequently, we propose Algorithm \ref{A1} to solve this approximate problem efficiently. At the beginning of the search, the algorithm first checks the assumption (\ref{eq_lemmq3.2}) on the total time horizon \(T\) to disable communication. If (\ref{eq_lemmq3.2}) is satisfied, it iterates over all \(\Delta_1 \in \{1,\ldots,T-1\}\) and returns the one $\Delta_1^*$ for which mechanism \(\tilde{\mathcal{M}}_1^* = \{0, \ldots, \Delta_1^* - 1\}\) yields the maximum social welfare \(V_{\tilde{\mathcal{M}}_1^*}\).

\begin{algorithm}
\caption{Asymptotically Optimal Decentralized Communication Mechanism $\tilde{\mathcal{M}}_1^*$}\label{A1}
\begin{algorithmic}[1]
\REQUIRE $T$, $N$, $F(\cdot)$
\IF{$T \leq \min_{\Delta_1 \in \{1, \ldots, T-1\}} \left\{\frac{\sum^{\Delta_1}_{i=1} x_i}{y_{\Delta_1+1}} + \Delta_1\right\}$ in (\ref{eq_lemmq3.2})}
    \RETURN $\Delta_1^*=0$ for centralized communications
\ENDIF
\FOR{$\Delta_1 = 1$ to $T-1$}
    \IF{$V_{\tilde{\mathcal{M}}_1^*}$ in (\ref{pro2}) is greater than $\bar{V}$}
        \STATE $\Delta_1^* = \Delta_1$, $\tilde{\mathcal{M}}_1^* = \{0, \ldots, \Delta_1^*-1\}$, $\bar{V} = V_{\tilde{\mathcal{M}}_1^*}$
    \ENDIF
\ENDFOR
\RETURN $\tilde{\mathcal{M}}^*_1 = \{0, \ldots, \Delta_1^*-1\}$
\end{algorithmic}
\end{algorithm}
\begin{theorem}\label{T3.6}
       Algorithm \ref{A1} operates with a linear complexity of $O(T)$ and guarantees that the social welfare, $V_{\tilde{\mathcal{M}}_1}$, achieved by the approximate mechanism, is a $(1-\frac{F(\mu)^{2N}-F(\mu)^{TN}}{1-F(\mu)^{TN}})$-approximation of the social optimum $V_{\mathcal{M}^*_1}$ for problem (\ref{OPT}). Algorithm \ref{A1} is asymptotically optimal as $N\rightarrow\infty$, and leads to a linear-order social welfare gain from (\ref{fullM_socialwelfare}) in centralized communications over time.
\end{theorem}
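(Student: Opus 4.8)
The plan is to establish the three claims in turn: the $O(T)$ running time, the approximation guarantee, and the asymptotic behaviour. For the complexity, I would observe that the integrals defining $x_i$ and $y_i$ in (\ref{xi}) and (\ref{yi}) depend only on $F(\cdot)$, $\mu$, $N$ and the index $i$, so each evaluates in $O(1)$ time to fixed precision. Maintaining the prefix sum $\sum_{i=1}^{\Delta_1}x_i$ incrementally as $\Delta_1$ increases and computing $y_{\Delta_1+1}$ once per iteration makes each pass of the \textbf{for} loop $O(1)$; since both the preliminary check of (\ref{eq_lemmq3.2}) and the loop range over $\{1,\ldots,T-1\}$, the total cost is $O(T)$.

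For the approximation ratio, I would work with the welfare gain over centralized communications, $V_{\mathcal{M}}-V_{\emptyset}$, which by Lemma \ref{L333} and Lemma \ref{34} can be written for the optimum $\mathcal{M}^*_1$ as $N\sum_{m=1}^{M^*}F(\mu)^{NT^*_m}h(T^*_m,\Delta^*_m)$, where $h(T_m,\Delta_m)=(T-T_m-\Delta_m)y_{\Delta_m+1}-\sum_{i=1}^{\Delta_m}x_i$ and, by Lemma \ref{34}, $T^*_1=0$ and $T^*_{m+1}=T^*_m+\Delta^*_m+1$. The approximate mechanism returns exactly $V_{\tilde{\mathcal{M}}_1}-V_{\emptyset}=N\max_{\Delta}g(\Delta)$ with $g(\Delta):=h(0,\Delta)=(T-\Delta)y_{\Delta+1}-\sum_{i=1}^{\Delta}x_i$, since the loop in Algorithm \ref{A1} maximizes (\ref{pro2}). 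The key per-window bound is $h(T^*_m,\Delta^*_m)\le g(\Delta^*_m)\le\max_{\Delta}g(\Delta)$, whose first inequality uses $T^*_m\ge 0$ together with $y_{\Delta+1}\ge 0$ (nonnegativity of the self-exploration benefit, which I would verify from (\ref{yi}) by a power-mean comparison of its two bracketed terms). Factoring the common maximum out then yields $V_{\mathcal{M}^*_1}-V_{\emptyset}\le\big(V_{\tilde{\mathcal{M}}_1}-V_{\emptyset}\big)\sum_{m=1}^{M^*}F(\mu)^{NT^*_m}$.

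It remains to bound the discount sum. Because $\Delta^*_i\ge 1$, the spacing $T^*_{m+1}-T^*_m=\Delta^*_m+1\ge 2$ gives $T^*_m\ge 2(m-1)$, and since each window plus its separating slot occupies at least two slots we have $M^*\le\lfloor T/2\rfloor$. Hence $\sum_{m=1}^{M^*}F(\mu)^{NT^*_m}\le\sum_{k=0}^{M^*-1}F(\mu)^{2Nk}=\frac{1-F(\mu)^{2NM^*}}{1-F(\mu)^{2N}}\le\frac{1-F(\mu)^{TN}}{1-F(\mu)^{2N}}$, where the last step uses $2M^*\le T$. Inverting this relation gives $V_{\tilde{\mathcal{M}}_1}-V_{\emptyset}\ge\frac{1-F(\mu)^{2N}}{1-F(\mu)^{TN}}\big(V_{\mathcal{M}^*_1}-V_{\emptyset}\big)$, and since $V_{\emptyset}\ge 0$ and the factor is at most one, this upgrades to $V_{\tilde{\mathcal{M}}_1}\ge\frac{1-F(\mu)^{2N}}{1-F(\mu)^{TN}}V_{\mathcal{M}^*_1}$, which equals the claimed $1-\frac{F(\mu)^{2N}-F(\mu)^{TN}}{1-F(\mu)^{TN}}$ ratio after simplification. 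Finally, as $N\to\infty$ we have $F(\mu)^{2N},F(\mu)^{TN}\to 0$ for $F(\mu)<1$, so the ratio tends to $1$ and the mechanism is asymptotically optimal; and since for any fixed $\Delta$ the gain $g(\Delta)$ is affine in $T$ with nonnegative slope $y_{\Delta+1}$, the maximized gain $V_{\tilde{\mathcal{M}}_1}-V_{\emptyset}$ grows linearly in $T$ whenever (\ref{eq_lemmq3.2}) holds.

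The main obstacle I anticipate is the per-window decomposition together with the uniform bound $h(T^*_m,\Delta^*_m)\le\max_\Delta g(\Delta)$: it hinges both on the nonnegativity $y_i\ge 0$, which must be extracted from the structure of (\ref{yi}), and on the fact that no optimal window can beat, position by position, the single best window placed at $t=0$. Obtaining the geometric bound tightly enough to yield exactly the stated ratio, rather than a looser constant, requires carefully exploiting the minimal spacing $T^*_{m+1}-T^*_m\ge 2$ from Lemma \ref{34} and the cap $M^*\le\lfloor T/2\rfloor$ instead of crude estimates.
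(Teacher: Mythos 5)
Your proposal is correct and follows essentially the same route as the paper's proof: bound each window's contribution $(T-T_m^*-\Delta_m^*)y_{\Delta_m^*+1}-\sum_i x_i$ by the best single window at $T_1=0$, sum the discount factors $F(\mu)^{NT_m^*}$ as a geometric series using the minimum spacing $T_{m+1}^*-T_m^*\ge 2$ and $M^*\le\lfloor T/2\rfloor$, and invert to get the ratio $\frac{1-F(\mu)^{2N}}{1-F(\mu)^{TN}}$. You are in fact slightly more careful than the paper on two points it leaves implicit — the nonnegativity of $y_i$ needed for the per-window bound, and the passage from the ratio of welfare \emph{gains} to the ratio of total welfares via $V_\emptyset\ge 0$ — and you explicitly handle the $O(T)$ and linear-in-$T$ claims that the paper's appendix omits.
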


The proof of Theorem \ref{T3.6} is given in Appendix \ref{Appendix15}. Note that even as \(N \rightarrow \infty\), if \(T\) is sufficiently large, the centralized communications policy remains non-optimal by (\ref{eq_lemmq3.2}). 

We now run simulations to examine Algorithm \ref{A1} besides the theoretical results in Theorem \ref{T3.6}. We run Algorithm \ref{A1} across various Beta distributions for option distribution with different parameter settings of mean reward value $\mu$ of $0.3$, $0.5$ and $0.6$. 

We examine the performance gain $V_{\tilde{\mathcal{M}}_1^*}-V_\emptyset$ of Algorithm \ref{A1} from centralized communications versus $T$ in Fig. \ref{f2.31}, and it also shows the upper bound of optimal performance gain $V_{{\mathcal{M}}_1^*}-V_\emptyset$ since the optimum is difficult to solve for large $T$ in hundreds. Fig. \ref{f2.31} shows that our performance gain $V_{\tilde{\mathcal{M}}_1^*}-V_\emptyset$ scales up linearly with \(T\), in accordance with Theorem \ref{T3.6}. Additionally, even for finite $N=100$ as used by Fig. \ref{f2.31}, the social welfare achieved by our asymptotically optimal mechanism is almost the same as the upper bound of the optimum. Notably, our mechanism's performance gain increases as \(\mu\) decreases, since the centralized communications benchmark encourages agents to converge more quickly to lower exploration threshold $\mu$.
\begin{figure}
    \centering
\includegraphics[width=0.3\textwidth]{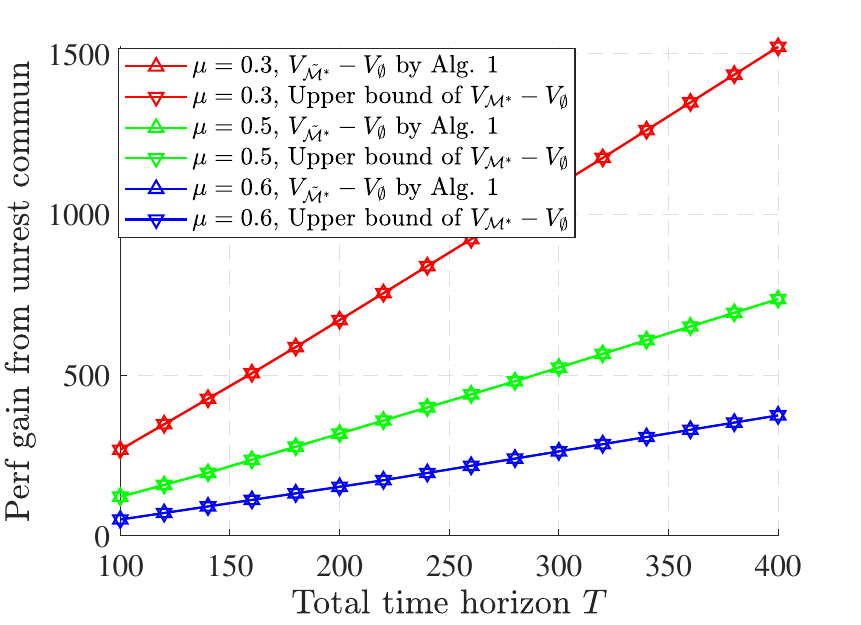}
    \caption{Performance gains of Algorithm \ref{A1} and upper bound of optimum from centralized communications, versus $T$ and $\mu$ of 0.3, 0.5 and 0.6.}
        \label{f2.31}
\end{figure}

\section{Dcentralized Communication Mechanism for Non-myopic Agents}\label{S4}
In this section, we extend to consider non-myopic human agents who may strategically sacrifice the immediate benefit of the current time slot for her own long-term reward in (\ref{nonmyoobj}). After analyzing non-myopic agents' strategy under MA-MDP game, we develop a new decentralized communication mechanism strategically tailored for non-myopic agents.

\subsection{Analysis of Existing Centralized Communications Policy}
We first examine non-myopic agents' information communication strategy under centralized communications of the dynamic Bayesian game, which is different from that of myopic agents in Lemma \ref{myopic_S}. Non-myopic agents, to maximize long-term rewards in (\ref{nonmyoobj}), consider all potential outcomes in their strategic choices in the MA-MDP game of Fig. \ref{f1} and may not truthfully share information with others. The imperfect and incomplete information in this game arises from agents not knowing how many will share at each time $t$ and being unaware of other agents' $m_n(t)$. However, based on the following weakly dominant strategy for information sharing, the Bayesian perfect equilibrium in the dynamic Bayesian game can be identified.

\begin{lemma}\label{sharing}
    In each time slot $t\in[T]$ with centralized communications, it is a perfect Bayesian equilibrium for each agent $n\in\mathcal{N}$ to withhold any useful information until time slot \( T-1 \), and to share truthful information in \( T-1 \) with any probability.
\end{lemma}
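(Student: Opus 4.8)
The plan is to verify the candidate profile by backward induction through the \emph{one-shot deviation principle}, which is valid here since the horizon $[T]$ is finite. The candidate equilibrium fixes every agent's sharing action to ``disclose nothing useful'' for $t\in\{0,\ldots,T-2\}$ and ``share truthfully (with arbitrary probability)'' at $t=T-1$, while her option choice obeys the threshold rule of Lemma \ref{thresholdstrategy}. Because a sharing action is publicly observed, the belief system is pinned down by Bayes' rule applied to the public history: an off-path early disclosure by agent $n$ simply turns her current best value $m_n(t)$ into common knowledge. Establishing a perfect Bayesian equilibrium therefore reduces to checking sequential rationality of the sharing decision slot by slot, holding the opponents' strategies fixed.

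First I would dispatch the terminal slot as the base case. Anything disclosed at $t=T-1$ reaches the other agents only in the last slot $T$, where each agent merely collects the single remaining term of (\ref{nonmyoobj}) by exploiting her best known option. Agent $n$'s own reward at $T$ is a function solely of her information set $R_n(T)$ — which already incorporates whatever the others elected to send her — and is completely insensitive to whether \emph{she} discloses. Moreover, since the game ends at $T$, curtailing the opponents' slot-$T$ behavior feeds nothing back to her. Hence her disclosure at $T-1$ is payoff-irrelevant, truthful sharing is weakly dominant, and any mixing probability is a best response; this is precisely the ``any probability'' clause.

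Next I would run the inductive step for $t\le T-2$, assuming the opponents withhold through $T-2$ and pool everything at $T-1$. I would examine agent $n$'s single deviation of revealing a useful option (one exceeding some opponent's current exploration threshold) at time $t$ and then reverting. The crux, exploiting the deterministic rewards, is that agent $n$ may always exploit her own best-known option no matter what she announces, so an early disclosure grants her \emph{no} direct gain. Its only effect is to raise $m_j(t)$ for every opponent $j$ whose current best lies below the revealed value, flipping such opponents from ``explore'' to ``exploit'' by Lemma \ref{thresholdstrategy} and thereby shrinking their exploration over slots $t+1,\ldots,T-1$. Since these opponents route all they have learned back to agent $n$ at $T-1$, and since only exploration (not exploiting agent $n$'s already-known option) manufactures fresh good options for her, the deviation weakly degrades the information agent $n$ receives at $T-1$ and hence her continuation reward. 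Thus withholding is a best response, which closes the induction and delivers the claimed equilibrium.

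The main obstacle is making the feedback channel of the inductive step rigorous: I must rule out any compensating benefit of early disclosure for agent $n$. The delicate case is whether steering opponents away from their own exploration could redirect them toward an even better option later returned to her. I would close this using the structure of the state law (\ref{transition}): with far more options than agents, each exploration slot draws an independent fresh reward from $F(\cdot)$, so an opponent's expected per-slot discovery is invariant to coordination, and removing one of her exploration slots strictly deletes an independent draw from the pool feeding agent $n$'s $T-1$ update with no offsetting gain. Establishing this monotonicity of agent $n$'s $T-1$ information quality against (\ref{transition}) and the threshold characterization is the technical heart; the terminal-slot and belief-consistency arguments are comparatively routine.
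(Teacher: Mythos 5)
Your proposal is correct and follows essentially the same reasoning as the paper's (brief, in-text) justification: at $t=T-1$ disclosure is payoff-irrelevant because the game ends at $T$, so any sharing probability is a best response, while for $t\le T-2$ an early disclosure can only raise opponents' states $m_j(t)$, suppress their exploration under the threshold rule, and thereby weakly degrade the information routed back to the discloser at $T-1$. Your one-shot-deviation scaffolding and the explicit i.i.d.-draw monotonicity argument merely formalize the step the paper asserts informally, so this is the same proof at a finer level of detail.
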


If agent \( i \) believes that another agent \( n \) is willing to share their maximum reward information at time \( t \) with a positive probability \( q \), the best response for agent \( i \) is to withhold sharing before $t$ with agent \( n \) to maximize the expected reward information received from \( n \), as determined by problem (\ref{bell}). If this belief is 0 for all time slots, withholding information remains a weakly dominant strategy for agent \( i \), as sharing would not impact the expected information from \( n \), given \( n \) would not share in return. Consequently, withholding information until the final communication round forms an equilibrium in the information-sharing strategy, as agents no longer have the opportunity to benefit from others' exploration.  

At time \( T-1 \), since sharing probabilities in \([0,1]\) for all agents are weakly dominant strategies, infinitely many perfect Bayesian equilibria exist, corresponding to various combinations of sharing probability for each agent. However, the equilibrium where every agent shares with a probability of 1 at \( T-1 \) is Pareto optimal. Thus, we focus on this specific perfect Bayesian equilibrium for further analysis, where $\mathcal{N}_{T-1}=\mathcal{N}$ and $\mathcal{N}_t=\emptyset$ for $t<T-1$ in (\ref{transition}) and (\ref{bell}).

At the Pareto optimum of the equilibrium, agents will expect to use other's information at time $T-1$ to improve their final reward at the last time slot $T$ based on the reward function (\ref{bell}). Before time slot $T-1$, agent 
$n$ forms a belief $G_{T-1}(\cdot)$ regarding the maximum reward $m_i(T-1)$ shared by any other agent $i \neq n$ at $T-1$, which should match with her own distribution of $m_n(T-1)$ in the Bayesian equilibrium. Besides the information communication strategy, they will further decide their option choice strategy based on this belief before time $T-1$ and (\ref{bell}), following the exploration threshold sequence $\{u_t\}_{t=1}^T$ in Lemma \ref{thresholdstrategy}.

\begin{lemma}\label{NonMyFull}
Under centralized communications, non-myopic agents' exploration threshold sequence $\{u_t\}_{t=1}^T$ is strictly decreasing with $u_1>\ldots>u_T=\mu$, which are unique solutions to:
\begin{align}
&u_t-\mu= \begin{cases}
    &\! \! \! \! \! (T-t-1)(\int^1_{u_t} 1-F(r)dr)+\\
   &\! \! \! \! \! \int^1_{u_t}G_{T-1}(r)^{N-1}(1-F(r))dr,\;\text{if }t=1,\ldots, T-1,\\
    &\! \! \! \! \! 0,\quad\quad\quad\quad\quad\quad\quad\quad\quad\quad\;\;\;\;\;\;\text{if }t=T,\label{utfull}\end{cases}
\end{align}
where $G_{T-1}(r)$ represents the distribution of each agent $n$'s maximum reward $m_n(T-1)$ by time $T-1$ determined by $\{u_t\}_{t=1}^{T-1}$ in the following:
\begin{align}
G_{T-1}(r)=\left\{  
        \begin{array}{l@{\;}l}
        F(r)^t-(1-F(r))\sum^{T-1}_{i=t}F(u_i)^i,\;&\text{if }u_t<r\leq u_{t-1}\\
        &\;\;\text{ with }u_0=1, \\
         &\;\;\;\text{and }1\leq t\leq T-1,\\
        F(r)^{T},\;&\text{if }r\leq u_{T-1}.
        \end{array}
\right.\label{belief}
\end{align}
Different from myopic agents' insufficient exploration time in (\ref{M_exNum}), non-myopic agents will over-explore with a longer expected number of explorations:
\begin{align}
    1+F(u_T)^{TN}+\sum_{t=1}^{T-1}F(u_t)^t.\label{Non_exNum}
\end{align}
\end{lemma}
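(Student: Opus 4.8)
The plan is to establish Lemma \ref{NonMyFull} by backward induction on the time index $t$, exploiting the threshold structure already guaranteed by Lemma \ref{thresholdstrategy} and the equilibrium sharing behavior pinned down in Lemma \ref{sharing}. I will work within the Pareto-optimal perfect Bayesian equilibrium identified in the text, so that $\mathcal{N}_{T-1}=\mathcal{N}$ and $\mathcal{N}_t=\emptyset$ for all $t<T-1$. The goal is to derive the indifference equation (\ref{utfull}) characterizing each $u_t$, verify the self-consistency of the belief $G_{T-1}(\cdot)$ in (\ref{belief}), prove strict monotonicity $u_1>\ldots>u_T=\mu$, and finally count the expected number of explorations in (\ref{Non_exNum}).

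First I would treat the terminal slot. At $t=T$ no future information can be exploited, so each agent maximizes immediate reward and the threshold collapses to $u_T=\mu$, giving the second branch of (\ref{utfull}) with right-hand side $0$. Working backward from the Bellman equation (\ref{bell}), I would characterize $u_t$ for $t\le T-1$ as the indifference point where the value of exploiting the current best reward $m_n(t)=u_t$ equals the value of exploring a fresh option drawing $r\sim F(\cdot)$. Since communication is blocked until $T-1$, an agent's own state evolves in isolation during $t\in\{1,\ldots,T-2\}$, so the continuation value decomposes additively: one part capturing the $(T-t-1)$ remaining pre-sharing slots, each contributing the one-step gain $\int^1_{u_t}(1-F(r))\,dr$ from holding an option that may be improved, and a second part capturing the one-time gain at $T-1$ from pooling with the other $N-1$ agents, which produces the term $\int^1_{u_t}G_{T-1}(r)^{N-1}(1-F(r))\,dr$. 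Assembling these two contributions yields the first branch of (\ref{utfull}).

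Next I would verify the fixed-point consistency of the belief (\ref{belief}). Given the thresholds, an agent explores in slot $t$ precisely when her running maximum is still below $u_t$; tracking the distribution of her maximum reward $m_n(T-1)$ requires partitioning the support by which threshold interval $(u_t,u_{t-1}]$ the value falls into and accounting for the probability that exploration stopped at each stage. This produces the piecewise form in (\ref{belief}), and I must check that the $G_{T-1}(\cdot)$ induced by $\{u_t\}$ is exactly the distribution each agent would rationally attribute to her peers, closing the equilibrium loop. For strict monotonicity, I would argue that the right-hand side of (\ref{utfull}) is strictly decreasing in $t$ (fewer remaining slots shrink the $(T-t-1)$ coefficient) while being a decreasing function of $u_t$ on the left, so the unique solution $u_t$ strictly decreases as $t$ increases, with existence and uniqueness following from monotonicity and continuity of both sides.

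The main obstacle I anticipate is the simultaneous fixed-point nature of (\ref{utfull}) and (\ref{belief}): the thresholds determine the belief $G_{T-1}$, but the belief feeds back into the threshold equation through the $G_{T-1}(r)^{N-1}$ term, so existence and uniqueness cannot be read off slot-by-slot in a purely sequential manner. I would address this by showing the backward recursion defines a contraction or at least a monotone map on the threshold vector, so that a unique consistent profile exists; care is needed because $G_{T-1}$ itself depends on all of $\{u_1,\ldots,u_{T-1}\}$ through (\ref{belief}), coupling the equations rather than decoupling them. Once the thresholds and belief are pinned down, the exploration count (\ref{Non_exNum}) follows by summing, over $t$, the probability $F(u_t)^t$ that an agent is still exploring at slot $t$ (her maximum having stayed below $u_t$ through $t$ independent draws), plus the terminal correction $1+F(u_T)^{TN}$ reflecting the guaranteed first exploration and the post-sharing behavior at the last slot; this last step is routine bookkeeping once the preceding structure is in place.
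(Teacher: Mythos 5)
Your overall architecture (indifference equation at $m_n(t)=u_t$, self-consistent belief $G_{T-1}$, monotonicity, then the exploration count via $\sum_t \mathbb{P}(\tau\ge t)$) matches the paper's, and your derivations of $u_T=\mu$, of the two continuation terms in (\ref{utfull}), and of (\ref{Non_exNum}) are essentially the ones in Appendix~F. However, there is a genuine logical gap in the order of your argument: you derive (\ref{utfull}) first and then read off monotonicity from the fact that the coefficient $(T-t-1)$ shrinks with $t$. The problem is that the continuation value you use to write down (\ref{utfull}) --- ``explore once at $t$, then exploit $\max\{R_t^n,u_t\}$ for the remaining $(T-t-1)$ pre-sharing slots, then pool at $T-1$'' --- is only the correct value function if the agent never resumes exploration after time $t$, which requires $\max\{R_t^n,u_t\}\ge u_{t+1}\ge\cdots\ge u_{T-1}$, i.e., exactly the monotonicity you are trying to conclude. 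As proposed, the argument is circular: the equation presupposes the decreasing structure, so its solution being decreasing does not establish that the optimal policy is decreasing. The paper breaks this circle by proving monotonicity \emph{first}, independently of the equation, via an exchange argument: if $u_t<u_{t+1}$ for some $t$, take a state with $u_t<m_n(t)<u_{t+1}$ (so the policy exploits at $t$ and explores at $t+1$), swap the two actions, and observe that the swapped schedule earns $\mu+\mathbb{E}[\max\{R_t^n,m_n(t)\}]$ over the two slots versus $m_n(t)+\mu$, a strict improvement, while leaving the distribution of $m_n(t+2)$ unchanged; hence the optimal thresholds must be non-increasing, and only then is the continuation value in (\ref{utfull}) legitimate. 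You need this (or an explicit guess-and-verify step confirming that the decreasing solution of (\ref{utfull}) is indeed optimal for the Bellman equation (\ref{bell})) to close the argument.

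A secondary remark: your concern about the coupled fixed point --- that $G_{T-1}$ depends on all of $\{u_1,\ldots,u_{T-1}\}$ while each $u_t$ depends on $G_{T-1}$ --- is legitimate, and your proposed contraction/monotone-map resolution is left entirely unexecuted. The paper does not resolve this rigorously either (it derives the equations and the belief separately and asserts uniqueness), so this is not a gap relative to the paper's own standard, but if you intend to claim more than the paper does you would need to actually exhibit the contraction or monotone structure rather than announce it.
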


The proof of Lemma \ref{NonMyFull} is given in Appendix \ref{appendix4}. The threshold equation (\ref{utfull}) is derived from backward induction in MA-MDP described in Section \ref{S2} using equations (\ref{bell}). Each $u_t$ represents the value that agent $n$ is indifferent between long-term reward under actions of $a_n(t)\notin I_n(t)$ and $\bar{a}_n(t)$ at time $t$. Compared to the myopic agents' constant exploration threshold $\mu$, non-myopic agents increase their threshold to \(u_t\) for any $t<T$, targeting better options for future exploitation and larger long-term reward in (\ref{nonmyoobj}). 

The sequence $\{u_t\}_{t=1}^T$ is decreasing in $t$, as agents should explore less for shorter futures. Moreover, non-myopic agents allocate more exploration time slots in (\ref{Non_exNum}) than myopic agents' (\ref{M_exNum}). In fact, while myopic agents with effective communications tend to under-explore under the centralized communications policy, non-myopic agents selfishly choosing not to share can only rely on their own and thus over-explore at the equilibrium. Although non-myopic agents prioritize long-term rewards and intensify their exploration efforts, the overall performance of social welfare remains poor due to their practice of withholding information. Later in Section \ref{S5}, we will show that our decentralized communication mechanism obviously outperforms the centralized policy.  

\subsection{Optimal Decentralized Communication Mechanism}
Lemma \ref{sharing} still holds similarly for any decentralized communication, where agents will only truthfully share information in the last communication time slot under any $\mathcal{M}$. Then in the Pareto optimum, any $\mathcal{M}$ reduces to include only a single communication time slot $T_1$ due to non-myopic agents' information withholding. 

We denote the exploration threshold sequence to $\{u_t(T_1)\}_{t=1}^{T}$ since it depends on $T_1$ in the reduced one-time communication mechanism $\mathcal{M}_2=[T]\setminus\{T_1\}$. The centralized communications policy is a special case of $\mathcal{M}_2$ with $T_1=T-1$ for outputting the monotonic threshold sequence in (\ref{utfull}). The threshold sequence $\{u_t(T_1)\}_{t=1}^{T}$ for any $T_1$ can be solved similarly using backward induction under equation (\ref{bell}) in the MA-MDP. The key difference between \(\{u_t(T-1)\}_{t=1}^{T}\) and \(\{u_t(T_1)\}_{t=1}^{T}\) for any $T_1$ is that \(\{u_t(T_1)\}_{t=1}^{T}\) may not monotonically decrease. Intuitively, agents might halt exploration before $T_1$ to take advantage of others' option sharing at $T_1$, and resume exploration after $T_1$ if the shared information does not meet their expectations.

\begin{proposition}\label{L4}
For any one-time communication mechanism $M_2=[T]\setminus\{T_1\}$, the non-myopic agents' exploration threshold sequence $\{u_t(T_1)\}_{t=1}^{T}$ only satisfies monotonicity before and after the communication time $T_1$: $u_1(T_1)>u_2(T_1)>\ldots>u_{T_1}(T_1)$ and $u_{T_1+1}(T_1)>u_{T_1+2}(T_1)>\ldots>u_T(T_1)$, where $u_t(T_1)$ with $t>T_1$ may exceed $u_t(T_1)$ with $t\leq T_1$ (see Fig. \ref{fthres} as an illustration), which are unique solutions to a system of following equations:
\begin{align}
&u_t(T_1)-\mu= \nonumber\\
&\begin{cases}
    &(T_1-t)\left(\int^1_{u_t(T_1)} 1-F(r)dr\right)\\
&+(T-T_1)\int^1_{u_t(T_1)}G_{T_1}(r)^{N-1}(1-F(r))dr,\\
&\qquad\qquad\quad\qquad\quad\quad\quad\; \text{if 
 }t=1,\ldots,T_1,u_t(T_1)>u_{T_1+1}(T_1),\\
&(T_1-t)\left(\int^1_{u_t(T_1)} 1-F(r)dr\right)\\
&+(T-T_1)\int^{1}_{u_{T_1+1}(T_1)}G_{T_1}(r)^{N-1}(1-F(r))dr\\
    &+\sum^{T_1+k-1}_{j=T_1+1}(T-j)\int^{u_{j}(T_1)}_{u_{j+1}(T_1)}G_{T_1}(r)^{N-1} F(r)^{j-T_1}(1-F(r))dr\\
    &+(T-T_1-k)\int^{u_{T_1+k}(T_1)}_{u_t(T_1)}G_{T_1}(r)^{N-1}F(r)^{k}(1-F(r))dr,\\
&\!\!\quad\quad\quad\quad\quad\text{if 
 } t=1,\ldots,T_1,u_{T_1+k}(T_1)>u_t(T_1)>u_{T_1+k+1}(T_1),\\
 &\!\!\;\;\; \quad\quad\quad\quad\quad k=1,\ldots,T-T_1-1,\\
\end{cases}\label{Mt<T_1+1}\\
&u_t(T_1)-\mu=(T-t)\bigg(\int^1_{u_t(T_1)} 1-F(r)dr\bigg),\;\text{if }t=T_1+1,\ldots,T,\label{singlethreshold}
\end{align}
where $G_{T_1}(r)$ is given in equation (\ref{belief}) by substituting $T-1$ with $T_1$ and replacing $u_t$ with $u_t(T_1)$. The resultant social welfare over the entire time horizon is
\begin{align}
&V_{\mathcal{M}_2}=N\mu+NT_1\int^1_{u_1(T_1)}rdF(r)+N\sum^{T_1}_{t=1}F(u_{t}(T_1))^{t}\mu+\nonumber\\
    &+N\sum^{T_1-1}_{t=1}\left(F(u_t(T_1))^t\int^1_{u_t(T_1)}rdF(r)+\int^{u_t(T_1)}_{u_{t+1}(T_1)}rdF(r)^{t+1}\right)(T_1-t)\!\!\!\nonumber\\
    &+N(T-T_1)\left(1-\int^1_{u_{T_1+1}(T_1)}G_{T_1}(r)^Ndr\right)\nonumber\\
    &-N\sum^{T}_{t=T_1+2}(T-t+1)\int^{u_{t-1}(T_1)}_{u_t(T_1)}G_{T_1}(r)^NF(r)^{t-T_1-1}dr,\label{eqVN}
\end{align}
Then agents spend less expected exploration time below than (\ref{Non_exNum}):
\begin{align}
    1+\sum^{T_1}_{t=1}F(u_{t})^{t}+\sum^{T}_{t=T_1+1}G_{T_1}(u_{t})^NF(u_{t})^{t-T_1},\label{T1time}
\end{align}
where $G_{T_1}(\cdot)$ is the belief on $m_n(T_1)$ determined by $\{u_t(T_1)\}_{t=1}^{T_1}$. 
\end{proposition}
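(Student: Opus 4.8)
The plan is to solve the non-myopic agents' MA-MDP under the reduced one-time communication mechanism $\mathcal{M}_2=[T]\setminus\{T_1\}$ by backward induction on the Bellman equation (\ref{bell}), in the same spirit as the proof of Lemma \ref{NonMyFull}, but now with the single communication slot sitting in the interior of the horizon. I would organize the argument into a post-$T_1$ phase, the communication jump at $T_1$, and a pre-$T_1$ phase, and then close with the welfare and exploration-count accounting. For $t=T_1+1,\ldots,T$ there is no further communication, so each agent faces an isolated single-agent finite-horizon exploration problem. Starting from $u_T(T_1)=\mu$ (the last slot is effectively myopic) and inducting backward, the indifference between drawing a fresh option (expected value $\mu$ plus continuation) and keeping the current best yields exactly (\ref{singlethreshold}): the exploration premium $u_t(T_1)-\mu$ equals the remaining horizon $T-t$ times the option value $\int^1_{u_t(T_1)}(1-F(r))dr$ of one more free draw. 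Strict monotonicity of the right-hand side in $u_t(T_1)$ gives a unique root, and since $T-t$ shrinks with $t$, the block $u_{T_1+1}(T_1)>\ldots>u_T(T_1)$ is strictly decreasing.

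Next I would compute the post-communication continuation value $U_n(T_1+1\mid x,\mathcal{M}_2)$ as a function of the agent's state $x=\max_{j}m_j(T_1)$ right after the shared update. Because the post-$T_1$ thresholds partition $[0,1]$, this value is piecewise: if $x\geq u_{T_1+1}(T_1)$ the agent exploits for all remaining slots, while if $x$ lies in $(u_{T_1+k+1}(T_1),u_{T_1+k}(T_1)]$ she explores $k$ more times before settling. Then comes the crux: backward induction over $t=T_1,\ldots,1$. For $t\leq T_1$ communication is blocked, so the agent self-explores while anticipating the reward-information windfall at $T_1$, distributed as $G_{T_1}(\cdot)^{N-1}$ across her $N-1$ peers. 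The indifference condition for $u_t(T_1)$ therefore splits the future into the self-reliant slots $t,\ldots,T_1-1$, contributing $(T_1-t)\int^1_{u_t(T_1)}(1-F(r))dr$, and the value of the peers' shared maximum at $T_1$ integrated against the piecewise continuation of the previous step. This integration, weighted by the intervals $(u_{T_1+k+1}(T_1),u_{T_1+k}(T_1)]$, produces the summation over $j$ and the case split on $k$ in (\ref{Mt<T_1+1}). The belief $G_{T_1}$ is pinned down self-consistently by symmetry: the distribution an agent assigns to a peer's $m_i(T_1)$ must equal her own $m_n(T_1)$ distribution induced by $\{u_t(T_1)\}_{t=1}^{T_1}$, yielding (\ref{belief}) with $T-1$ replaced by $T_1$, and a fixed-point argument over the monotone threshold map establishes existence and uniqueness. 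Solving the pre-$T_1$ equations in decreasing order of $t$, each right-hand side is continuous and strictly decreasing in $u_t(T_1)$, giving a unique root and the strict ordering $u_1(T_1)>\ldots>u_{T_1}(T_1)$.

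With the thresholds and belief in hand, the social welfare (\ref{eqVN}) and exploration count (\ref{T1time}) follow by bookkeeping. For welfare I would sum each agent's expected per-slot reward: the pre-$T_1$ terms use $F(u_t(T_1))^t$ (the probability she is still below threshold at $t$) together with the layered increments $\int^{u_t(T_1)}_{u_{t+1}(T_1)}rdF(r)^{t+1}$, while the post-$T_1$ terms start from the shared maximum, giving the $1-\int^1_{u_{T_1+1}(T_1)}G_{T_1}(r)^N dr$ term and the correction sum over $t=T_1+2,\ldots,T$ carrying the factor $G_{T_1}(r)^NF(r)^{t-T_1-1}$. For the exploration count I would add, at each slot, the probability of still exploring: the initial draw contributes the leading $1$, the pre-$T_1$ slots contribute $\sum^{T_1}_{t=1}F(u_t(T_1))^t$, and the post-$T_1$ slots contribute $\sum^{T}_{t=T_1+1}G_{T_1}(u_t(T_1))^NF(u_t(T_1))^{t-T_1}$, where $G_{T_1}(u_t(T_1))^N$ is the chance that even the best of all shared rewards lies below the current threshold.

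I expect the main obstacle to be the self-referential case split in (\ref{Mt<T_1+1}): the piecewise continuation value at $T_1$ depends on the post-$T_1$ thresholds, and one must show that feeding it back into the pre-$T_1$ backward induction yields a \emph{consistent} and \emph{unique} threshold sequence. In particular, justifying why $u_t(T_1)$ with $t>T_1$ may overtake $u_t(T_1)$ with $t\leq T_1$ is delicate: the anticipation of the windfall at $T_1$ depresses the pre-communication thresholds, so global monotonicity genuinely fails and only the two within-block orderings survive. Handling this consistency while keeping the belief $G_{T_1}$ matched to the induced state distribution is the technical heart of the proof.
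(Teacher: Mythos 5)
Your proposal follows essentially the same route as the paper's proof: a single-agent backward induction for the post-$T_1$ block yielding (\ref{singlethreshold}), a piecewise continuation value at $T_1$ decomposed by which post-$T_1$ threshold interval the shared maximum falls into (producing the case split and the $j$-sum in (\ref{Mt<T_1+1})), a symmetric fixed-point belief $G_{T_1}$ obtained from (\ref{belief}) with $T-1$ replaced by $T_1$, and the same event-layered bookkeeping for (\ref{eqVN}) and (\ref{T1time}). The only minor divergence is that the paper establishes within-block monotonicity first via an explore/exploit exchange argument (as in its proof of Lemma \ref{NonMyFull}) before writing the indifference equations, whereas you read monotonicity off the fixed-point structure afterward; both are workable.
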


The proof of Proposition \ref{L4} is given in Appendix \ref{Appendix12}. Fig. \ref{fthres} plots $\{u_t(T_1)\}_{t=1}^T$ as a function of $T_1$ for $T$=50 and $N$=100 agents. We use Beta distribution with mean reward $\mu=0.1$ per option. After communication at $t=T_1$, agents are on their own and the MA-MDP process reduces to the single-agent MDP, where all agents follow a decreasing exploration threshold sequence $\{\bar{u}_t\}_{t=T_1+1}^T$. Before $T_1$, agents hope to hear from each other's sharing at $T_1$ and are more reluctant to explore themselves as getting closer to $T_1$. As such, their exploration thresholds $\{u_t(T_1)\}_{t=1}^{T_1}$ are less than (\ref{utfull}) before $T_1$. A well-chosen $T_1$ helps achieve a shorter exploration time (\ref{T1time}) than (\ref{Non_exNum}) to mitigate over-exploration and enhance social welfare.

\begin{figure}
    \centering
        \includegraphics[width=0.27\textwidth]{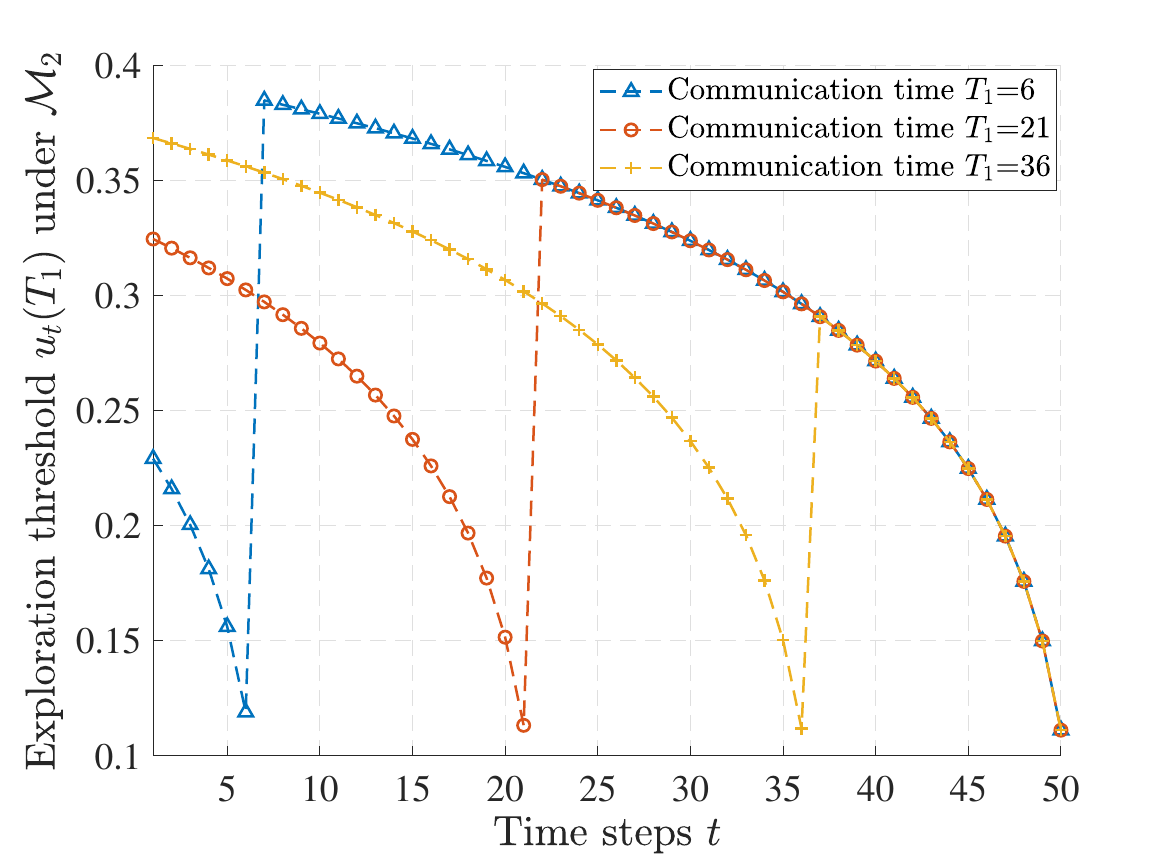}
        \caption{ Exploration threshold sequence $\{u_t(T_1)\}_{t=1}^{T}$ under different decisions of one-communication time $T_1$.}
    \label{fthres}
\end{figure}


Given $\{u_t(T_1)\}_{t=1}^{T}$, we are ready to present Algorithm \ref{A2} to maximize long-term social welfare $V_{\mathcal{M}_2}$ for obtaining the optimal mechanism $\mathcal{M}_2^*=[T]\setminus\{T_1^*\}$. To identify $T_1^*$, Algorithm \ref{A2}'s step 3-7 scan through all possible $T_1$ values to compare their corresponding social welfare. 
We still need to solve the exploration threshold sequence for any $T_1$. The equations that determine \(\{\bar{u}_t\}_{t=T_1+1}^T\) are defined by \(T - T_1\) independent equations, whereas the equations that determine \(\{\bar{u}_t(T_1)\}_{t=1}^{T_1}\) consist of \(T_1\) interdependent fixed-point equations. Hence, to initialize, in step 1 of Algorithm \ref{A2}, we first solve the degenerated single-agent MDP with $T$ equations regardless of $T_1$ to return benchmark exploration threshold sequence $\{\bar{u}_t\}_{t=1}^T$ with linear complexity $O(T)$. For each iteration of given $T_1$ from step 3 of Algorithm \ref{A2}, we solve a system of $T_1$ nonlinear equations for returning $\{u_t(T_1)\}_{t=1}^{T_1}$ before communication. By choosing $\{\bar{u}_t\}_{t=1}^{T_1}$ as the initial solution in step 4, Newton's method can efficiently and stably solve the final $\{u_t(T_1)\}_{t=1}^{T_1}$. Since the Jacobian matrix of the equations for $\{u_t(T_1)\}_{t=1}^{T_1}$ is a diagonal matrix, solving $\{u_t(T_1)\}_{t=1}^{T_1}$ only yields the complexity $O(T)$.

\begin{algorithm}
\caption{Optimal One-time Communication Mechanism $\mathcal{M}_2^*$}\label{A2}
\begin{algorithmic}[1]
\REQUIRE $T$, $N$, $F(\cdot)$
\STATE Initialize $T_1^{*}=0$, $V^* = 0$
\STATE Solve benchmark $\{\bar{u}_t\}_{t=1}^{T}$ in single-agent MDP with using Newton's method
\FOR{$T_1 = 1$ to $T$}
    \STATE Solve $\{u_t(T_1)\}_{t=1}^{T_1}$ in multi-agent MDP with initial solution $\{\bar{u}_t\}_{t=1}^{T_1}$ using Newton's method
    \STATE Calculate $V_{\mathcal{M}_2}$ in (\ref{eqVN}) under $\{u_t(T_1)\}_{t=1}^{T_1}$ and $\{u_t(T_1)\}_{t=T_1+1}^{T} = \{\bar{u}_t\}_{t=T_1+1}^{T}$
    \IF{$V_{\mathcal{M}_2} > V^*$}
        \STATE $T_1^{*} = T_1$, $\mathcal{M}^*_2 = [T] \setminus \{T_1^*\}$, $V^* = V_{\mathcal{M}^*_2}$
    \ENDIF
\ENDFOR
\RETURN $\mathcal{M}_2^* = [T] \setminus \{T_1^*\}$
\end{algorithmic}
\end{algorithm}

The performance of our algorithm is stated as follows.
\begin{theorem}\label{T2}
Algorithm \ref{A2} returns the optimal one-time communication mechanism $\mathcal{M}_2^*=[T]\setminus\{T_1^{*}\}$ in polynomial time-complexity $O(T^2)$.
\end{theorem}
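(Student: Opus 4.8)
The plan is to prove Theorem \ref{T2} in two parts: optimality of the returned mechanism, and the $O(T^2)$ runtime. For optimality, I would first invoke Lemma \ref{sharing} together with the reduction preceding Proposition \ref{L4}: at the Pareto-optimal equilibrium every mechanism $\mathcal{M}$ collapses to a single effective communication slot, so the social optimum lies in the finite family $\{[T]\setminus\{T_1\}:T_1\in\{1,\ldots,T\}\}$, which also contains the centralized policy at $T_1=T-1$. Algorithm \ref{A2} evaluates $V_{\mathcal{M}_2}$ for every $T_1$ in this family (steps 3--7) and keeps the maximizer, so global optimality reduces to showing that each evaluation is exact. For a fixed $T_1$, the indifference equation of the form (\ref{singlethreshold}) is independent of $T_1$ and coincides with the single-agent recursion; hence the post-communication thresholds satisfy $u_t(T_1)=\bar{u}_t$ for $t>T_1$, which is precisely the benchmark that step 5 substitutes from step 2. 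The pre-communication thresholds are the unique root of system (\ref{Mt<T_1+1}) guaranteed by Proposition \ref{L4}, and I would argue that Newton's method initialized at $\{\bar{u}_t\}_{t=1}^{T_1}$ converges to it, so $V_{\mathcal{M}_2}$ computed from (\ref{eqVN}) is exact and the returned $T_1^*$ is globally optimal.

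For the runtime I would account for each stage. Step 2 solves the $T$ decoupled single-agent equations of the form (\ref{singlethreshold}), each a scalar root found in $O(1)$ Newton steps at fixed precision, for $O(T)$ total. The loop executes $T$ times; per iteration, solving $\{u_t(T_1)\}_{t=1}^{T_1}$ costs $O(T)$ because the Jacobian of (\ref{Mt<T_1+1}) is diagonal, and evaluating the welfare (\ref{eqVN}) costs $O(T)$ once the suffix sums $\sum_{i=t}^{T_1}F(u_i)^i$ appearing in the belief $G_{T_1}$ are precomputed so that each $G_{T_1}(r)$ evaluation is $O(1)$. Summing yields $O(T)+T\cdot O(T)=O(T^2)$.

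The main obstacle is justifying the $O(T)$ inner solve, i.e.\ the diagonal structure of system (\ref{Mt<T_1+1}). A priori the belief $G_{T_1}$ in (\ref{belief}) couples all thresholds through its suffix sums and through the integration limits, so the off-diagonal entries of the $u_t(T_1)$-equation look nonzero. I would dispel this by observing that in agent $n$'s best response the belief $G_{T_1}$ describes the other $N-1$ agents and is unaffected by $n$'s own deviation, hence exogenous when differentiating her indifference conditions; with $G_{T_1}$ and the fixed post-communication thresholds held constant, each equation in (\ref{Mt<T_1+1}) depends only on its own $u_t(T_1)$, making the Jacobian diagonal and the $T_1$ equations separately solvable. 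The symmetric-belief consistency (that $G_{T_1}$ matches the induced distribution of $m_n(T_1)$) is maintained by a constant number of outer updates, each $O(T)$, so it does not change the bound. A secondary, milder point is confirming that Newton's iterates remain in the monotone regime of Proposition \ref{L4}, namely $u_1(T_1)>\ldots>u_{T_1}(T_1)$, and converge to the unique root, which follows from the sign of each diagonal derivative and the benchmark initialization.
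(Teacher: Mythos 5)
Your proposal is correct and follows essentially the same route as the paper's own proof: exhaustive scan over $T_1$, Newton's method on the pre-communication system initialized from the single-agent benchmark $\{\bar{u}_t\}$, and the diagonal Jacobian giving an $O(T)$ inner solve for an overall $O(T^2)$ bound. If anything, you are more careful than the paper on the one delicate point --- the paper asserts the Jacobian of (\ref{Mt<T_1+1}) is diagonal while treating the belief $G_{T_1}$ as fixed, without discussing that $G_{T_1}$ itself depends on $\{u_t(T_1)\}_{t=1}^{T_1}$; your observation that the belief concerns the other $N-1$ agents and is exogenous to agent $n$'s indifference conditions, plus a bounded number of outer consistency updates, is exactly the justification the paper leaves implicit.
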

The proof of Theorem \ref{T2} and the details of Algorithm \ref{A2} is given in Appendix \ref{Appendix10}.
\section{Simulation Experiments with Real Dataset for Stochastic Rewards and Heterogeneous Preference}\label{S5}
In this section, we use real hotel rating dataset from Booking.com platform to simulate our HILL system of interactive hotel customers (agents) for sharing reviews. We additionally extend to consider stochastic reward and agents’ heterogeneous preference for the same hotel. Since we have presented the experiments for myopic agents in Section \ref{S3}, and the hotel customers are found non-myopic in many instances \cite{yilmaz2022strategic}\cite{cui2018pricing}, we only conduct simulation with our Algorithm \ref{A2} for non-myopic agents in this section. We use public dataset of 825 hotels' rating statistics by customers in the popular tourism city Athens in traveling seasons. We let hotel customers (agents) post their hotel reviews (i.e., option rewards) and observe others' reviews, if the platform allows communication on that particular day.

We use the review dataset to train the prior distribution $F(\cdot)$ of option rewards (hotel ratings) in the normalized continuous range $[0,1]$. Fig.~\ref{Ffr} shows the average rating distribution of the 825 hotels with the mean $\mu$=0.49, where the red curve is the approximated PDF trained using Kernel density estimation. Then we integrate it to return CDF $F(\cdot)$ as inputs for social welfare $V_{\mathcal{M}_2}$ and Algorithm \ref{A2} for computing mechanism $\mathcal{M}_2^*=[T]\setminus \{T_1^*\}$. 

When running Algorithm \ref{A2}, Fig. \ref{f32} plots $V_{\mathcal{M}_2}/N$ per agent versus any one-communication time $T_1$ decision with the time horizon of $T=50$ and different numbers of agents $N$=20, 30, 50 per day. There, we highlight its optimum $T_1^*$ in 3 star markers returned by Algorithm \ref{A2}. Note that the centralized communications policy is a special case of our mechanism by choosing $T_1=49$ on the most right-hand side of Fig.~\ref{f32}, which leads to the lowest social welfare of value 41.8 per agent under any number of agents. Moreover, our mechanism $T_1^*$ with social welfare of above 46.7 per agent outperforms this benchmark by at least 12\% for any $N$. In Fig. \ref{f32}, optimal $T_1^*$ is delayed with the increasing number of agents $N$, as an increased $N$ facilitates the discovery of high-reward options, consequently advancing our timing of communication to share and exploit those options. Unlike centralized communications, the increase in agents boosts the social welfare per agent in our decentralized communication mechanism, which more efficiently enables early communication among non-myopic agents.

\begin{figure}
    \centering
    \begin{minipage}[b]{.23\textwidth}
        \includegraphics[width=\textwidth]{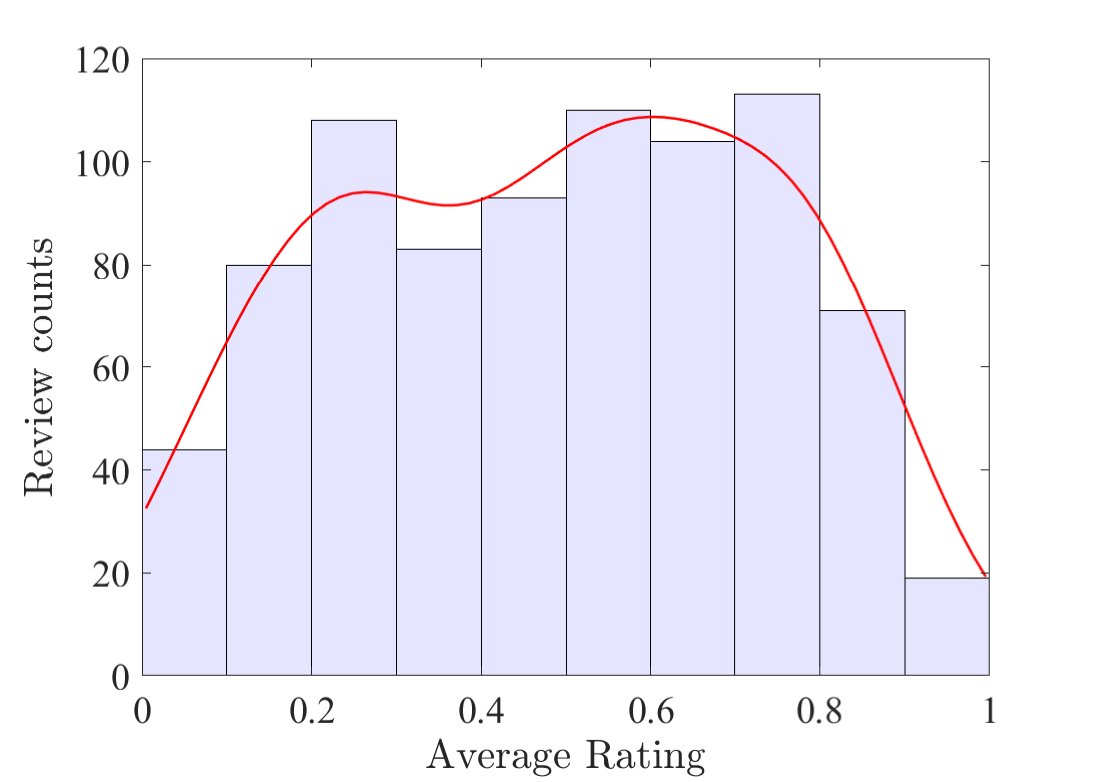}
         \caption{Distribution of hotels' average rating in a normalized range [0,1] in booking.com.}
        \label{Ffr}
    \end{minipage}\hfill
    \begin{minipage}[b]{.23\textwidth}
        \includegraphics[width=\textwidth]{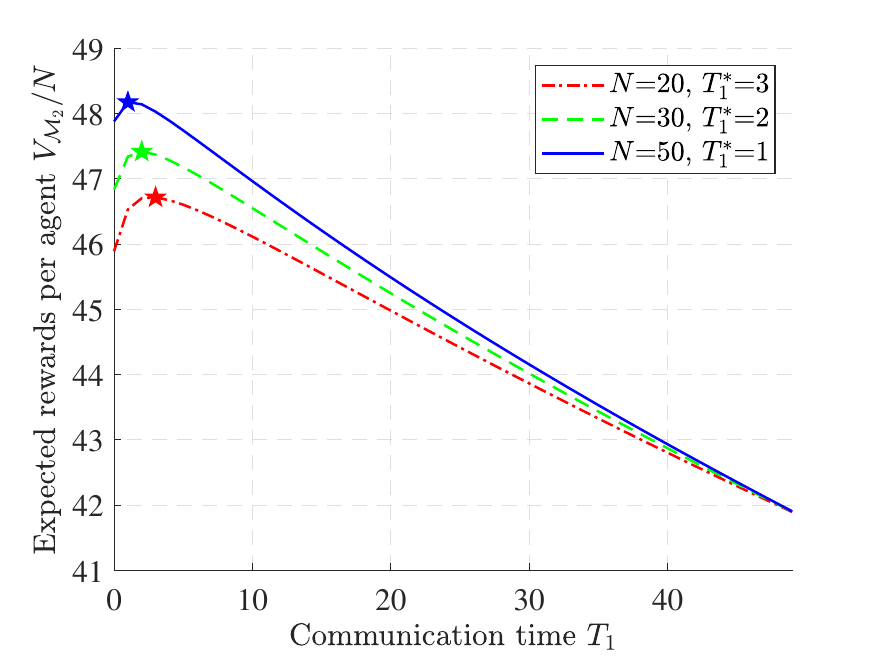}
        \caption{$V_{\mathcal{M}_2}/N$ per agent across different $T_1$ decisions with $T=50$ and agent counts of $N$=20, 30 and 50 per day.}
        \label{f32}
    \end{minipage}
    \hfill 
\end{figure}

To simulate the stochastic reward case, we add i.i.d. noise to the observed reward when agents sample each option. While non-myopic agents might use different MAB algorithms (e.g., UCB) for stochastic rewards, with relatively small noise of standard deviation 0.1 in the range of [0,1], their strategy will still converge to the optimal threshold strategy analyzed in this paper. Therefore, users are still considered to use the same strategy for deterministic rewards to approximate the non-myopic agents' strategy under stochastic rewards with small noise. To further simulate heterogeneous option reward observations among heterogeneous agents, we obtain the diverse rating data for the same hotel from different customers, and apply such data to create and add agent-heterogeneous noise as random variable to the average reward sampled from CDF $F(\cdot)$. We then employ the Monte Carlo method to simulate the HILL system with stochastic reward and heterogeneous preference for non-myopic agents, under both centralized communications policy and our decentralized communication mechanism in Alg. \ref{A2} for $T=80$ and $N=30$ per day. Conducting the experiment 500 times, we take the average of these runs as our simulation results. 

Fig. \ref{Fvr} demonstrates how the average reward per agent on the platform evolves at each time step $t\in\{0,\ldots,80\}$, under the optimal one-time communication time $T_1^*=4$ by Algorithm \ref{A2}. Before $T_1^*$, we observe that our mechanism already yields a much higher reward per time slot than centralized communications under both the stochastic reward and heterogeneous preference cases. Our mechanism just uses 4 days of no-communication for agents' self-exploration, allowing them to anticipate communication at $T_1^*$, thus avoiding over-exploration. Subsequently, agents exchange information at $T_1^*$ to exploit the very best option onwards. However, under the centralized communications policy, non-myopic agents do not share truthful information until the last time stage, converging slowly to find the high-reward option at the very last round.  

We also run the experiments for different time horizons $T$=10 to 80 days under heterogeneous preference, stochastic reward, and our original model with identical preference and fixed reward. We plot the performance gain of our mechanism over the centralized policy for the three cases in Fig. \ref{f31}. Our mechanism has obviously positive performance gain, which improves with $T$, in all three cases. Perhaps surprisingly, for $T\leq 50$, our mechanism's performance gain under the heterogeneous preference setting is higher than the original model with identical preference. As agents become more heterogeneous, they are more likely to find high-reward options and our mechanism of encouraging early-stage option sharing to exploit performs better. As $T$ increases to exceed 50, the impact of heterogeneous preference accumulates over a long time and thus our Algorithm \ref{A2} designed for the no-noise case performs slightly worse after adding the noise. The mechanism's performance gain under stochastic rewards is slightly lower than the original model's gain, but the impact is minor under small perturbations, demonstrating our mechanism's robustness to some extent of stochastic noise.



\begin{figure}
    \centering
    \begin{minipage}[b]{.23\textwidth}
            \includegraphics[width=\textwidth]{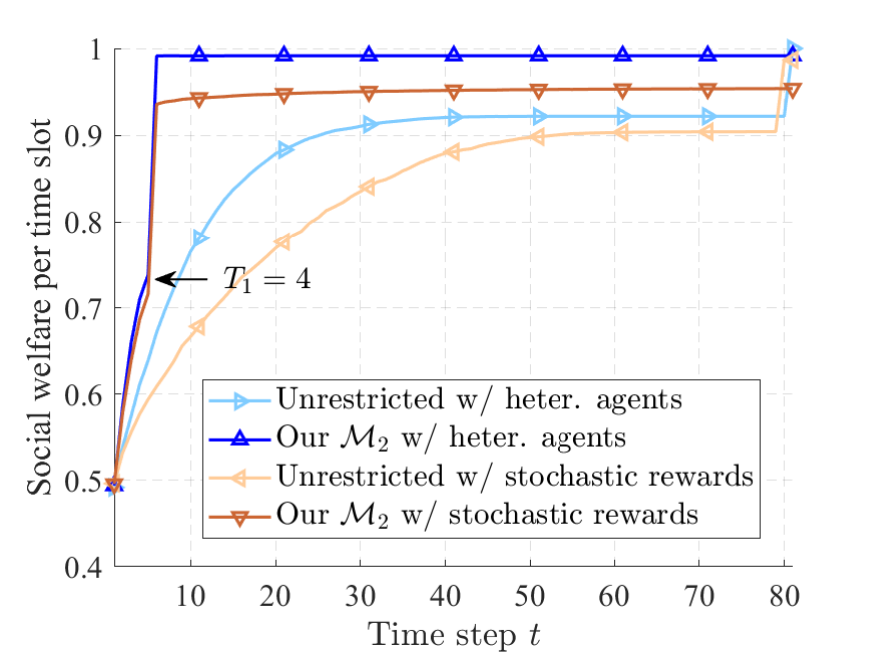}
        \caption{Average reward per agent in each time slot under our mechanism (choosing optimal $T_1^*=4$) and the centralized communications policy, where agent heterogeneity and stochastic arms are simulated.}
        \label{Fvr}

    \end{minipage}
    \hfill 
        \begin{minipage}[b]{.23\textwidth}
        \includegraphics[width=\textwidth]{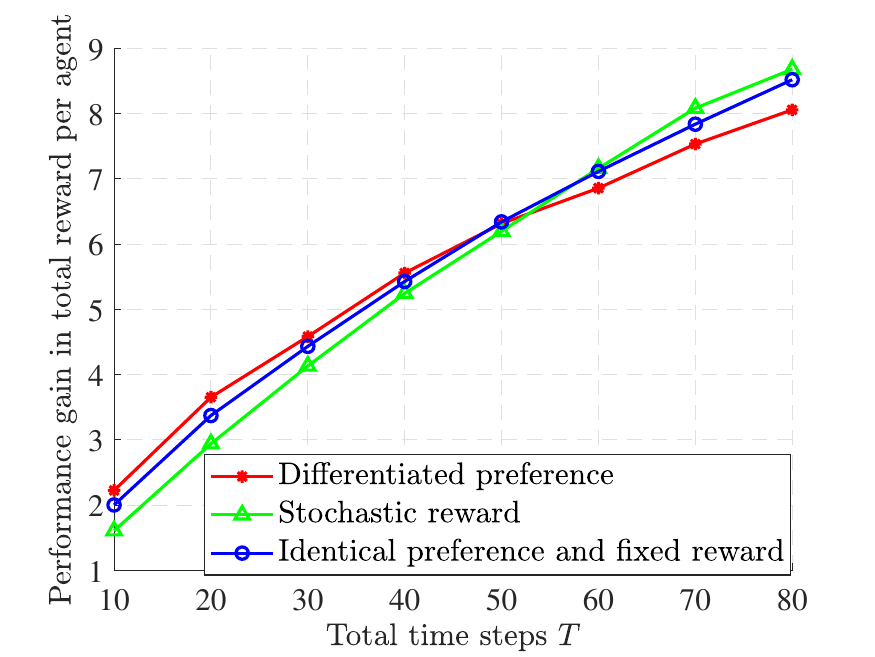}
         \caption{Performance gain per agent of our mechanism beyond the centralized communications policy under heterogeneous preference, stochastic reward, and our original identical preference and fixed reward setting, versus total time horizon $T$. Here $N=50$.}
        \label{f31}
    \end{minipage}
\end{figure}

\section{Conclusions}\label{S6}
In this paper, we propose a paradigm shift from the current centralized information sharing system to a decentralized system that only allows intermittent information sharing to regulate HILL and efficient communication among self-interested agents with deterministic rewards. By formulating the problem as a MA-MDP, we derive exploration and communication strategies for myopic and non-myopic agents. For myopic agents, we identify conditions for decentralized communication to outperform centralized communication which causes under-exploration. The NP-hard problem of maximizing long-term social welfare via decentralized communication is addressed with an asymptotically optimal algorithm of linear complexity to time communication disabling. For non-myopic agents who avoid sharing for personal gain, leading to over-exploration, we propose an optimal one-time communication mechanism with polynomial complexity. Finally, we use a real-world dataset in simulations to verify our mechanism's effectiveness, considering stochastic rewards and agents' heterogeneous option preferences.
\bibliographystyle{ACM-Reference-Format}
\bibliography{acmart}

\onecolumn


\appendix
\section{Proof of Lemma \ref{myopic_S}}\label{appendix1}

We first establish the expression for \(V_{\emptyset}\) in (\ref{fullM_socialwelfare}) and then derive the expected number of explorations in (\ref{M_exNum}). To begin, let \(E_{(t_1,t_2)}\) denote the event that an option with a reward exceeding \(\mu\) is identified during the time slots \(\{t_1,\ldots,t_2\}\), and \(\bar{E}_{t}\) denote the event that no agent has found an option with a reward greater than \(\mu\) by time \(t\). By definition, the events \(E_{(0,t)}\) and \(\bar{E}_t\) are complementary.

Define \(v_{\emptyset}(t)\) as the expected reward at time \(t\) for myopic agents under the full communication policy. Since the reward at time \(t\) depends on the exploration history up to \(t-1\), we partition the history into the cases when no option exceeding \(\mu\) has been found (event \(\bar{E}_{t-1}\)) and when at least one such option has been identified (event \(E_{(0,t-1)}\)). Thus, we can write
\begin{align}
    v_{\emptyset}(t)&=\mathbb{E}\Big[\sum^N_{n=1}r_{a_n(t)}\Big]
    =\mathbb{P}(\bar{E}_{t-1})\mathbb{E}\Big[\sum^N_{n=1}r_{a_n(t)}\Big|\bar{E}_{t-1}\Big]
    +\mathbb{P}(E_{(0,t-1)})\mathbb{E}\Big[\sum^N_{n=1}r_{a_n(t)}\Big|E_{(0,t-1)}\Big].\nonumber
\end{align}

Since all agents use an exploration threshold \(u_1=\ldots=u_t=\mu\), if no option with reward greater than \(\mu\) has been discovered (i.e., if \(\bar{E}_{t-1}\) occurs), then every agent explores at time \(t\) and earns an expected reward of \(\mu\). Hence,
\begin{align}
    \mathbb{P}(\bar{E}_{t-1})\mathbb{E}\Big[\sum^N_{n=1}r_{a_n(t)}\Big|\bar{E}_{t-1}\Big]=NF(\mu)^{Nt}\mu.\label{noone}
\end{align}

Once an option with a reward greater than \(\mu\) is found for the first time at some time \(\tau\), all agents cease further exploration. Thus, the event \(E_{(0,t-1)}\) can be decomposed as
\begin{align}
    E_{(0,t-1)}=E_{(0,0)}+\sum^{t-1}_{\tau=1}\bar{E}_{\tau-1}\cap E_{(\tau,\tau)}=\sum^{t-1}_{\tau=0}E_{(\tau,\tau)}.\label{temp}
\end{align}

Therefore, we can express the contribution from \(E_{(0,t-1)}\) as
\begin{align*}
    \mathbb{P}(E_{(0,t-1)})\mathbb{E}\Big[\sum^N_{n=1}r_{a_n(t)}\Big|E_{(0,t-1)}\Big]
    &=\mathbb{P}\Big(\sum^{t-1}_{\tau=0}E_{(\tau,\tau)}\Big)
    \mathbb{E}\Big[\sum^N_{n=1}r_{a_n(t)}\Big|\sum^{t-1}_{\tau=0}E_{(\tau,\tau)}\Big]\\[1mm]
    &=\sum^{t-1}_{\tau=0}\mathbb{P}(E_{(\tau,\tau)})\mathbb{E}\Big[\sum^N_{n=1}r_{a_n(t)}\Big| E_{(\tau,\tau)}\Big].
\end{align*}

Under the event \(E_{(\tau,\tau)}\), an option with a reward greater than \(\mu\) is identified by time \(\tau\) and is subsequently selected for exploitation at any \(t>\tau\), and its value is not affected by exploration outcomes prior to \(\tau\). Therefore, the expected social welfare at time \(t\) conditioned on \(E_{(\tau,\tau)}\) is given by
\begin{align*}
   \mathbb{E}\Big[\sum^N_{n=1}r_{a_n(t)}\Big| E_{(\tau,\tau)}\Big]
   = N\mathbb{E}\Big[\max_{n\in\mathcal{N}}\{m_n(\tau)\}\Big| E_{(\tau,\tau)}\Big]
   =N\int^1_{\mu}rd\frac{F(r)^N-F(\mu)^{N}}{1-F(\mu)^{N}}.
\end{align*}

Substituting this back into the decomposition in (\ref{temp}), we obtain
\begin{align}
    &\sum^{t-1}_{\tau=0}\mathbb{P}(E_{(\tau,\tau)})\mathbb{E}\Big[\sum^N_{n=1}r_{a_n(t)}\Big| E_{(\tau,\tau)}\Big]
    =\left(\sum^{t-1}_{\tau=0}\mathbb{P}(E_{(\tau,\tau)})\right)N\int^1_{\mu}rd\frac{F(r)^N-F(\mu)^{N}}{1-F(\mu)^{N}}\nonumber\\[1mm]
    =&(1-F(\mu)^{Nt})N\int^1_{\mu}rd\frac{F(r)^N-F(\mu)^{N}}{1-F(\mu)^{N}}.\label{eone}
\end{align}

By summing (\ref{noone}) and (\ref{eone}), we obtain the expected reward at time \(t\):
\begin{align}
    v_{\emptyset}(t)&=NF(\mu)^{Nt}\mu+(1-F(\mu)^{Nt})N\int^1_{\mu}rd\frac{F(r)^N-F(\mu)^{N}}{1-F(\mu)^{N}}\nonumber\\[1mm]
    &=N\int^1_{\mu}rd\frac{F(r)^N-F(\mu)^{N}}{1-F(\mu)^{N}}-NF(\mu)^{Nt}\left(1-\mu-\int^1_{\mu}\frac{F(r)^N-F(\mu)^{N}}{1-F(\mu)^{N}}dr\right)\nonumber\\[1mm]
    &=N\int^1_{\mu}rd\frac{F(r)^N-F(\mu)^{N}}{1-F(\mu)^{N}}-NF(\mu)^{Nt}\int^1_{\mu}\frac{1-F(r)^N}{1-F(\mu)^{N}}dr.\label{fullv}
\end{align}

Summing \(v_{\emptyset}(t)\) over the entire time horizon, the total social welfare under the full communication policy is
\begin{align*}
    V_{\emptyset}=\sum^{T}_{t=0}v_{\emptyset}(t)
    =N(T+1)\int^1_{\mu}rd\frac{F(r)^N-F(\mu)^{N}}{1-F(\mu)^{N}}-N\frac{1-F(\mu)^{N(T+1)}}{1-F(\mu)^{N}}\int^1_{\mu}\frac{1-F(r)^N}{1-F(\mu)^{N}}dr,
\end{align*}
which completes the proof of (\ref{fullM_socialwelfare}).

Let \(\tau \in \{1,\ldots,T+1\}\) denote the random variable representing the time at which myopic agents cease exploration under the full communication policy throughout the entire process. Then the expected exploration time is given by
\begin{align*}
\mathbb{E}[\tau]=\sum^{T+1}_{t=1}t\mathbb{P}(\tau=t)=\sum^{T+1}_{t=1}\mathbb{P}(\tau\geq t)=\sum^{T+1}_{t=1}F(\mu)^{N(t-1)}=\frac{1-F(\mu)^{N(T+1)}}{1-F(\mu)^N},
\end{align*}
where the second equality follows from the identity \(\mathbb{E}[\tau]=\sum_{t\ge1}\mathbb{P}(\tau\ge t)\) and the last equality results from summing the finite geometric series. This completes the proof of (\ref{M_exNum}).

\section{Proof of Proposition \ref{p1}}\label{Appendix11}
We have shown that equation (\ref{eq_lemmq3.2}) is sufficient for the non-optimality of a full communication policy in the presence of myopic agents, as discussed in Section \ref{S3}. It remains to demonstrate that (\ref{eq_lemmq3.2}) is also a necessary condition. To this end, consider equation (\ref{VM}) for \(V_{\mathcal{M}}\) under any communication restriction mechanism. If \(V_{\mathcal{M}} > V_{\emptyset}\), then at least one of the terms
\begin{align*}
    (T-T_m-\Delta_m)y_{\Delta_m+1}-\sum_{i=1}^{\Delta_m}x_i
\end{align*}
for some \(m\in\{1,\ldots,M\}\) must be positive. This implies that
\begin{align*}
    (T-\Delta_m)y_{\Delta_m+1}-\sum_{i=1}^{\Delta_m}x_i>0,
\end{align*}
which is exactly the condition given in (\ref{eq_lemmq3.2}).

\section{Proof for Lemma \ref{L333}}\label{Appendix16}
First, we prove the expressions for \(V_{\bar{\mathcal{M}}}\) in (\ref{specialM}) for the special communication restriction mechanism \(\bar{\mathcal{M}}=\{T_1,\ldots,T_1+\Delta_1-1\}\), which features a single no-communication window. Next, we extend the result to the general case as stated in Lemma \ref{L333}. In particular, we first prove the expression for \(V_{\mathcal{M}}\) in (\ref{VM}), and then we establish the formula for the expected number of exploration time slots in (\ref{M_exNum_R}).

 Given $E_{(t_1,t_2)}$ and $\bar{E}_t$ as defined in Appendix \ref{appendix1}, we introduce $E^n_{(t_1,t_2)}$ and $\bar{E}^n_t$, customizing these events for a specific agent $n$ as required for the analysis in this section. We introduce the following two lemmas:
\begin{lemma}\label{event1}
    Under the event \(E_{(T_m,T_m+\Delta_m)}\) for \(m=1,\ldots,M\), in which an option with a reward higher than \(\mu\) is identified during the no-communication interval \(t\in\{T_m,\ldots,T_m+\Delta_m\}\) for the long-term exploitation of all myopic agents, the expected maximum reward discovered after this window is\footnote{Note that communication always occurs after selecting an action at each time slot. Although the no-communication time window is defined as \(\{T_m,\ldots,T_m+\Delta-1\}\), the action chosen at time \(T_m+\Delta_m\) is also under no-communication, resulting in the same impact of communication restriction as for \(\{T_m,\ldots,T_m+\Delta-1\}\).}
    \begin{align}
        &\mathbb{P}(E_{(T_m,T_m+\Delta_m)})\mathbb{E}[\max_{n\in\mathcal{N}}m_{n}(T_m+\Delta_m)|E_{(T_m,T_m+\Delta_m)}]\nonumber\\
    =&F(\mu)^{NT_m}\left(1-\mu F(\mu)^{N(\Delta_m+1)}-\int^1_{\mu}\left(\frac{F(r)-F(\mu)}{1-F(\mu)}+F(\mu)^{\Delta_m+1}\frac{1-F(r)}{1-F(\mu)}\right)^Ndr\right).\label{ti}
    \end{align}
\end{lemma}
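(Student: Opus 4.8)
The plan is to peel off the pre-window history and reduce the entire quantity to a one-dimensional integral against the law of a single agent's self-exploration maximum. First I would condition on the event $\bar{E}_{T_m-1}$ that no option above $\mu$ has been found before the window opens. Because all myopic agents communicate and therefore share a common maximum that stays below the threshold $\mu$ throughout $\{0,\ldots,T_m-1\}$, every agent explores a fresh option in each such slot, so $\bar{E}_{T_m-1}$ is exactly the event that all $NT_m$ pre-window draws fall below $\mu$; by independence of the options this has probability $F(\mu)^{NT_m}$, which becomes the leading factor in (\ref{ti}).

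Second --- and this is the crux --- I would compute the law of a single agent's in-window exploration maximum $\tilde{W}_n$. Conditioned on $\bar{E}_{T_m-1}$, during the effective no-communication horizon of $\Delta_m+1$ slots (the extra slot justified by the footnote, since the action at $T_m+\Delta_m$ is still taken before communication resumes) each agent independently draws a new option per slot and stops at its first reward exceeding $\mu$. Splitting on the slot $j\in\{1,\ldots,\Delta_m+1\}$ of the first success (probability $F(\mu)^{j-1}(1-F(\mu))$, with success value distributed as $F$ conditioned on $(\mu,1]$) and on the no-success case (probability $F(\mu)^{\Delta_m+1}$, with $\tilde{W}_n\le\mu$), then summing the finite geometric series over $j$, I would obtain for $r>\mu$ the per-agent CDF
\begin{align}
H(r)=\frac{F(r)-F(\mu)}{1-F(\mu)}+F(\mu)^{\Delta_m+1}\frac{1-F(r)}{1-F(\mu)},\nonumber
\end{align}
which is precisely the base of the integrand in (\ref{ti}) and satisfies $H(\mu)=F(\mu)^{\Delta_m+1}$ and $H(1)=1$.

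Third, since the $\tilde{W}_n$ are i.i.d. across the $N$ agents, $Y:=\max_{n\in\mathcal{N}}\tilde{W}_n$ has CDF $H(r)^N$. I would then observe that on $E_{(T_m,T_m+\Delta_m)}$ at least one agent exceeds $\mu$, so the common pre-window maximum (which lies below $\mu$) is dominated and $\max_{n}m_n(T_m+\Delta_m)=Y$ on this event. Multiplying the conditional expectation by $\mathbb{P}(Y>\mu)$ simply reproduces the truncated expectation, giving
\begin{align}
\mathbb{P}(E_{(T_m,T_m+\Delta_m)})\mathbb{E}\big[\max_{n}m_n(T_m+\Delta_m)\mid E_{(T_m,T_m+\Delta_m)}\big]=F(\mu)^{NT_m}\,\mathbb{E}[Y\mathbf{1}_{Y>\mu}].\nonumber
\end{align}
A final integration by parts, $\mathbb{E}[Y\mathbf{1}_{Y>\mu}]=\int_\mu^1 r\,d(H(r)^N)=1-\mu F(\mu)^{N(\Delta_m+1)}-\int_\mu^1 H(r)^N\,dr$ (using $H(1)=1$ and $H(\mu)^N=F(\mu)^{N(\Delta_m+1)}$), recovers exactly (\ref{ti}).

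The main obstacle I anticipate is the careful bookkeeping in the second step: the stopping rule (an agent quits exploring after its first success) couples with the finite horizon $\Delta_m+1$, so the geometric summation must be kept exact, and I must argue cleanly that the shared sub-$\mu$ pre-window maximum may be discarded on $E_{(T_m,T_m+\Delta_m)}$. Once $H(r)$ is pinned down, the aggregation over agents and the integration by parts are routine.
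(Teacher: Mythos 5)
Your argument is correct and matches the paper's proof in all essentials: both factor out the pre-window probability $F(\mu)^{NT_m}$, identify the in-window maximum's CDF as $H(r)^N$ with $H(r)=\frac{F(r)-F(\mu)}{1-F(\mu)}+F(\mu)^{\Delta_m+1}\frac{1-F(r)}{1-F(\mu)}$, and finish by integration by parts using $H(1)=1$ and $H(\mu)^N=F(\mu)^{N(\Delta_m+1)}$. The only cosmetic difference is that you derive $H$ directly as the single-agent marginal via the geometric sum over the first-success slot, whereas the paper conditions on the number $k$ of successful agents and recovers the same $H(r)^N$ through a binomial re-summation.
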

\begin{proof}
    Note that the event \(E_{(T_m,T_m+\Delta_m)}\) implies that no agent discovers an option with a reward higher than \(\mu\) during the time interval from 0 to \(T_m-1\). Let \(b_k\) denote the event that exactly \(k\) agents find an option with a reward exceeding \(\mu\) within the window \(\{T_m, \ldots, T_m+\Delta_m\}\). Then, the event \(E_{(T_m,T_m+\Delta_m)}\) can be further decomposed as follows: \begin{align*}
    E_{(T_m,T_m+\Delta_m)}=\sum^N_{k=1} b_k.
\end{align*}
Therefore,
\begin{align}
    \mathbb{P}(E_{(T_m,T_m+\Delta_m)})\mathbb{E}[\max_{n\in\mathcal{N}}m_{n}(T_m+\Delta_m)|E_{(T_m,T_m+\Delta_m)}]
    =&\sum^N_{k=1}\mathbb{P}(b_k)\mathbb{E}[\max_{n\in\mathcal{N}}m_{n}(T_m+\Delta_m)|b_k]\nonumber\\
    =&\sum_{k=1}^NF(\mu)^{NT_m}(1-F(\mu)^{\Delta_m+1})^kF(\mu)^{(\Delta_m+1)(N-k)}\begin{pmatrix}
        N\\k
    \end{pmatrix}\int^1_{\mu}rd\left(\frac{F(r)-F(\mu)}{1-F(\mu)}\right)^k\nonumber\\
    =&F(\mu)^{NT_m}\int^1_{\mu}rd\left(\left(\frac{F(r)-F(\mu)}{1-F(\mu)}(1-F(\mu)^{\Delta_m+1})\right)+F(\mu)^{\Delta_m+1}\right)^N-F(\mu)^{N(\Delta_m+1)}\nonumber\\
    =&F(\mu)^{NT_m}\left(1-\mu F(\mu)^{N(\Delta_m+1)}-\int^1_{\mu}\left(\frac{F(r)-F(\mu)}{1-F(\mu)}+F(\mu)^{\Delta_m+1}\frac{1-F(r)}{1-F(\mu)}\right)^Ndr\right).\label{et2}
\end{align}
\end{proof}
\begin{lemma}\label{event2}
Under event \( E_{(T_m+\Delta_m+1,T_{m+1}-1)} \) for \( m = 1,\dots,M \), if \( T_{m+1} > T_m+\Delta_m+1 \), an option yielding a reward greater than \(\mu\) is identified and exploited by all myopic agents during the interval \( \{T_m+\Delta_m+1,\dots,T_{m+1}-1\} \). If \( T_{m+1} = T_m+\Delta_m+1 \), this event has probability zero and cannot occur. The expected maximum reward discovered after this interval is therefore given by:
\begin{align}
    \mathbb{P}(E_{(T_m+\Delta_m+1,T_{m+1}-1)})\mathbb{E}[\max_{n\in\mathcal{N}}m_{n}(T_{m+1}-1)|E_{(T_m+\Delta_m+1,T_{m+1}-1)}] = &(F(\mu)^{N(T_m+\Delta_m+1)}-F(\mu)^{NT_{m+1}}) \int_{\mu}^{1} r d\frac{F(r)^N-F(\mu)^N}{1-F(\mu)^N},\label{tii}
\end{align}
which equals 0 when \( T_{m+1} = T_m + \Delta_m + 1 \). Therefore, this expression applies for any \( T_m \), \( \Delta_m \), and \( T_{m+1} \) defined by the mechanism.
\end{lemma}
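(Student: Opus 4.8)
The plan is to mirror the event-decomposition argument used for the full-communication case in Appendix \ref{appendix1}, but now to track the boundary between the no-communication window $\{T_m,\ldots,T_m+\Delta_m\}$ and the subsequent communication interval $\{T_m+\Delta_m+1,\ldots,T_{m+1}-1\}$. First I would write the event $E_{(T_m+\Delta_m+1,T_{m+1}-1)}$ as a set difference $\bar{E}_{T_m+\Delta_m}\setminus \bar{E}_{T_{m+1}-1}$: it requires that no agent has found an option with reward exceeding $\mu$ up to and including the action at time $T_m+\Delta_m$ (which the footnote of Lemma \ref{event1} places under no-communication), while such an option \emph{is} first discovered somewhere in the communication interval. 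Since $T_{m+1}>T_m+\Delta_m$ forces $\bar{E}_{T_{m+1}-1}\subseteq\bar{E}_{T_m+\Delta_m}$, the probability is the telescoping difference $\mathbb{P}(\bar{E}_{T_m+\Delta_m})-\mathbb{P}(\bar{E}_{T_{m+1}-1})$, reducing the task to evaluating $\mathbb{P}(\bar{E}_t)$ at the two endpoints.

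The key step is to argue that $\mathbb{P}(\bar{E}_t)=F(\mu)^{N(t+1)}$ \emph{regardless} of the communication schedule over slots $0,\ldots,t$. The justification is that on the event $\bar{E}_t$ every agent's private maximum reward stays strictly below the threshold $\mu$ throughout, so by the myopic threshold strategy $u_\tau=\mu$ of Lemma \ref{myopic_S} each of the $N$ agents explores a fresh option in every slot $\tau\in\{0,\ldots,t\}$, whether or not sharing took place; withholding information can only reduce an agent's knowledge and so never causes her to stop exploring while her own best reward remains below $\mu$. Hence $\bar{E}_t$ is exactly the event that all $N(t+1)$ such explorations draw a reward below $\mu$, giving $F(\mu)^{N(t+1)}$. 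Taking $t=T_m+\Delta_m$ and $t=T_{m+1}-1$ and subtracting yields $\mathbb{P}(E_{(T_m+\Delta_m+1,T_{m+1}-1)})=F(\mu)^{N(T_m+\Delta_m+1)}-F(\mu)^{NT_{m+1}}$, exactly the prefactor claimed in (\ref{tii}).

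Next I would compute $\mathbb{E}[\max_{n\in\mathcal{N}}m_n(T_{m+1}-1)\mid E_{(T_m+\Delta_m+1,T_{m+1}-1)}]$. Because communication is active throughout this interval, the moment the first option above $\mu$ is discovered, say at slot $\tau$, it is shared and all agents converge, so the community maximum equals the maximum of the $N$ fresh draws taken at $\tau$, conditioned on at least one exceeding $\mu$. As in Appendix \ref{appendix1}, this conditional maximum has CDF $\frac{F(r)^N-F(\mu)^N}{1-F(\mu)^N}$ on $(\mu,1]$ and expectation $\int_\mu^1 r\,d\frac{F(r)^N-F(\mu)^N}{1-F(\mu)^N}$. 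Crucially this value is independent of which slot $\tau$ the crossing occurs at, so averaging over $\tau$ leaves it unchanged; multiplying by the probability from the previous step gives (\ref{tii}). Finally I would dispose of the degenerate case $T_{m+1}=T_m+\Delta_m+1$, where the interval is empty and the event cannot occur, and check that the prefactor $F(\mu)^{N(T_m+\Delta_m+1)}-F(\mu)^{NT_{m+1}}$ collapses to $0$, so the stated identity holds for every admissible $T_m,\Delta_m,T_{m+1}$.

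The main obstacle I anticipate is the schedule-insensitivity claim $\mathbb{P}(\bar{E}_t)=F(\mu)^{N(t+1)}$ in the second paragraph. Unlike the full-communication setting of Appendix \ref{appendix1}, here some slots fall inside no-communication windows, so one must argue that conditioning on $\bar{E}_t$ pins down the exploration count to exactly $N(t+1)$ irrespective of sharing; the whole computation hinges on the fact that the threshold strategy forces every agent to keep exploring precisely on the set of histories where no one has yet crossed $\mu$, which is the only set relevant to $\bar{E}_t$.
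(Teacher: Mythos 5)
Your proposal is correct and follows essentially the same route as the paper's proof: both decompose the event by the first slot $\tau$ at which some agent's draw exceeds $\mu$, observe that the conditional expected community maximum given a first crossing at $\tau$ is the $\tau$-independent quantity $\int_\mu^1 r\,d\frac{F(r)^N-F(\mu)^N}{1-F(\mu)^N}$, and telescope the probabilities (your difference $\mathbb{P}(\bar{E}_{T_m+\Delta_m})-\mathbb{P}(\bar{E}_{T_{m+1}-1})$ is exactly the paper's sum $\sum_\tau\mathbb{P}(E_{(\tau,\tau)})$ after telescoping). Your explicit justification that $\mathbb{P}(\bar{E}_t)=F(\mu)^{N(t+1)}$ holds regardless of the communication schedule is a point the paper uses implicitly; making it explicit is a small but welcome addition, not a deviation.
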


\begin{proof}
In communication time window $\in\{T_m+\Delta_m+1,,\ldots,T_{m+1}-1\}$, all agents will immediately stop exploration at time $\tau$ under event $E_{(T_m+\Delta_m+1,T_{m+1}-1)}$. Then
    \begin{align*}
        \mathbb{P}(E_{(T_m+\Delta_m+1,T_{m+1}-1)})\mathbb{E}[\max_{n\in\mathcal{N}}m_{n}(T_{m+1}-1)|E_{(T_m+\Delta_m+1,T_{m+1}-1)}]
        =&\sum^{T_{m+1}-1}_{\tau=T_m+\Delta_m+1}\mathbb{P}(E_{(\tau,\tau)})\mathbb{E}[\max_{n\in\mathcal{N}}m_{n}(\tau)|E_{(\tau,\tau)}]\\
        =&\sum^{T_{m+1}-1}_{\tau=T_m+\Delta_m+1}\mathbb{P}(E_{(\tau,\tau)})\int^1_{\mu}dr\frac{F(r)^N-F(\mu)^N}{1-F(\mu)^N}\\
        =&(F(\mu)^{N(T_m+\Delta_m+1)}-F(\mu)^{NT_{m+1}})\int^1_{\mu}rd\frac{F(r)^N-F(\mu)^N}{1-F(\mu)N}.
    \end{align*}
\end{proof}

Under the communication restriction mechanism $\bar{\mathcal{M}}=\{T_1,\ldots,T_1+\Delta_1-1\}$, the total time slots can be segmented into three time windows as follows:
\begin{align}
    \{0,\ldots,T_1\}, \{T_1+1,\ldots,T_1+\Delta_1\}, \{T_1+\Delta_1+1,\ldots,T\}.\label{windows}
\end{align}

The expected social welfare \( v_{\bar{\mathcal{M}}}(t) \) under mechanism \( \bar{\mathcal{M}} \) at time \( t \) depends on which time window in (\ref{windows}) the time \( t \) belongs to. We analyze \( v_{\bar{\mathcal{M}}}(t) \) separately for each window. For \( t \in \{0,\dots,T_1-1\} \), the expected social welfare \( v_{\bar{\mathcal{M}}}(t) \) matches the welfare under full communication, \( v_{\emptyset}(t) \) in (\ref{fullv}), because both mechanisms share the same historical event space during this interval.

For $t\in\{T_1+1,\ldots,T_1+\Delta_1\}$, the historical events preceding $t\in\{T_1+1,\ldots,T_1+\Delta_1\}$ for agent $n$ are categorized as follows:
\begin{align}
    E_{(0,T_1-1)},E^n_{(T_1,t-1)},\bar{E}^n_{t-1},\label{partition1}
\end{align}
under which $v_{\bar{\mathcal{M}}}(t)$ for $t\in\{T_1+1,\ldots,T_1+\Delta_1\}$ is given by:
\begin{align}
    v_{\bar{\mathcal{M}}}(t)=\mathbb{P}(E_{(0,T_1-1)})\mathbb{E}[\sum_{n=1}^Nr_{a_n(t)}|E_{(0,T_1-1)}]+\sum_{n=1}^N\mathbb{P}(E^n_{(T_1,t-1)})\mathbb{E}[r_{a_n(t)}|E^n_{(T_1,t-1)}]+\sum_{n=1}^N\mathbb{P}(\bar{E}^n_{t-1})\mathbb{E}[r_{a_n(t)}|\bar{E}^n_{t-1}],\label{vbarmzong}
\end{align}
where the component $\mathbb{P}(E_{(0,T_1-1)})\mathbb{E}[\sum_{n=1}^Nr_{a_n(t)}|E_{(0,T_1-1)}]$, signifying the expected reward for long-term exploitation under event $E_{(0,T_1-1)}$, can be directly generalized from (\ref{eone}) in Appendix \ref{appendix1} as:
\begin{align}
    \mathbb{P}(E_{(0,T_1-1)})\mathbb{E}[\sum_{n=1}^Nr_{a_n(t)}|E_{(0,T_1-1)}]=N(1-F(\mu)^{NT_1})\int^1_{\mu}rd\frac{F(r)^N-F(\mu)^{N}}{1-F(\mu)^{N}}.\label{genefull}
\end{align}
Regarding the component $\mathbb{P}(E^n_{(T_1,t-1)})\mathbb{E}[r_{a_n(t)}|E^n_{(T_1,t-1)}]$ in (\ref{vbarmzong}), the event $E^n_{(T_1,t-1)}$ indicates that agent $n$ independently discovers an option with a reward exceeding $\mu$, leading to the following expected reward for exploitation:
\begin{align}
    \mathbb{P}(E^n_{(T_1,t-1)})\mathbb{E}[r_{a_n(t)}|E^n_{(T_1,t-1)}]=\mathbb{P}(E^n_{(T_1,t-1)})\int^1_{\mu}rd\frac{F(r)-F(\mu)}{1-F(\mu)}=F(\mu)^{NT_1}(1-F(\mu)^{t-T_1})\int^1_{\mu}rd\frac{F(r)-F(\mu)}{1-F(\mu)}.\label{T_mt}
\end{align}
Conversely, under the event \(\bar{E}_{t-1}^n\), agent \(n\) must choose exploration at time \(t\), yielding an expected reward of \(F(\mu)^{N t}\mu\). Combining these three components, we compute \( v_{\bar{\mathcal{M}}}(t) \) for \( t \in \{T_1+1,\dots,T_1+\Delta_1\} \) as follows:
\begin{align*}
    v_{\bar{\mathcal{M}}}(t)=&N(1-F(\mu)^{NT_1})\int^1_{\mu}rd\frac{F(r)^N-F(\mu)^{N}}{1-F(\mu)^{N}}+NF(\mu)^{NT_1}(1-F(\mu)^{t-T_1})\int^1_{\mu}rd\frac{F(r)-F(\mu)}{1-F(\mu)}+NF(\mu)^{NT_1+t-T_1}\mu\\
    =&v_{\emptyset}(t)+NF(\mu)^{NT_1}\int^1_{\mu}\left(\frac{F(r)^N-F(\mu)^N}{1-F(\mu)^N}+F(\mu)^{N(t-T_1)}\frac{1-F(r)^N}{1-F(\mu)^N}\right)-\left(\frac{F(r)-F(\mu)}{1-F(\mu)}+F(\mu)^{t-T_1}\frac{1-F(r)}{1-F(\mu)}\right)dr\\
    =&v_{\emptyset}(t)-NF(\mu)^{NT_1}x_{t-T_1},
\end{align*}
where $x_{t-T_1}$ is given in (\ref{xi}).

Next, we examine \( v_{\bar{\mathcal{M}}}(t) \) for 
\[
t \in \{T_1+\Delta_1+1,\ldots,T\},
\]
where the historical events preceding \( t \) are categorized as follows:
\[
\begin{aligned}
    &E_{(0,T_1-1)},\; E_{(T_1,T_1+\Delta_1)},\; E_{(T_1+\Delta_1+1,t-1)},\; \bar{E}_{t-1},\quad \text{if }t>T_1+\Delta_1+1,\\[1em]
    &E_{(0,T_1-1)},\; E_{(T_1,T_1+\Delta_1)},\; \bar{E}_{t-1},\quad \text{if }t=T_1+\Delta_1+1,
\end{aligned}
\]
where the expected reward for exploitation associated with \( E_{(0,T_1-1)} \),
\(
\mathbb{P}(E_{(0,T_1-1)})\mathbb{E}[r_{a_n(t)} \mid E_{(0,T_1-1)}],
\)
is given in (\ref{genefull}). Similarly, the expected reward for exploitation related to \( E_{(T_1,T_1+\Delta_1)} \),
\(
\mathbb{P}(E_{(T_1,T_1+\Delta_1)})\mathbb{E}[r_{a_n(t)} \mid E_{(T_1,T_1+\Delta_1)}],
\)
is provided in (\ref{ti}) of Lemma \ref{event1}. Additionally, conditioned on the event \( \bar{E}_{t-1} \), the expected reward at time \( t \) is the exploration reward \( \mu \), which gives
\[
\mathbb{P}(\bar{E}_{t-1})\mathbb{E}[r_{a_n(t)} \mid \bar{E}_{t-1}] = F(\mu)^{Nt}\mu.
\]
It remains to assess the expected reward for exploitation under the event \( E_{(T_1+\Delta_1+1,t-1)} \). This can be directly generalized from (\ref{tii}) in Lemma \ref{event2} (since \( t>T_1+\Delta_1+1 \)) as follows:
\begin{align}
    \mathbb{P}(E_{(T_1+\Delta_1+1,t-1)})\mathbb{E}[r_{a_n(t)} \mid E_{(T_1+\Delta_1+1,t-1)}] 
    = \bigl(F(\mu)^{N(T_1+\Delta_1+1)}-F(\mu)^{Nt}\bigr)
    \int^1_{\mu} r\,d\frac{F(r)^N-F(\mu)^N}{1-F(\mu)^N},\label{full222}
\end{align}
By combining (\ref{genefull}), (\ref{ti}) with \( T_m=T_1 \) and \( \Delta_m=\Delta_1 \), \( F(\mu)^{Nt}\mu \), and (\ref{full222}), we derive the final form of \( v_{\bar{\mathcal{M}}}(t) \) for 
\(
t \in \{T_1+\Delta_1+1,\ldots,T\}.
\)
\begin{align*}
    v_{\bar{\mathcal{M}}}(t)
    =&N(1-F(\mu)^{NT_1})\int^1_{\mu}rd\frac{F(r)^N-F(\mu)^{N}}{1-F(\mu)^{N}}+NF(\mu)^{NT_1}\left(1-\mu F(\mu)^{N(\Delta_1+1)}-\int^1_{\mu}\left(\frac{F(r)-F(\mu)}{1-F(\mu)}+F(\mu)^{\Delta_1+1}\frac{1-F(r)}{1-F(\mu)}\right)^Ndr\right)\\
    &+N(F(\mu)^{N(T_1+\Delta_1+1)}-F(\mu)^{Nt})\int^1_{\mu}rd\frac{F(r)^N-F(\mu)^N}{1-F(\mu)^N}+NF(\mu)^{Nt}\\
    =&v_{\emptyset}(t)+NF(\mu)^{NT_1}\int^1_{\mu}\left(\frac{F(r)^N-F(\mu)^N}{1-F(\mu)^N}+F(\mu)^{N(\Delta_1+1)}\frac{1-F(r)^N}{1-F(\mu)^N}\right)-\left(\frac{F(r)-F(\mu)}{1-F(\mu)}+F(\mu)^{\Delta_1+1}\frac{1-F(r)}{1-F(\mu)}\right)^Ndr\\
    =&v_{\emptyset}(t)+NF(\mu)^{NT_1}y_{\Delta_1+1}.
\end{align*}
where $y_{{\Delta_1}+1}$ is given in (\ref{yi}). Integrate $v_{\bar{\mathcal{M}}}(t)$ for $t$ in different time window, we have
\begin{align*}
    v_{\bar{\mathcal{M}}}(t)=\begin{cases}
        v_{\emptyset}(t),&\text{if } t=1,\ldots,T_1,\\
        v_{\emptyset}(t)-NF(\mu)^{NT_1}x_{t-T_1},&\text{if } t=T_1+1,\ldots,T_1+\Delta_1,\\
        v_{\emptyset}(t)+NF(\mu)^{NT_1}y_{\Delta_1+1},&\text{if } t=T_1+\Delta_1+1,\ldots,T.
    \end{cases}
\end{align*}
Then, the social welfare for total time horizon $V_{\bar{\mathcal{M}}}$ is given by:
\begin{align*}
    V_{\bar{\mathcal{M}}}&=\sum^{T_1}_{t=0}v_{\bar{\mathcal{M}}}(t)+\sum^{T_1+\Delta_1}_{t=T_1+1}v_{\bar{\mathcal{M}}}(t)+\sum^{T}_{t=T_{1}+\Delta_1+1}v_{\bar{\mathcal{M}}}(t)=V_{\emptyset}+N(T-T_1-\Delta_1)F(\mu)^{NT_1}y_{\Delta_1+1}-NF(\mu)^{NT_1}\sum^{\Delta_1}_{t=1}x_{t},
\end{align*}
which completes the proof of (\ref{specialM}). 

Next, we consider the expected reward $v_{{\mathcal{M}}}(t)$ at time $t$ for general communication restriction mechanism $\mathcal{M}=\cup_{m=1}^M\{T_m,\ldots,T_m+\Delta_m-1\}$. Generalizing from (\ref{partition1}), when \(t\) falls within \(\{T_m,\ldots,T_m+\Delta_m\}\), the historical events can be partitioned as follows:
\begin{align*}
    E_{(0,T_m-1)}, E_{(T_m,T_m+\Delta_m)},E_{(T_m+\Delta_m+1,T_{m+1}-1)},\ldots, E^n_{(T_m,t-1)},\bar{E}^n_{t-1},
\end{align*}
under which $v_{{\mathcal{M}}}(t)$ for $t\in\{T_m,\ldots,T_m+\Delta_m\}$ is
\begin{align}
    v_{\mathcal{M}}(t)=&N\mathbb{P}(E_{(0,T_1-1)})\mathbb{E}[r_{a_n(t)}|E_{(0,T_1-1)}] + \sum^{m-1}_{i=1}N\mathbb{P}(E_{(T_i,T_i+\Delta_1)})\mathbb{E}[r_{a_n(t)}|E_{(T_i,T_i+\Delta_1)}] \nonumber\\
    &+ \sum^{m-1}_{i=1}N\mathbb{P}(E_{(T_i+\Delta_i+1,T_{i+1}-1)})\mathbb{E}[r_{a_n(t)}|E_{(T_i+\Delta_i+1,T_{i+1}-1)})] + \sum^N_{n=1}(\mathbb{P}(E^n_{(T_m,t-1)})\mathbb{E}[r_{a_n(t)}|E^n_{(T_m,t-1)}] + \mathbb{P}(\bar{E}^n_{t-1})\mathbb{E}[r_{a_n(t)}|\bar{E}^n_{t-1}]), \label{c}
\end{align}
where each part can be directly generalized from the derivation of $v_{\bar{\mathcal{M}}}(t)$. Then
\begin{align}
    v_{\mathcal{M}}(t)=&N(1-F(\mu)^{NT_1})\int^1_{\mu}rd\frac{F(r)^N-F(\mu)^N}{1-F(\mu)^N}+NF(\mu)^{NT_m}(1-F(\mu)^{t-T_m})\int^1_{\mu}rd\frac{F(r)-F(\mu)}{1-F(\mu)}+F(\mu)^{NT_m+t-T_m}\mu\nonumber\\
    &+N\sum^{m-1}_{i=1}F(\mu)^{NT_i}\left(1-\mu F(\mu)^{N(\Delta_i+1)}-\int^1_{\mu}\left(\frac{F(r)-F(\mu)}{1-F(\mu)}+F(\mu)^{\Delta_i+1}\frac{1-F(r)}{1-F(\mu)}\right)^Ndr\right)\nonumber\\
    &+N\sum^{m-1}_{i=1}(F(\mu)^{N(T_i+\Delta_i+1)}-F(\mu)^{NT_{i+1}})\int^1_{\mu}rd\frac{(F(r)^N-F(\mu)^N)}{1-F(\mu)N}\nonumber\\
    =&N(1-F(\mu)^{NT_m})\int^1_{\mu}rd\frac{F(r)^N-F(\mu)^N}{1-F(\mu)^N}+NF(\mu)^{NT_m}(1-F(\mu)^{t-T_m})\int^1_{\mu}rd\frac{F(r)-F(\mu)}{1-F(\mu)}+F(\mu)^{NT_m+t-T_m}\mu\nonumber\\
    &+N\sum^{m-1}_{i=1}F(\mu)^{NT_i}\left(1-\mu F(\mu)^{N(\Delta_i+1)}-\int^1_{\mu}\left(\frac{F(r)-F(\mu)}{1-F(\mu)}+F(\mu)^{\Delta_i+1}\frac{1-F(r)}{1-F(\mu)}\right)^Ndr\right)\nonumber\\
    &+N\sum^{m-2}_{i=1}(F(\mu)^{N(T_i+\Delta_i+1)}-F(\mu)^{NT_{i+1}})\left(1-\int^1_{\mu}\frac{(F(r)^N-F(\mu)^N)}{1-F(\mu)N}\right)dr\nonumber\\
    &+N(F(\mu)^{N(T_{m-1}+\Delta_{m-1}+1)}-F(\mu)^{NT_1})\left(1-\int^1_{\mu}\frac{(F(r)^N-F(\mu)^N)}{1-F(\mu)N}\right)dr.\label{pp}
\end{align}
The first line in (\ref{pp}) is: 
\begin{align}
    &N(1-F(\mu)^{NT_m})\int^1_{\mu}rd\frac{F(r)^N-F(\mu)^N}{1-F(\mu)^N}+NF(\mu)^{NT_m}(1-F(\mu)^{t-T_m})\int^1_{\mu}rd\frac{F(r)-F(\mu)}{1-F(\mu)}+NF(\mu)^{NT_m+t-T_m}\mu\nonumber\\
    =&v_{\emptyset}(t)-N\int^1_{\mu}rd\frac{F(r)^N-F(\mu)^N}{1-F(\mu)^N}+NF(\mu)^{Nt}\int^1_{\mu}\frac{1-F(r)^N}{1-F(\mu)^N}dr+\nonumber\\
    &N(1-F(\mu)^{NT_m})\int^1_{\mu}rd\frac{F(r)^N-F(\mu)^N}{1-F(\mu)^N}+NF(\mu)^{NT_m}(1-F(\mu)^{t-T_m})\int^1_{\mu}rd\frac{F(r)-F(\mu)}{1-F(\mu)}+F(\mu)^{NT_m+t-T_m}\mu\nonumber\\
    =&v_{\emptyset}(t)+NF(\mu)^{NT_m}\left(-\int^1_{\mu}\frac{F(r)-F(\mu)}{1-F(\mu)}dr+\int^1_{\mu}\frac{F(r)^N-F(\mu)^N}{1-F(\mu)^N}dr\right)\nonumber\\
    &+NF(\mu)^{NT_m}\left(F(\mu)^{N(t-T_m)}\int^1_{\mu}\frac{1-F(r)^N}{1-F(\mu)^N}dr-F(\mu)^{t-T_m}\int^1_{\mu}\frac{1-F(r)}{1-F(\mu)}dr\right)\nonumber\\
    =&v_{\emptyset}(t)-NF(\mu)^{NT_m}x_{t-T_m}.\label{agg1}
\end{align}
Then the remaining parts of (\ref{pp}) are
\begin{align}
&N\sum^{m-1}_{i=1}F(\mu)^{NT_i}\left(1-\mu F(\mu)^{N(\Delta_i+1)}-\int^1_{\mu}\left(\frac{F(r)-F(\mu)}{1-F(\mu)}+F(\mu)^{\Delta_i+1}\frac{1-F(r)}{1-F(\mu)}\right)^Ndr\right)\nonumber\\
    &+N\sum^{m-2}_{i=1}(F(\mu)^{N(T_i+\Delta_i+1)}-F(\mu)^{NT_{i+1}})\left(1-\int^1_{\mu}\frac{(F(r)^N-F(\mu)^N)}{1-F(\mu)^N}dr\right)+N(F(\mu)^{N(T_{m-1}+\Delta_{m-1}+1)}-F(\mu)^{NT_1})\left(1-\int^1_{\mu}\frac{(F(r)^N-F(\mu)^N)}{1-F(\mu)^N}dr\right)\nonumber\\
    =&N\sum^{m-1}_{i=1}F(\mu)^{NT_{i}}\left(\int^1_{\mu}\frac{F(r)^N-F(\mu)^N}{1-F(\mu)^N}+F(\mu)^{N(\Delta_i+1)}\frac{1-F(r)^N}{1-F(\mu)^N}-\left(\frac{F(r)-F(\mu)}{1-F(\mu)}+F(\mu)^{\Delta_i+1}\frac{1-F(r)}{1-F(\mu)}\right)^Ndr\right)\nonumber\\
    =&N\sum^{m-1}_{i=1}F(\mu)^{NT_{i}}y_{\Delta_i+1}.\label{agg2}
\end{align}
Finally, summing up (\ref{agg1}) and (\ref{agg2}), $v_{\mathcal{M}}(t)$ for $t\in\{T_m+1,\ldots,T_m+\Delta_m\}$ is
\begin{align}
    v_{\mathcal{M}}(t)=v_{\emptyset}(t)+N\sum^{m-1}_{i=1}F(\mu)^{NT_{i}}y_{\Delta_i+1}-NF(\mu)^{NT_m}x_{t-T_m}, t\in\{T_m+1,\ldots,T_m+\Delta_m\}, m=1,\ldots,M.
\end{align}

For $t\in\{T_m+\Delta_m+1,\ldots,T_{m+1}\}$, the historical events can be partitioned as follows:
\begin{align*}
    E_{(0,T_m-1)}, E_{(T_m,T_m+\Delta_m)},E_{(T_m+\Delta_m+1,T_{m+1}-1)},\ldots, E_{(T_m+\Delta_m+1,t-1)},\bar{E}_{t-1},
\end{align*}
under which $v_{{\mathcal{M}}}(t)$ for $t\in\{T_m+\Delta_m+1,\ldots,T_{m+1}\}$ is
\begin{align}
    &v_{\mathcal{M}}(t)\\
    =&N\mathbb{P}(E_{(0,T_1-1)})\mathbb{E}[r_{a_n(t)}|E_{(0,T_1-1)}] + \sum^{m}_{i=1}N\mathbb{P}(E_{(T_i,T_i+\Delta_1)})\mathbb{E}[r_{a_n(t)}|E_{(T_i,T_i+\Delta_1)}] + N\sum^{m-1}_{i=1}\mathbb{P}(E_{(T_i+\Delta_i+1,T_{i+1}-1)})\mathbb{E}[r_{a_n(t)}|E_{(T_i+\Delta_i+1,T_{i+1}-1)})] \nonumber\\
    &+ N\mathbb{P}(E_{(T_m+\Delta_m+1,t-1)})\mathbb{E}[r_{a_n(t)}|E_{(T_m+\Delta_m+1,t-1)}] + \mathbb{P}(\bar{E}_{t-1})\mathbb{E}[r_{a_n(t)}|\bar{E}_{t-1}]\nonumber\\
    =&N(1-F(\mu)^{NT_1})\int^1_{\mu}rd\frac{F(r)^N-F(\mu)^N}{1-F(\mu)^N}+N(F(\mu)^{N(T_m+\Delta_m+1)}-F(\mu)^{Nt})\int^1_{\mu}rd\frac{F(r)^N-F(\mu)^N}{1-F(\mu)N}+F(\mu)^{Nt}\mu\nonumber\\
    &+N\sum^{m}_{i=1}F(\mu)^{NT_i}\left(1-\mu F(\mu)^{N(\Delta_i+1)}-\int^1_{\mu}\left(\frac{F(r)-F(\mu)}{1-F(\mu)}+F(\mu)^{\Delta_i+1}\frac{1-F(r)}{1-F(\mu)}\right)^Ndr\right)\nonumber\\
    &+N\sum^{m-1}_{i=1}(F(\mu)^{N(T_i+\Delta_i+1)}-F(\mu)^{NT_{i+1}})\int^1_{\mu}rd\frac{(F(r)^N-F(\mu)^N)}{1-F(\mu)N}\nonumber\\
    =&v_{\emptyset}(t)+N\sum^{m}_{i=1}F(\mu)^{N(T_i+\Delta_i+1)}\int^1_{\mu}\frac{1-F(r)^N}{1-F(\mu)^N}dr+N\sum^{m}_{i=1}F(\mu)^{NT_i}\left(\int^1_{\mu}\frac{F(r)^N-F(\mu)^N}{1-F(\mu)^N}dr-\int^1_{\mu}\left(\frac{F(r)-F(\mu)}{1-F(\mu)}+F(\mu)^{\Delta_i+1}\frac{1-F(r)}{1-F(\mu)}\right)^Ndr\right)\nonumber\\
    =&v_{\emptyset}(t)+N\sum^{m}_{i=1}F(\mu)^{NT_i}y_{\Delta_i+1},
\end{align}
For social welfare of total time horizon $V_{\mathcal{M}}$, we calculate it by summing up $v_{\mathcal{M}}(t)$ above for all time slots:
\begin{align}
    V_{\mathcal{M}}=&N\sum^{T_1}_{t=0}v_{\mathcal{M}}(t)+\sum_{m=1}^M\left(\sum^{T_m+\Delta_m}_{t=T_m+1}v_{\mathcal{M}}(t)+N\sum^{T_{m+1}}_{t=T_m+\Delta_m+1}v_{\mathcal{M}}(t)\right)\nonumber\\
    =&V_{\emptyset}+N\sum_{m=1}^M\left(\sum^{T_m+\Delta_m}_{t=T_m+1}\left(\sum^{m-1}_{i=1}F(\mu)^{NT_{i}}y_{\Delta_i+1}-F(\mu)^{NT_m}x_{t-T_m}\right)+\sum^{T_{m+1}}_{t=T_m+\Delta_m+1}\sum^{m}_{i=1}F(\mu)^{NT_i}y_{\Delta_i+1}\right)\nonumber\\
    =&V_{\emptyset}+N\sum_{m=1}^M(T-T_m-\Delta_m)F(\mu)^{NT_m}y_{\Delta_m+1}-N\sum_{m=1}^M\sum^{T_m+\Delta_m}_{t=T_m+1}F(\mu)^{NT_m}x_{t-T_m}\nonumber\\
    =&V_{\emptyset}+N\sum_{m=1}^MF(\mu)^{NT_m}\left(T-T_m-\Delta_m)y_{\Delta_m+1}-\sum^{\Delta_m}_{i=1}x_{i}\right),
\end{align}
which completes the proof of (\ref{VM}). 

Let $\tau$, in $\{1,\ldots,T+1\}$, denote the random variable representing exploration time by the myopic agents under communication restriction mechanism $\mathcal{M}=\cup_{m=1}^M\{T_m,\ldots,T_m+\Delta_m-1\}$ in the total process. Then the expectation of $\tau$ is
\begin{align*}
\mathbb{E}[\tau]=&\sum^{T}_{t=1}t\mathbb{P}(\tau=t)=\sum^{T}_{t=1}\mathbb{P}(\tau\geq t)=\sum^{T_1+1}_{t=1}\mathbb{P}(\tau\geq t)+\sum^M_{m=1}\sum^{T_m+\Delta_m+1}_{t=T_m+2}\mathbb{P}(\tau\geq t)+\sum^M_{m=1}\sum_{t=T_m+\Delta_m+2}^{T_{m+1}+1}\mathbb{P}(\tau\geq t)\\
=&\sum^{T_1+1}_{t=1}F(\mu)^{N(t-1)}+\sum^M_{m=1}\sum^{T_m+\Delta_m+1}_{t=T_m+2}F(\mu)^{NT_m+t-T_m-1}+\sum^M_{m=1}\sum_{t=T_m+\Delta_m+2}^{T_{m+1}+1}F(\mu)^{N(t-1)}\\
=&\sum^{T_1}_{t=0}F(\mu)^{Nt}+\sum^M_{m=1}\left(\sum^{\Delta_m}_{t=1}F(\mu)^{NT_m+t}+\sum_{t=T_m+\Delta_m+1}^{T_{m+1}}F(\mu)^{Nt}\right)
\end{align*}
which completes the proof of (\ref{M_exNum_R}).

\section{Proof of Theorem \ref{T3.6}}\label{Appendix15}
    Denote $\tilde{\mathcal{M}}^*_1=\{0,\ldots,\tilde{\Delta}^*_1-1\}$ as the solution returned by Algorithm \ref{A1}, and $\mathcal{M}^*=\{0,\ldots,\Delta_1^*-1\}\cup_{m=1}^{M-1}\{\sum^m_{i=1}(\Delta_i^*+1),\ldots,\sum^m_{i=1}(\Delta_i^*+1)+\Delta^*_{m+1}-1\}$ as the optimal solution of problem (\ref{OPT}). Recall that $\tilde{\mathcal{M}}^*_1$ is the best single communication restriction window mechanism, which provide us the following inequality
\begin{align}
    &\left((T-\sum_{i=1}^m\Delta_i^*-m+1)y_{\Delta_m^*+1}-\sum_{i=1}^{\Delta_m^*}x_i\right)\leq (T-\tilde{\Delta}^*_1)y_{\tilde{\Delta}^*_1+1}-\sum_{i=1}^{\tilde{\Delta}^*_1}x_i, \forall m\in\{1,\ldots,M\},\label{Inthe}
\end{align}
which is derived from
    \begin{align*}
        & (T-\sum_{i=1}^m\Delta_i^*-m+1)y_{\Delta_m^*+1}-\sum_{i=1}^{\Delta_m^*}x_i<(T-\Delta_m^*)y_{\Delta_m^*+1}-\sum_{i=1}^{\Delta_m^*}x_i<\left((T-\tilde{\Delta}^*_1)y_{\tilde{\Delta}^*_1+1}-\sum_{i=1}^{\tilde{\Delta}^*_1}x_i\right).
    \end{align*}
Then we have
\begin{align*}
    \frac{V_{\mathcal{M}^*}}{V_{\tilde{\mathcal{M}}^*_1}}&=\frac{\left((T-\Delta_1^*)y_{\Delta_1^*+1}-\sum_{i=1}^{\Delta_1^*}x_i\right)+\sum^M_{m=2}F(\mu)^{N\sum_{i=1}^{m-1}(\Delta_i^*+1)}\left((T-\sum_{i=1}^m\Delta_i^*-m+1)y_{\Delta_m^*+1}-\sum_{i=1}^{\Delta_m^*}x_i\right)}{\left((T-\tilde{\Delta}^*_1)y_{\tilde{\Delta}^*_1+1}-\sum_{i=1}^{\tilde{\Delta}^*_1}x_i\right)}\nonumber\\
    &<\frac{(1+\sum^M_{m=2}F(\mu)^{N\sum_{i=1}^{k-1}(\Delta_i^*+1)})\left((T-\tilde{\Delta}^*_1)y_{\tilde{\Delta}^*_1+1}-\sum_{i=1}^{\tilde{\Delta}^*_1}x_i\right)}{\left((T-\tilde{\Delta}^*_1)y_{\tilde{\Delta}_1+1}-\sum_{i=1}^{\tilde{\Delta}^*_1}x_i\right)}\nonumber\\
    &=1+\sum^M_{m=2}F(\mu)^{N\sum_{i=1}^{m-1}(\Delta_i^*+1)}\nonumber\\
    &\leq 1+\sum^{ T/2}_{m=2}F(\mu)^{2N(m-1)}\\
    &=\frac{1-F(\mu)^{TN}}{1-F(\mu)^{2N}},
\end{align*}
where the first inequality comes from (\ref{Inthe}), the second inequality comes from $\Delta^*_i\geq 1$ and the upper bound of $M$ in problem (\ref{pro2}). Then
\begin{align*}
    V_{\mathcal{M}^*}&<\frac{1-F(\mu)^{TN}}{1-F(\mu)^{2N}}V_{\tilde{\mathcal{M}}^*_1}\\
    V_{\mathcal{M}^*}\left(1-\frac{F(\mu)^{2N}-F(\mu)^{TN}}{1-F(\mu)^{TN}}\right)&<V_{\tilde{\mathcal{M}}^*_1},
\end{align*}
where $\left(1-\frac{F(\mu)^{2N}-F(\mu)^{TN}}{1-F(\mu)^{TN}}\right)$ is the approximate ratio. As $N\rightarrow\infty$, if $T$ is also large enough to satisfy condition (\ref{eq_lemmq3.2}), the approximate ratio will tend to 1.

\section{Proof for Lemma \ref{NonMyFull}}\label{appendix4}
We begin by establishing the monotonicity of \(\{u_t\}_{t=1}^T\) as stated in (\ref{utfull}). With this monotonicity in hand, we then verify the equations in (\ref{utfull}). Next, we derive the expression for \(P(r)\) in (\ref{belief}), and finally, we establish the expected number of time slots for exploration as given in (\ref{Non_exNum}).

First, note that \(u_T = \mu\) and \(u_t \geq \mu\) for every \(t \in [T]\). We then show that any threshold sequence \(\{u_t\}_{t=1}^{T-1}\) containing two elements \(u_i\) and \(u_j\) with \(i < j\) and \(u_i < u_j\) can be improved upon by an alternative strategy that yields a strictly higher expected reward.

The existence of \(u_i, u_j \in \{u_t\}_{t=1}^{T-1}\) with \(i < j\) and \(u_i < u_j\) implies that there is at least one time instance \(t \in \{1, \ldots, T-1\}\) where \(u_t < u_{t+1}\). Suppose that at time \(t\), the agent \(n\)'s maximum reward is \(m_n(t)\) satisfying \(u_t < m_n(t) < u_{t+1}\). According to the threshold strategy, the agent exploits at time \(t\) and explores at time \(t+1\), yielding a combined expected reward of \(m_n(t) + \mu\) over these two time slots. Under these actions, the expected maximum reward \(m_n(t+2)\) after time \(t+1\) is given by:
\begin{align}
    m_n(t+2)=\int^1_{m_n(t)} r\,dF(r) + F(m_n(t))\,m_n(t).\label{emm}
\end{align}
Alternatively, if the agent explores at time \(t\) and exploits at time \(t+1\), the expected reward from \(t\) to \(t+1\) becomes:
\[
\mu + \int^1_{m_n(t)} r\,dF(r) + F(m_n(t))\,m_n(t),
\]
which is strictly greater than \(m_n(t) + \mu\) under the original threshold strategy, because it leverages the possibility of obtaining a higher reward at \(t\). Moreover, the resulting \(m_n(t+2)\) remains as in (\ref{emm}), meaning that the expected reward after time \(t+1\) is unaffected by the switch. Therefore, a threshold strategy that includes \(u_i, u_j \in \{u_t\}_{t=1}^{T-1}\) with \(i < j\) and \(u_i < u_j\) cannot be optimal. Since \(\{u_t\}_{t=1}^T\) is the optimal solution for problem (\ref{nonmyoobj}) under the full communication policy, this completes the proof of the monotonicity of \(\{u_t\}_{t=1}^T\).

Next, we prove the equation of \(\{u_t\}_{t=1}^T\) in (\ref{utfull}). At each time \(t\), \(u_t\) represents the state \(m_n(t)\) at which agent \(n\) is indifferent between exploring and exploiting. Note that \(u_t > u_{t+1}\) implies that if \(m_n(t) = u_t > u_{t+1}\), then agent \(n\) will cease to explore from time \(t+1\) onward. Let \(R_t^n\) denote the reward for agent \(n\), which is a random variable with distribution \(F(\cdot)\), if agent \(n\) chooses to explore at time \(t\). Under this scenario, the long-term reward for agent \(n\) starting from time \(t\) when opting to explore is given by:
\begin{align*}
    \mathbb{E}\Biggl[\sum^T_{\tau=t}r_{a_n(\tau)} \,\Big|\, a_n(t)\in \mathcal{K}\setminus I_n(t),\, m_n(t)=u_t\Biggr]
    = \mu+\max\{R^n_t,m_n(t)\}(T-t)+\max\{R^n_t,m_n(t),\max_{i\neq n} m_i(T-1)\}.
\end{align*}
Recalling that \(P(\cdot)\) denotes the distribution of \(m_i(T-1)\), we can express this expectation as:
\begin{align}
    \mathbb{E}\Biggl[\sum^T_{\tau=t}r_{a_n(\tau)} \,\Big|\, a_n(t)\in \mathcal{K}\setminus I_n(t),\, m_n(t)=u_t\Biggr]
    =&\mu+\left(\int^1_{u_t}r\,dF(r)+u_tF(u_t)\right)(T-t-1)\nonumber\\
    &\quad+\int^1_{u_t}r\,dF(r)P(r)^{N-1}+u_tF(u_t)P(u_t)^{N-1}\nonumber\\
    =&\mu+\left(1-\int^1_{u_t}F(r)\,dr\right)(T-t-1)+1-\int^1_{u_t}F(r)P(r)^{N-1}\,dr,\label{ifexplore}
\end{align}
Similarly, if agent \(n\) opts for exploitation at time \(t\), her long-term reward is:
\begin{align}
    \mathbb{E}\Biggl[\sum^T_{\tau=t}r_{a_n(\tau)} \,\Big|\, a_n(t)=\bar{a}_n(t),\, m_n(t)=u_t\Biggr]
    =&\, m_n(t)(T-t)+\max\{m_n(t),\max_{i\neq n} m_i(T-1)\}\nonumber\\
    =&\, u_t(T-t)+\int^1_{u_t}r\,dP(r)^{N-1}+u_tP(u_t)^{N-1}\nonumber\\
    =&\, u_t(T-t)+1-\int^1_{u_t}P(r)^{N-1}\,dr.\label{ifexploit}
\end{align}
The value \(u_t\) for \(t=1,\ldots,T-1\) is then determined by equating the expressions in (\ref{ifexplore}) and (\ref{ifexploit}):
\begin{align}
    \mu+\left(1-\int^1_{u_t}F(r)\,dr\right)(T-t-1)+1-\int^1_{u_t}F(r)P(r)^{N-1}\,dr
    &= u_t(T-t)+1-\int^1_{u_t}P(r)^{N-1}\,dr\nonumber\\
    \int^1_{u_t}P(r)^{N-1}\bigl(1-F(r)\bigr)dr+(T-t-1)\int^1_{u_t}\bigl(1-F(r)\bigr)dr
    &= u_t-\mu,\quad t=1,\ldots,T-1.\label{utttt}
\end{align}
This completes the proof for the equation of \(\{u_t\}_{t=1}^T\) in (\ref{utfull}).

Next, we will prove the distribution $P(\cdot)$ of the maximum reward $m_n(t)$ for agent $n$ at time $t$, as outlined in (\ref{belief}), when agent $n$ adheres to the threshold strategy $\{u_t\}_{t=1}^T$ described in (\ref{utfull}). We modify the definition of $E_{(t_1,t_2)}$ in Appendix \ref{appendix1} here for subsequent analysis. Let $E_t$ denote the event where agent $n$'s maximum reward $m_n(t)$ exceeds the exploration threshold, $m_n(t) > u_t$, for the first time at time $t$. Conversely, $\bar{E}_t$ indicates that the agent does not meet the exploration cessation condition at any time $\tau = 1,\ldots,T_1$. Then, $P(\cdot)$ with $u_1\leq r<1$ is:
\begin{align*}
    &P(r|u_1\leq r<1)=\mathbb{P}(m_n(T-1)<r|u_1\leq r<1)=1-\mathbb{P}(m_n(T-1)\geq r|u_1\leq r<1).
\end{align*}   
We know that $E_1,\ldots,E_{T-1}$ and $\bar{E}_{T-1}$ form the complete event space for the historical events before $T-1$. Note that under the event $E_t$ and for $r > u_1$, the event $m_n(T-1) > r$ can only occur if $R^n_{t-1} > r$, and in fact, $m_n(T-1) = R^n_{t-1}$. Then
\begin{align*}
    &1-\mathbb{P}(m_n(T-1)\geq r|u_1\leq r<1)\\
    =&1-\sum^{T-1}_{t=1}\mathbb{P}(E_t)\mathbb{P}(m(T-1)\geq r|E_t,u_1\leq r<1)-\mathbb{P}(\bar{E}_{T-1})\mathbb{P}(m(T-1)\geq r|\bar{E}_{T-1},u_1\leq r<1)\\
    =&1-(\mathbb{P}(R^n_0>r)+\mathbb{P}(R^n_0<u_1,R^n_1>r)+\mathbb{P}(R^n_0,R^n_1<u_2,R^n_1>r)+\ldots+\mathbb{P}(R^n_0,\ldots,R^n_{T-2}<u_{T-1},R^n_{T-1}>r))=\\
    =&1-(1-F(r)+F(u_1)(1-F(r))+F(u_2)^2(1-F(r))+\ldots+F(u_{T-1})^{T-1}(1-F(r)))=\\
    =&F(r)-(1-F(r))\sum^{T-1}_{t=1}F(u_t)^{t}.
\end{align*}
Similarly, $P(r)$ for $u_t\leq r<u_{t-1}$ is:
\begin{align*}
    &P(r|u_t\leq r<u_{t-1})\\
    =&\mathbb{P}(m(T-1)<r|u_t\leq r<u_{t-1})=1-\mathbb{P}(m(T-1)\geq r|u_t\leq r<u_{t-1})\\
    =&1-\sum^{T-1}_{\tau=1}\mathbb{P}(E_\tau)\mathbb{P}(m(T-1)\geq r|E_\tau,u_t\leq r<u_{t-1})-\mathbb{P}(\bar{E}_{T-1})\mathbb{P}(m(T-1)\geq r|\bar{E}_T,u_t\leq r<u_{t-1}).\\
\end{align*}  
Note that, under the event $E_{\tau}$ for $\tau=1,\ldots,t-1$ with $u_t\leq r<u_{t-1}$, the event $m(T-1)\geq r$ happens with probability 1, since agent $n$ must stop with a reward greater than $u_{\tau}>r$. Then
\begin{align*}
=&1-\sum^{t-1}_{\tau=1}\mathbb{P}(E_\tau)-\sum_{\tau=t}^{T-1}\mathbb{P}(m(T-1)\geq r|E_t,u_t\leq r<u_{t-1})-\mathbb{P}(\bar{E}_{T-1})\mathbb{P}(m(T-1)\geq r|\bar{E}_T,u_t\leq r<u_{t-1})\\
    =&1-(1-F(u_1)+F(u_1)-F(u_2)^2+\ldots+F(u_{t-2})^{t-2}-F(u_{t-1})^{t-1})\\
    &-(F(u_{t-1})^{t-1}-F(r)^{t}+F(u_{t})^t(1-F(r))+F(u_{t+1})^{t+1}(1-F(r))+\ldots+F(u_{T-1})^{T-1}(1-F(r)))\\
    =&F(r)^{t}-(1-F(r))\sum^{T-1}_{i=t}F(u_i)^i.
\end{align*}
Finally, for $P(r) = \mathbb{P}(m(T-1) <r)$ when $r < u_{T-1}$, the event $m(T-1) < r$ occurs only if all explored rewards from time 0 to $T-1$ are less than $r$. Therefore, $P(r) = F(r)^T$ for $r < u_{T-1}$. Then we complete the proof for (\ref{belief}).

Let $\tau$, in $\{1,\ldots,T+1\}$, denote the random variable representing exploration time by non-myopic agents under full communication policy in the total process. Then expectation of $\tau$ is
\begin{align*}
\mathbb{E}[\tau]=\sum^{T+1}_{t=1}t\mathbb{P}(\tau=t)=\sum^{T+1}_{t=1}\mathbb{P}(\tau\geq t)=1+\sum^{T-1}_{t=1}F(u_t)^{t}+P(u_T)^N=1+\sum^{T-1}_{t=1}F(u_t)^{t}+F(u_T)^{NT}
\end{align*}
which completes the proof of (\ref{Non_exNum}).

\section{Proof of Proposition \ref{L4}}\label{Appendix12}
First, monotonicity within the sequences $\{u_t(T_1)\}_{t=1}^{T_1}$ and $\{u_t(T_1)\}_{T_1+1}^{T}$ can be established analogously by the procedure detailed in Appendix \ref{appendix4}, since agents do not communicate prior to or after time \(T_1\). However, the relative ordering between $u_t(T_1)$ for \(t \leq T_1\) and for \(t > T_1\) cannot be demonstrated similarly, owing to the information exchange occurring precisely at time \(T_1\).

Next, we derive the equations for the threshold sequence $\{u_t(T_1)\}_{t=1}^{T_1}$ in (\ref{Mt<T_1+1}) and (\ref{singlethreshold}), and for the total social welfare $V_{\mathcal{M}_2}$ in (\ref{eqVN}), under the communication restriction mechanism $\mathcal{M}_2=[T]\setminus\{T_1\}$. The distribution $P(\cdot)$ for the maximum reward $m_n(T_1)$ at communication time $T_1$ follows directly from the analogous distribution proof for $m_n(T-1)$ in Appendix \ref{appendix4} by substituting $T_1$ for $T-1$. Thus, we omit it here.

Considering the threshold sequence $u_t(T_1)$ for $t > T_1$ as in equation (\ref{singlethreshold}), we observe it reduces to the single-agent exploration threshold, independent of $T_1$, denoted as $\{\bar{u}_t\}_{t=1}^T$. Specifically, we have $\{u_t(T_1)\}_{t=T_1+1}^T = \{\bar{u}_t\}_{t=T_1+1}^T$. Recall from Appendix \ref{appendix4} that $R_t^n$ represents the exploration reward agent $n$ receives at time $t$, drawn from distribution $F(\cdot)$ when choosing exploration. When the agent's maximum reward $m_n(t)$ equals the threshold $\bar{u}_t$, she is indifferent between exploration and exploitation. If she chooses exploration, her expected long-term reward (\ref{nonmyoobj}) becomes
\begin{align}
    \mu + \mathbb{E}[\max\{R^n_t,\bar{u}_t\}] (T - t) = \mu + \left(1 - \int^1_{\bar{u}_t} F(r) dr\right)(T - t).\label{if1}
\end{align}
If instead she selects exploitation, her expected long-term reward (\ref{nonmyoobj}) equals $\bar{u}_t(T - t + 1)$. Equating these two alternatives yields the threshold $\bar{u}_t$ equation for all $t\in\{1,\ldots,T\}$:
\begin{align*}
    \bar{u}_t - \mu &= (T - t)\left(\int^1_{\bar{u}_t} 1 - F(r)\,dr\right).
\end{align*}

Now, we derive $u_t(T_1)$ for $t\in\{1,\ldots,T_1\}$, corresponding to equations (\ref{Mt<T_1+1}). When $u_t(T_1) > u_{T_1+1}(T_1)$, fulfilling the exploration-stopping condition at time $t$ also implies fulfillment in the subsequent period $\{T_1,\ldots,T\}$. Hence, in this scenario, the derivation of $u_t(T_1)$ parallels the threshold derivation (\ref{utfull}) under full communication policy outlined in Appendix \ref{appendix4}.

Conversely, if \( u_t(T_1) < u_{T_1+1}(T_1) \), it's possible that \( u_{T_1+1}(T_1) > m_n(t) > u_t(T_1) \), indicating the agent may stop exploration at time \( t \) but resume after time \( T_1+1 \). Suppose threshold \( u_t(T_1) \) lies within the interval \([\bar{u}_{T_1+k+1}, \bar{u}_{T_1+k}]\). In this case, event \( E^n_{T_1+\tau} \), for some \(1 \leq \tau \leq k+1\), occurs with positive probability. Given this setting, if the agent’s maximum reward at time $t$ equals $u_t(T_1)$ and she chooses to explore, her expected long-term reward (\ref{nonmyoobj}) can be expressed as:
\begin{align}
    &\sum^T_{\tau=t}\mathbb{E}[r_{a_n(\tau)}|a_n(t)\in \mathcal{K}\setminus I_n(t), m_n(t)=u_t(T_1)]\nonumber\\
    =&\sum^{T_1}_{\tau=t}\mathbb{E}[r_{a_n(\tau)}|a_n(t)\in \mathcal{K}\setminus I_n(t),m_n(t)=u_t(T_1)]+\sum^{T}_{\tau=T_1+1}\mathbb{E}[r_{a_n(\tau)}|a_n(t)\in \mathcal{K}\setminus I_n(t),m_n(t)=u_t(T_1)]\nonumber\\
    =&\mu+\mathbb{E}[\max\{R^n_t,u_t(T_1)\}](T_1-t)+\mathbb{P}(E_{T_1+1})\mathbb{E}[\max\{R^n_t,\max_{i\neq n} m_i(T_1)\}|E_{T_1+1}](T-T_1)\nonumber\\
    &+\sum^{k-1}_{j=1}\mathbb{P}(E_{T_1+j+1})\bigg(\mu j+\mathbb{E}[\max\{R^n_t, u_t(T_1), \max_{i \neq n} m_i(T_1),\max_{\tau=1,\ldots,j}R^n_{T_1+\tau}\}|E_{T_1+j+1}](T-T_1-j)\bigg)\nonumber\\
    &+\mathbb{P}(E_{T_1+k+1})\bigg(\mu k+\mathbb{E}[\max\{R^n_t, u_t(T_1), \max_{i \neq n} m_i(T_1),\max_{\tau=1,\ldots,k}R^n_{T_1+\tau}\}|E_{T_1+k+1}](T-T_1-k)\bigg).\label{part4}
\end{align}
Since the expressions differ significantly by line, we proceed by evaluating each term separately. The first line can be immediately obtained from equation (\ref{if1}), as follows:
\begin{align}
   \mu+\left(1-\int^1_{u_t(T_1)}F(r)dr\right)(T_1-t).\label{lll1}
\end{align}
Then the second line is
\begin{align}
    \int^1_{\bar{u}_{T_1+1}}rdP(r)^{N-1}F(r)(T-T_1)=\bigg(1-\bar{u}_{T_1+1}P(\bar{u}_{T_1+1})^{N-1}F(\bar{u}_{T_1+1})-\int^1_{\bar{u}_{T_1+1}}P(r)^{N-1}F(r)dr\bigg)(T-T_1).\label{lll2}
\end{align}
The third line is
\begin{align}
&\sum^{k-1}_{j=1}\left[\mu j\bigg(F(\bar{u}_{T_1+j})\cdot P(\bar{u}_{T_1+j})^{N-1}F(\bar{u}_{T_1+j})^{j-1}-F(\bar{u}_{T_1+j+1})\cdot P(\bar{u}_{T_1+j+1})^{N-1}F(\bar{u}_{T_1+j+1})^{j}\bigg)\right.\nonumber\\
    &\left.+\bigg(\int^{\bar{u}_{T_1+j}}_{\bar{u}_{T_1+j+1}}rdP(r)^{N-1}F(r)\cdot F(r)^{j}+F(\bar{u}_{T_1+j})P(\bar{u}_{T_1+j})^{N-1}\cdot F(\bar{u}_{T_1+j})^{j-1}\int^1_{\bar{u}_{T_1+j}}rdF(r)\bigg)(T-T_1-j)\right]\nonumber\\
    =&\sum^{k-1}_{j=1}\left[\mu j\bigg(F(\bar{u}_{T_1+j})^jP(\bar{u}_{T_1+j})^{N-1}-F(\bar{u}_{T_1+j+1})^{j+1} P(\bar{u}_{T_1+j+1})^{N-1}\bigg)\right.\nonumber\\
    &+F(\bar{u}_{T_1+j})^jP(\bar{u}_{T_1+j})^{N-1}\bigg(1-\int^1_{\bar{u}_{T_1+j}}F(r)dr\bigg)(T-T_1-j)\nonumber\\
    &\left.-\bigg(\bar{u}_{T_1+j+1}P(\bar{u}_{T_1+j+1})^{N-1} F(\bar{u}_{T_1+j})^{j+1}+\int^{\bar{u}_{T_1+j}}_{\bar{u}_{T_1+j+1}}P(r)^{N-1}F(r)^{j+1}dr\bigg)(T-T_1-j)\right].\nonumber\\
\end{align}
Replace the bracket $\bigg(1-\int^1_{\bar{u}_{T_1+j}}F(r)dr\bigg)(T-T_1-j)$ in this equation by $(\bar{u}_{T_1+j}(T-T_1-j+1)-\mu)$ derived from (\ref{singlethreshold}), it follows that
\begin{align}
    =&\sum^{k-1}_{j=1}\left[\mu j\bigg(F(\bar{u}_{T_1+j})^jP(\bar{u}_{T_1+j})^{N-1}-F(\bar{u}_{T_1+j+1})^{j+1} P(\bar{u}_{T_1+j+1})^{N-1}\bigg)\right.+F(\bar{u}_{T_1+j})^jP(\bar{u}_{T_1+j})^{N-1}\bigg(\bar{u}_{T_1+j}(T-T_1-j+1)-\mu\bigg)\nonumber\\
    &\left.-\bigg(\bar{u}_{T_1+j+1}P(\bar{u}_{T_1+j+1})^{N-1} F(\bar{u}_{T_1+j})^{j}+\int^{\bar{u}_{T_1+j}}_{\bar{u}_{T_1+j+1}}P(r)^{N-1}F(r)^{j+1}dr\bigg)(T-T_1-j)\right]\nonumber\\
=&-\mu F(\bar{u}_{T_1+k})^{k} P(\bar{u}_{T_1+k})^{N-1}(k-1)+\bar{u}_{T_1+1}F(\bar{u}_{T_1+1})^jP(\bar{u}_{T_1+1})^{N-1}(T-T_1)\nonumber\\
&-\bar{u}_{T_1+k}F(\bar{u}_{T_1+k})^{k}P(\bar{u}_{T_1+k})^{N-1}(T-T_1-k+1)-\sum^{k-1}_{j=1}\int^{\bar{u}_{T_1+j}}_{\bar{u}_{T_1+j+1}}P(r)^{N-1}F(r)^{j+1}dr(T-T_1-j).
\label{lll3}
\end{align}
Finally, the last line is
\begin{align}
    &\mu F(\bar{u}_{T_1+k})^{k}P(\bar{u}_{T_1+k})^{N-1}k+\left(\int^{\bar{u}_{T_1+k}}_{u_t(T_1)}rdP(r)^{N-1}F(r)^{k+1}+P(\bar{u}_{T_1+k})^{N-1} F(\bar{u}_{T_1+k})^{k}\int^1_{\bar{u}_{T_1+k}}rdF(r)\right.\nonumber\\
    &+u_t(T_1)P(u_t(T_1))^{N-1}F(u_t(T_1))^{k+1}\bigg)(T-T_1-k)\nonumber\\
    =&\bar{u}_{T_1+k}F(\bar{u}_{T_1+k})^kP(\bar{u}_{T_1+k})^{N-1}(T-T_1-k+1)+\mu F(\bar{u}_{T_1+k})^kP(\bar{u}_{T_1+k})^{N-1}(k-1)-\int^{\bar{u}_{T_1+k}}_{u_t(T_1)}P(r)^{N-1}F(r)^{k+1}dr(T-T_1-k).\label{lll4}
\end{align}
By summing up these equation of each line of (\ref{part4}), we derive the final expression of the expected long-term reward for agent $n$ if she chooses to explore at time $t$:
\begin{align}
    &\mu+\left(1-\int^1_{u_t(T_1)}F(r)dr\right)(T_1-t)+\bigg(1-\int^1_{\bar{u}_{T_1+1}}P(r)^{N-1}F(r)dr\bigg)(T-T_1)\nonumber\\
    &-\sum^{k-1}_{j=1}\int^{\bar{u}_{T_1+j}}_{\bar{u}_{T_1+j+1}}P(r)^{N-1}F(r)^{j+1}dr(T-T_1-j)-\int^{\bar{u}_{T_1+k}}_{u_t(T_1)}P(r)^{N-1}F(r)^{k+1}dr(T-T_1-k).\label{Final1}
\end{align}

Next, we will determine agent $n$'s expected long-term reward in (\ref{nonmyoobj}) if she opts for exploitation at time $t$, given that $m_n(t) = u_t(T_1)$ and $u_t(T_1)\in[\bar{u}_{T_1+k+1}, \bar{u}_{T_1+k}]$. Similar to (\ref{part4}), we have
\begin{align}
    &\mathbb{E}[\sum^T_{\tau=t}r_{a_n(\tau)}|a_n(t)=\bar{a}_n(t),m_n(t)=u_t(T_1)]\nonumber\\
=&\sum^{T_1}_{\tau=t}\mathbb{E}[r_{a_n(\tau)}|a_n(t)=\bar{a}_n(t),m_n(t)=u_t(T_1)]+\sum^{T}_{\tau=T_1+1}\mathbb{E}[r_{a_n(\tau)}|a_n(t)=\bar{a}_n(t),m_n(t)=u_t(T_1)]\nonumber\\
    =&u_t(T_1)(T_1-t+1)+\mathbb{P}(E_{T_1+1})\mathbb{E}[\max\{u_t(T_1), \max_{i \neq n} m_i(T_1),\max_{\tau=1,\ldots,j}R^n_{T_1+\tau}\}|E_{T_1+1}](T-T_1)\nonumber\\
    &+\sum^{k-1}_{j=1}\mathbb{P}(E_{T_1+j+1})\bigg(\mu j+\mathbb{E}[\max\{u_t(T_1), \max_{i \neq n} m_i(T_1),\max_{\tau=1,\ldots,j}R^n_{T_1+\tau}\}|E_{T_1+j+1}](T-T_1-j)\bigg)\nonumber\\
    &+\mathbb{P}(E_{T_1+k+1})\bigg(\mu k+\mathbb{E}[\max\{u_t(T_1), \max_{i \neq n} m_i(T_1),\max_{\tau=1,\ldots,k}R^n_{T_1+\tau}\}|E_{T_1+k+1}](T-T_1-k)\bigg).\label{zong2}
\end{align}
Note that, unlike the expression in (\ref{part4}) for the long-term reward of exploration, the expected reward after \( T_1 \) lacks the \( R^n_t \) component because there is no exploration at time \( t \). Then, by following a derivation process similar to that for (\ref{part4}), we obtain equation (\ref{zong2}) as follows:
\begin{align}
    &u_t(T_1)(T_1-t+1)+\bigg(1-\int^1_{\bar{u}_{T_1+1}}P(r)^{N-1}dr\bigg)(T-T_1)-\sum^{k-1}_{j=1}\int^{\bar{u}_{T_1+j}}_{\bar{u}_{T_1+j+1}}P(r)^{N-1}F(r)^{j}dr(T-T_1-j)\nonumber\\
    &-\int^{\bar{u}_{T_1+k}}_{u_t(T_1)}P(r)^{N-1}F(r)^{k}dr(T-T_1-k).\label{Final2}
\end{align}
By equating (\ref{Final1}) with (\ref{Final2}), we derive equation (\ref{Mt<T_1+1}) when $u_t(T_1)<\bar{u}_{T_1+1}$, which complete the prove for (\ref{Mt<T_1+1}) and (\ref{singlethreshold}).

We next prove \(V_{\mathcal{M}_2}\) in (\ref{eqVN}). In addition to the event \(E_t\) defined in Appendix \ref{appendix1}, we further define \(E_{T_1+t}\) for $t=1,\ldots,T-T_1$ as the event where agent \(n\)'s maximum reward \(m_n(T_1+t)\) exceeds the exploration threshold \(\bar{u}_{T_1+t}\) for the first time at time \(T_1+t\), subsequent to time \(T_1\), regardless of whether \(m_n(t)\) had previously surpassed \(u_t\) before time \(T_1\). Given that the sequences $\{u_t(T_1)\}_{t=1}^{T_1}$ and $\{u_t(T_1)\}_{t=T_1+1}^{T}$ decrease separately, the events $E_t$ for $t=1,\ldots,T_1$, $\bar{E}_{T_1}$, $E_{T_1+\tau}$ for $\tau=1,\ldots,T-T_1$, and $\bar{E}_{T}$ together form the complete event space for the total process. Then social welfare (\ref{socialobj}) can be written as\begin{align*}
    \sum^T_{t=1}\sum^N_{n=1}\mathbb{E}[r_{a_n(t)}]=&\sum^N_{n=1}\sum^{T_1}_{t=1}\mathbb{E}[r_{a_n(t)}]+\sum^N_{n=1}\sum^{T}_{t=T_1+1}\mathbb{E}[r_{a_n(t)}]\\
    =&\sum^N_{n=1}\sum^{T_1}_{\tau=1}\mathbb{P}(E_{\tau})\sum^{T_1}_{t=1}\mathbb{E}[r_{a_n(t)}|E_{\tau}]+\sum^N_{n=1}\mathbb{P}(\bar{E}_{T_1})\sum^{T_1}_{t=1}\mathbb{E}[r_{a_n(t)}|\bar{E}_{T_1}]+\sum^N_{n=1}\sum^{T-T_1}_{\tau=1}\mathbb{P}(E_{T_1+\tau})\sum^{T}_{t=T_1+1}\mathbb{E}[r_{a_n(t)}|E_{T_1+\tau}]\\
    &+\sum^N_{n=1}\mathbb{P}(\bar{E}_{T})\sum^{T}_{t=T_1+1}\mathbb{E}[r_{a_n(t)}|\bar{E}_{T}].\end{align*}
Conditioned on the event $E_{\tau}$, each agent will explore for $\tau$ time slots and then exploit for the remaining $(T_1-\tau+1)$ time slots. Similarly, conditioned on the event $E_{T_1+\tau}$, each agent will explore for $(\tau-1)$ time slots and exploit for the subsequent $(T-T_1-\tau+1)$ time slots. Since the process is the same for all agents, we omit the superscript \(n\) in \(R^n_i\) in the following. Then 
\begin{align*}
    =&N\sum^{T_1}_{\tau=1}\mathbb{P}(E_{\tau})\left(\mu \tau+\mathbb{E}[\max_{i \in\{0,\ldots,i-1\}} R_{i}|E_{\tau}](T_1-\tau+1)\right)+N\mathbb{P}(\bar{E}_{T_1})\mu(T_1+1)+N\mathbb{P}(E_{T_1+1})\mathbb{E}[\max_{n\in\mathcal{N}}\{m_n(T_1)\}|E_{T_1+1}](T-T_1)\\
    &+N\sum^{T-T_1-1}_{\tau=1}\mathbb{P}(E_{T_1+\tau+1})\left(\mu \tau+\mathbb{E}[\max_{n\in\mathcal{N},\tau\in\{1,\ldots,\tau\}}\{m_n(T_1),R_{T_1+\tau}\}|E_{T_1+\tau+1}](T-T_1-\tau)\right)+N\mathbb{P}(\bar{E}_{T})\mu(T-T_1)\\
    =&N\sum^{T_1}_{\tau=1}\left(\left(F(u_{\tau-1}(T_1))^{\tau-1}-F(u_\tau(T_1))^{\tau}\right)\mu \tau+\left(\int^{u_{\tau-1}(T_1)}_{u_\tau(T_1)}rdF(r)^{\tau}+F(u_{\tau-1}(T_1))^{\tau-1}\int^1_{u_{\tau-1}(T_1)}rdF(r)\right)(T_1-\tau+1)\right)\\
    &+NF(u_{T_1}(T_1))^{T_1}\mu(T_1+1)+N\int^1_{\bar{u}_{T_1+1}}rdP(r)^{N}(T-T_1)+N\sum^{T-T_1-1}_{\tau=1}\bigg(\left(P(\bar{u}_{T_1+\tau})^{N}F(\bar{u}_{T_1+\tau})^{\tau-1}-P(\bar{u}_{T_1+\tau+1})^{N}F(\bar{u}_{T_1+\tau+1})^{\tau}\right)\mu \tau\\
    &\left.+\left(\int^{\bar{u}_{T_1+\tau}}_{\bar{u}_{T_1+\tau+1}}rdP(r)^NF(r)^{\tau}+P(\bar{u}_{T_1+\tau})^NF(\bar{u}_{T_1+\tau})^{\tau-1}\int^1_{\bar{u}_{T_1+\tau}}rdF(r)\right)(T-T_1-\tau)\right)+NP(\bar{u}_T)^{N}F(r)^{T-T_1-1}\mu(T-T_1),\\
\end{align*}
where we set $u_0(T_1) = 1$ for simplification, which does not affect the agents' actions at time $0$ since agents can only choose to explore. Then
\begin{align*}
    =&N\sum^{T_1}_{\tau=0}F(u_{\tau}(T_1))^{\tau}\mu+N\sum^{T_1}_{\tau=1}\left(\int^{u_{\tau-1}(T_1)}_{u_{\tau}(T_1)}rdF(r)^{\tau}+F(u_{\tau-1}(T_1))^{\tau-1}\int^1_{u_{\tau-1}(T_1)}rdF(r)\right)(T_1-\tau+1)\\
    &+N\int^1_{\bar{u}_{T_1+1}}rdP(r)^{N}(T-T_1)+N\sum^{T-T_1-1}_{\tau=1}\bigg(\left(P(\bar{u}_{T_1+\tau})^{N}F(\bar{u}_{T_1+\tau})^{\tau-1}-P(\bar{u}_{T_1+\tau+1})^{N}F(\bar{u}_{T_1+\tau+1})^{\tau}\right)\mu \tau\\
    &+\bigg(P(\bar{u}_{T_1+\tau})^NF(\bar{u}_{T_1+\tau})^{\tau-1}\left(1-\int^1_{\bar{u}_{T_1+\tau}}F(r)dr\right)-\bar{u}_{T_1+\tau+1}P(\bar{u}_{T_1+\tau+1})^NF(\bar{u}_{T_1+\tau+1})^{\tau}\\
    &-\int^{\bar{u}_{T_1+\tau}}_{\bar{u}_{T_1+\tau+1}}P(r)^NF(r)^tdr\bigg)(T-T_1-\tau)\bigg)+NP(\bar{u}_T)^{N}F(\bar{u}_T)^{T-T_1-1}\mu(T-T_1).\\
    \end{align*}
    Replace $\left(1-\int^1_{\bar{u}_{T_1+\tau}}F(r)dr\right)(T-T_1-\tau)$ in the last sentence by the part in equation (\ref{singlethreshold}) as
    \begin{align*}
        \left(1-\int^1_{\bar{u}_{T_1+\tau}}F(r)dr\right)(T-T_1-\tau)=\bar{u}_{T_1+\tau}(T-T_1-\tau+1)-\mu,
    \end{align*}
then we have
    \begin{align*}
    =&N\sum^{T_1}_{\tau=0}F(u_{\tau}(T_1))^{\tau}\mu+N\sum^{T_1}_{\tau=1}\left(\int^{u_{\tau-1}(T_1)}_{u_{\tau}(T_1)}rdF(r)^{\tau}+F(u_{\tau-1}(T_1))^{\tau-1}\int^1_{u_{\tau-1}(T_1)}rdF(r)\right)(T_1-\tau+1)\\
    &+N\int^1_{\bar{u}_{T_1+1}}rdP(r)^{N}(T-T_1)+N\sum^{T-T_1-1}_{\tau=1}\bigg(\left(P(\bar{u}_{T_1+\tau})^{N}F(\bar{u}_{T_1+\tau})^{\tau-1}-P(\bar{u}_{T_1+\tau+1})^{N}F(\bar{u}_{T_1+\tau+1})^{\tau}\right)\mu \tau-\bigg(\bar{u}_{T_1+\tau+1}P(\bar{u}_{T_1+\tau+1})^NF(\bar{u}_{T_1+\tau+1})^\tau\\
    &+\int^{\bar{u}_{T_1+\tau}}_{\bar{u}_{T_1+\tau+1}}P(r)^NF(r)^\tau dr\bigg)(T-T_1-\tau)+P(\bar{u}_{T_1+\tau})^NF(\bar{u}_{T_1+\tau})^{\tau-1}\left(\bar{u}_{T_1+\tau}(T-T_1-\tau+1)-\mu\right)\bigg)+NP(\bar{u}_T)^{N}F(\bar{u}_T)^{T-T_1-1}\mu(T-T_1)\\
        \stackrel{\text{(a)}}{=}&N\sum^{T_1}_{\tau=0}F(u_{\tau}(T_1))^{\tau}\mu+N\sum^{T_1}_{\tau=1}\left(\int^{u_{\tau-1}(T_1)}_{u_{\tau}(T_1)}rdF(r)^{\tau}+F(u_{\tau-1}(T_1))^{\tau-1}\int^1_{u_{\tau-1}(T_1)}rdF(r)\right)(T_1-\tau+1)\\
    &+N\int^1_{u_{T_1+1}}rdP(r)^{N}(T-T_1)+NP(\bar{u}_{T_1+1})^Nu_{T_1+1}(T-T_1)-N\sum^{T-T_1-1}_{\tau=1}\int^{\bar{u}_{T_1+\tau}}_{\bar{u}_{T_1+\tau+1}}P(r)^NF(r)^tdr(T-T_1-\tau)\\
    =&N\sum^{T_1}_{\tau=0}F(u_{\tau}(T_1))^{\tau}\mu+N\sum^{T_1-1}_{\tau=0}\left(\int^{u_{\tau}(T_1)}_{u_{\tau+1}(T_1)}rdF(r)^{\tau+1}+F(u_{\tau}(T_1))^{\tau}\int^1_{u_{\tau}(T_1)}rdF(r)\right)(T_1-\tau)\\
    &+N\left(1-\int^1_{\bar{u}_{T_1+1}}P(r)^{N}dr\right)(T-T_1)-N\sum^{T-T_1-1}_{\tau=1}\int^{\bar{u}_{T_1+\tau}}_{\bar{u}_{T_1+\tau+1}}P(r)^NF(r)^{\tau}dr(T-T_1-\tau),
    \end{align*}
where equation (a) utilizes the fact that \(\bar{u}_T = \mu\). Then we complete the proof of (\ref{eqVN}).

Let $\tau$, in $\{1,\ldots,T+1\}$, denote the random variable representing exploration time by non-myopic agents under communication restriction mechanism $\mathcal{M}_2=[T]\setminus\{T_1\}$ in the total process. Then expectation of $\tau$ is
\begin{align*}
\mathbb{E}[\tau]=&\sum^{T+1}_{t=1}t\mathbb{P}(\tau=t)=\sum^{T+1}_{t=1}\mathbb{P}(\tau\geq t)=1+\sum^{T_1+1}_{t=2}F(u_{t-1})^{t-1}+\sum^{T+1}_{t=T_1+2}P(u_{t-1})^NF(u_{t-1})^{t-T_1-1}\\
=&1+\sum^{T_1}_{t=1}F(u_{t})^{t}+\sum^{T}_{t=T_1+1}P(u_{t})^NF(u_{t})^{t-T_1}
\end{align*}
which completes the proof of (\ref{T1time}).

\section{Proof of Theorem \ref{T2}}\label{Appendix10}
The sequence \(\{u_t(T_1)\}_{t=1}^T\) cannot be expressed in a closed-form solution and must be determined using numerical methods based on equations (\ref{Mt<T_1+1}) and (\ref{singlethreshold}). We rewrite equations (\ref{Mt<T_1+1}) and (\ref{singlethreshold}) as follows, and the solution \(\{u_t(T_1)\}_{t=1}^T\) corresponds to the roots of these functions:
\begin{align*}
&g_t(u_1(T_1),\ldots,u_{T_1}(T_1))=\\
&\left\{\begin{aligned}
    &u_t(T_1)-\mu-(T_1-t)\left(\int^1_{u_t(T_1)} 1-F(r)dr\right)-(T-T_1)\int^1_{u_t(T_1)}P(r)^{N-1}(1-F(r))dr,\text{if }u_t(T_1)>u_{T_1+1}(T_1),\\
&u_t(T_1)-\mu-(T_1-t)\left(\int^1_{u_t(T_1)} 1-F(r)dr\right)-(T-T_1)\int^{1}_{\bar{u}_{T_1+1}}P(r)^{N-1}(1-F(r))dr+\sum^{k-1}_{j=1}(T-T_1-j)\int^{\bar{u}_{T_1+j}}_{\bar{u}_{T_1+j+1}}P(r)^{N-1}F(r)^{j}(1-F(r))dr\nonumber\\
    &-(T-T_1-k)\int^{\bar{u}_{T_1+k}}_{u_t(T_1)}P(r)^{N-1}F(r)^{k}(1-F(r))dr,\;\;\qquad\qquad\qquad\qquad\qquad\qquad\text{if } \bar{u}_{T_1+k}>u_t(T_1)>\bar{u}_{T_1+k+1},t=1,\ldots,T_1,
\end{aligned}
\right.\\
&\bar{g}_t(\bar{u}_t)=\bar{u}_t-\mu-(T-t)\left(\int^1_{\bar{u}_t} 1-F(r)dr\right), t=1,\ldots,T.
\end{align*}

Define $\mathbf{u}(T_1) = (u_1(T_1), \ldots, u_{T_1}(T_1))$ and $\mathbf{g}(\cdot) = (g_1(\cdot), \ldots, g_{T_1}(\cdot))$. Our algorithm first obtains \(\{\bar{u}_t\}_{t=1}^T\) by finding the roots of \(T\) equations \(\bar{g}_t(\cdot)\) using Newton's method, then takes it as the initial solution for the Newton's method for $\{u_t(T_1)\}_{t=1}^{T_1}$. Solving $\{u_t(T_1)\}_{t=1}^{T_1}$ requires the Jacobian matrix $\mathbf{J}(\cdot)=(J_{ij}(\cdot)),i,j\in\{1,\ldots,T_1\}$ of $\mathbf{g}(\cdot)$:
\begin{align*}
    J_{tt}(\mathbf{u}(T_1)) =& \begin{cases}
        1 + (1 - F(u_t(T_1)))\left((T_1 - t) + (T - T_1)P(u_t(T_1))^{N-1}\right),&\text{if } u_t(T_1) > u_{T_1+1}(T_1)\\
        1 + (1 - F(u_t(T_1)))\left((T_1 - t) + (T - T_1 - i)P(u_t(T_1))^{N-1}F(u_t(T_1))^i\right),& \text{if } u_{T_1+i}(T_1) > u_t(T_1) > u_{T_1+i+1}(T_1) \\
   \end{cases}\\
   J_{ti} =&0, t \neq i\label{J},
\end{align*}
which only has non-zero entries on the diagonal. We provide the Newton's method for $\{u_t(T_1)\}_{t=1}^{T_1}$ in steps 4-5 in Alg. \ref{A2} as follows:
\begin{itemize}
    \item[a] initialize solution $\mathbf{u}^{(0)}(T_1)=\{\bar{u}_t\}_{t=1}^{T_1}$, $k$=0;
    \item[b] Solve $\Delta\mathbf{u}^{(k)}(T_1)$ by equation $\mathbf{J}(\mathbf{u}^{(k)}(T_1))\Delta\mathbf{u}^{(k)}(T_1)=-\mathbf{g}(\mathbf{u}^{(k)}(T_1))$;
    \item[c] update the solution as $\mathbf{u}^{(k+1)}(T_1)=\mathbf{u}^{(k)}(T_1)+\Delta\mathbf{u}^{(k)}(T_1)$;
    \item[d] Repeat the iteration, and end the process when $|\mathbf{g}(\mathbf{u}^{(k)}(T_1))|<\epsilon$.
\end{itemize}

The primary computational complexity of Newton's method stems from solving the linear equation for \(\Delta\mathbf{u}^{(k)}(T_1)\) in step b, resulting in a complexity of \(O(T)\) for our problem. Given a specific \(\epsilon\), the complexity of the iteration step remains constant, independent of \(T_1\). Therefore, when scanning through all possible \(T_1\) values to identify the optimal one-time communication mechanism \(\mathcal{M}_2^* = [T] \setminus \{T_1^*\}\), the overall complexity escalates to \(O(T^2)\).

\end{document}